\pdfoutput=1
\documentclass[11pt]{article}
\usepackage{mlmodern}

\usepackage[T1]{fontenc}

\usepackage{fontawesome}

\usepackage[letterpaper,margin=1in]{geometry}

\usepackage[dvipsnames,table,xcdraw]{xcolor}
\usepackage[utf8]{inputenc}

\usepackage[
backend=biber,
style=alphabetic,
maxbibnames=15,
maxalphanames=10,
minalphanames=6,
doi=false,
isbn=false,
url=false,
eprint=false,
backref=true,
]{biblatex}
\addbibresource{ref.bib}
\setcounter{biburllcpenalty}{7000}
\setcounter{biburlucpenalty}{8000}
\DefineBibliographyStrings{english}{
  backrefpage  = {},
  backrefpages = {},
}

\DeclareFieldFormat{bracketswithperiod}{\mkbibbrackets{#1}}

\renewbibmacro*{pageref}{%
  \iflistundef{pageref}
    {}
    {\printtext[bracketswithperiod]{%
       \ifnumgreater{\value{pageref}}{1}
         {\bibstring{backrefpages}}
         {\bibstring{backrefpage}}%
       \printlist[pageref][-\value{listtotal}]{pageref}}%
     }
 }
     
\renewbibmacro{in:}{}
\AtEveryBibitem{\clearfield{pages}}

\definecolor{lb}{RGB}{0, 100, 200}
\definecolor{green2}{RGB}{60, 120, 0}

\usepackage[colorlinks,citecolor=green2,linkcolor=lb,bookmarks=true]{hyperref}

\usepackage[T1]{fontenc}    %
\usepackage{url}      %
\usepackage{booktabs}     %
\usepackage{nicefrac}     %

\usepackage{amsthm,amsfonts,amsmath,amssymb,epsfig,color,float,graphicx,verbatim,
enumitem}

\usepackage{algorithm,algorithmicx}
\usepackage[noend]{algpseudocode}

\usepackage{bbm}
\usepackage{caption}

\newlist{itemizec}{itemize}{2}
\setlist[itemizec,1]{label=\faCaretRight ,wide, parsep= 0.05pt, left = 15pt}

\usepackage[nameinlink,capitalise]{cleveref}

\usepackage{thm-restate}

\usepackage{thm-restate}
\usepackage{enumitem}
\usepackage{algorithm,algorithmicx}
\usepackage[noend]{algpseudocode}

\def\E{\mathbb E}

\def\R{\mathbb R}

\def\N{\mathbb N}

\def\spike{\eta}
\newcommand{\eps}{\epsilon}

\newcommand\numberthis{\addtocounter{equation}{1}\tag{\theequation}}

\let\vec\mathbf

\newcommand{\bA}{\vec{A}}
\newcommand{\bB}{\vec{B}}

\newcommand{\bI}{\vec{I}}

\newcommand{\bSigma}{\vec{\Sigma}}

\newcommand{\bY}{\vec{Y}}

\newcommand{\poly}{\mathrm{poly}}
\newcommand{\polylog}{\mathrm{polylog}}
\newcommand{\cN}{\mathcal{N}}

\newcommand{\cT}{\mathcal{T}}

\newcommand{\cP}{\mathcal{P}}

\newcommand{\cS}{\mathcal{S}}
\newcommand{\cA}{\mathcal{A}}

\newcommand{\cM}{\mathcal{M}}

\newcommand{\cB}{\mathcal{B}}

\newcommand{\Hsparse}[1]{\cB_{\text{$#1$-sparse}}}

\newcommand{\offdiag}{{\mathrm{offdiag}}}
\newcommand{\diag}{{\mathrm{diag}}}
\newcommand{\corr}{{\mathrm{corr}}}

\newcommand{\trace}{\operatorname{tr}}

\newcommand{\Var}{\operatorname{Var}}

\newcommand{\Ber}{\mathrm{Ber}}
\newcommand{\op}{\mathrm{op}}
\newcommand{\fr}{\mathrm{Fr}}

\newcommand{\proj}{\operatorname{Proj}}

\crefformat{equation}{(#2#1#3)}
\crefname{equation}{Equation}{Equations}
\crefname{lemma}{Lemma}{Lemmata}
\crefname{claim}{Claim}{Claims}
\crefname{fact}{Fact}{Facts}
\crefname{theorem}{Theorem}{Theorems}
\crefname{proposition}{Proposition}{Propositions}
\crefname{corollary}{Corollary}{Corollaries}
\crefname{remark}{Remark}{Remarks}
\crefname{definition}{Definition}{Definitions}
\crefname{question}{Question}{Questions}
\crefname{condition}{Condition}{Conditions}
\crefname{figure}{Figure}{Figures}

\newtheorem{theorem}{Theorem}[section]
\newtheorem{lemma}[theorem]{Lemma}

\newtheorem{claim}[theorem]{Claim}
\newtheorem{proposition}[theorem]{Proposition}

\theoremstyle{definition}

\newtheorem{definition}[theorem]{Definition}

\newtheorem{question}[theorem]{Question}
\newtheorem{problem}[theorem]{Problem}

\newtheorem{condition}[theorem]{Condition}

\allowdisplaybreaks

\theoremstyle{definition}
\newcommand{\sign}{\text{sgn}}

\usepackage{color}
\definecolor{Red}{rgb}{1,0,0}
\definecolor{Blue}{rgb}{0,0,1}
\definecolor{DGreen}{rgb}{0,0.55,0}
\definecolor{Purple}{rgb}{.75,0,.25}
\definecolor{Grey}{rgb}{.5,.5,.5}

\title{A Sub-Quadratic Time Algorithm for \\Robust Sparse Mean Estimation}

\author{
Ankit Pensia
\\
IBM Research \\
{\tt ankitp@ibm.com}\\
}

\begin{document}
\maketitle

\begin{abstract}
We study the algorithmic problem of sparse mean estimation in the presence of adversarial outliers. Specifically, the algorithm observes a \emph{corrupted} set of samples from $\mathcal{N}(\mu,\mathbf{I}_d)$,  where the unknown mean $\mu \in \mathbb{R}^d$ is constrained to be $k$-sparse.
A series of prior works has developed efficient algorithms for robust sparse mean estimation with sample complexity $\mathrm{poly}(k,\log d, 1/\epsilon)$ and runtime $d^2 \mathrm{poly}(k,\log d,1/\epsilon)$, where $\epsilon$ is the fraction of contamination.
In particular, the fastest runtime of existing algorithms is quadratic ($\Omega(d^2)$), which can be prohibitive in high dimensions.
This quadratic barrier in the runtime stems from the reliance of these algorithms on the sample covariance matrix, which is of size $d^2$.
Our main contribution is an algorithm for robust sparse mean estimation which runs in \emph{subquadratic}
time using $\mathrm{poly}(k,\log d,1/\epsilon)$ samples.
We also provide analogous results for robust sparse PCA. 
Our results build on algorithmic advances in detecting weak correlations, a generalized version of the light-bulb problem by Valiant~\cite{Valiant15}.    
\end{abstract}
\section{Introduction} %
\label{sec:introduction}

Mean estimation, a
fundamental unsupervised inference task studied in literature, may be described as follows: Given a family of distributions $\cP$ over $\R^d$,
the algorithm observes a set of i.i.d.\ points from an unknown $P \in \cP$, with the goal of outputting $\widehat{\mu}$ such that, with high probability, $\|\widehat{\mu} - \mu\|_2$ is small.
Although this framework is well-studied in the literature,
the data observed in practice may 
deviate from
the i.i.d.\ assumption and additionally may contain outliers.
Crucially, these outliers can easily break standard off-the-shelf estimators, for example, sample mean, geometric median, and coordinate-wise median.
To address this challenge, the field of robust statistics was initiated in the 1960s, aiming to develop algorithms that are robust to outliers~\cite{Hub64,AndBHHRT72,HubRon09}.
Before proceeding further, we formally define the contamination model we study in this paper.
\begin{definition}[Strong Contamination Model]
\label{def:contamination}
Given a \emph{corruption} parameter $\eps \in (0,1/2)$ 
and a distribution $P$ on uncorrupted samples, 
an algorithm obtains samples from $P$ with \emph{$\eps$-contamination} 
as follows: 
(i) The algorithm specifies the number $n$ of samples it requires. 
(ii) A set $S$ of $n$ i.i.d.\ samples from $P$ is drawn but not yet shown to the algorithm. 
(iii) An arbitrarily powerful adversary then inspects $S$, 
before deciding to replace any subset of $\lceil \eps n \rceil$ 
samples with arbitrarily corrupted points (``outliers'') to obtain the contaminated set $T$, which is then returned to the algorithm.
We say  $T$ is an $\epsilon$-corrupted version of $S$ and a set of
$\epsilon$-corrupted samples from $P$.
\end{definition}

Our focus will be on high-dimensional distributions, i.e., when the distribution $P$ is over $\R^d$ for large $d$. Dealing with outliers becomes harder in high dimensions  
because \textit{classical} outlier screening procedures (which otherwise work well in low dimensions) rely on the norm of the data points and are too coarse to distinguish outliers from inliers. 
Nevertheless, a long line of research, spurred by advances in ~\cite{DiaKKLMS16-focs,LaiRV16}, has developed a systematic theory of handling outliers in high-dimensional 
robust statistics~\cite{DiaKan22-book}. Notwithstanding this progress, major gaps persist in our \emph{fine-grained} understanding of fast robust algorithms for data with additional structure. 

Structured high-dimensional data distributions are ubiquitous in practice, e.g., 
natural images and sounds. Moreover, leveraging these underlying structures often dramatically improves algorithmic performance, e.g., in terms of error. 
A well-studied structure both in the theory and practice of high-dimensional statistics is \emph{sparsity},
see, for example, the textbooks \cite{EldKut12, HasTW15,vandeGeer16}. 
Consequently, we concentrate our efforts on \emph{structured} mean estimation, where we assume that the underlying mean is sparse,
i.e., an overwhelming majority of its coordinates 
are zero. 

In light of the challenges posed by outliers above and the prevalence and importance of sparsity, we study the problem of \emph{robust} sparse mean estimation.
We say a vector $x \in \R^d$ is $k$-sparse if $x$ has at most $k$ non-zero entries.
Our focus is on the practically relevant regime where $k$ is much smaller than $d$, say, poly-logarithmic in $d$. We formally define robust sparse mean estimation below.
\begin{problem}[Gaussian Robust Sparse Mean Estimation]
\label{prob:gaussian-rme}
Let $\epsilon_0 \in (0,1/2)$ be a sufficiently small constant.
Given $\epsilon \in (0, \epsilon_0)$, sparsity $k \in \N$, and a set of $\eps$-corrupted set of samples from $\cN(\mu,\bI_d)$ 
with
an unknown $k$-sparse mean $\mu \in \R^d$, the goal is to output an estimate $\widehat{\mu} \in \R^d$ such that $\|\widehat{\mu} - \mu\|_2$ is small with high probability.    
\end{problem}

Robust sparse mean estimation algorithms, efficient both in runtime and samples, were first developed in \cite{BalDLS17}, with sample complexity $n=\poly(k,\log d,1/\eps)$, runtime $\poly(d,n,1/\eps)$, and near-optimal error $\|\widehat{\mu} - \mu\|_2 = \tilde{O}(\eps)$.\footnote{In the  presence of outliers, vanishing error is usually not possible. In our setting, this is because it is impossible to distinguish two isotropic Gaussian distributions that are $\Omega(\eps)$-far apart in the presence of $\eps$-fraction of contamination.
}
In particular, the sample complexity is only poly-logarithmic in the ambient dimension $d$, thereby permitting statistical inference with far fewer samples than the $\Omega(d)$ samples  required by unstructured mean estimation.
Therefore,  for our algorithm for \cref{prob:gaussian-rme}, we set as the first requirement this sample complexity of $\poly(k, \log d, 1/\epsilon)$.

The focus of this work is to develop \emph{fast} robust sparse mean estimation with the aforementioned sample complexity.
Although the runtime of  \cite{BalDLS17} is polynomial in dimension, their algorithm uses the ellipsoid algorithm (which in turn solves a semidefinite program) and hence is not practical in high dimensions.
\cite{DiaKKPS19} then developed a \emph{spectral} algorithm with similar error guarantees and sample complexity and an improved runtime of $d^2 \poly(k,\log d,1/\eps)$.
Subsequent papers have proposed many algorithmic improvements and generalizations to a wider class of distributions~\cite{ZhuJS20,CheDKGGS21,DiaKKPP22-colt,DiaKLP22}; see \Cref{sec:related_work}.

\looseness=-1Despite this algorithmic progress, the fastest currently  known algorithm for \Cref{prob:gaussian-rme} is that of  \cite{DiaKKPS19} with runtime scaling as $d^2$.
This quadratic runtime of the algorithm can be prohibitive in high dimensions---the very  setting {that benefits most from sparsity} 
(because of sample-efficiency).
This quadratic dimension dependence is in stark contrast to the \emph{non-robust} setting (i.e., the outlier-free regime), where there exists a simple (folklore) algorithm\footnote{The algorithm computes the sample mean and thresholds  entries to ensure sparsity, hence failing
if there is even a single outlier.
Moreover, natural attempts to make this algorithm robust, such as coordinate-wise median, incur an highly suboptimal error of $\Omega(\eps \sqrt{k})$.
} with nearly-linear runtime, which is also minimax optimal.
This motivates 
the following fundamental question, highlighted in \cite{Diakonikolas19-stoctalk,Cheng21-ideal, Diakonikolas23-youtube}:
\begin{question}
\label{ques:linear}
\textit{Are there any nearly-linear time algorithms for robust sparse mean estimation?}
\end{question}
If we momentarily forgo sparsity (and the benefits that come along with it, e.g., the reduced sample complexity and interpretability) and focus on robust \emph{dense} estimation, then positive answers are known to \Cref{ques:linear},
see, e.g., \cite{CheDG19,DonHL19, DiaKKLT22-cluster, DiaKPP22-streaming}.
However, 
the sample complexities of the algorithms in these papers scale \textit{linearly} with dimension,\footnote{In fact, the overall runtime of these algorithms scales as $\tilde{\Theta}(nd) = \tilde{\Theta}(d^2/\eps^2)$, which is again quadratic in dimension.} 
which considerably exceeds our allowed budget of 
$\poly(k, \log d, 1/\eps)$ samples. 

\looseness=-1
In fact, as alluded to earlier, existing attempts at answering \cref{ques:linear} 
do not even break the \textit{quadratic} runtime barrier. This is due to natural technical obstacles within current algorithms:    
to robustly estimate the mean, 
they crucially
rely on the sample covariance matrix to detect outliers; 
but merely computing the sample covariance matrix costs $\Omega(d^2)$ time! 
Sparsity also precludes common tricks such as the power iteration to bypass explicitly writing the covariance matrix. 
Indeed, in certain parameter regimes,  even detecting atypical values of the covariance matrix from samples is conjectured to require $\Omega(d^2)$ time~\cite{DagSha18}. 
This begs the question whether this quadratic gap is inherent:
\begin{question}
\label{ques:subquadratic}
\textit{Is there an algorithm for robust sparse mean estimation that runs in $d^{2 - \Omega(1)}$ time and uses $\poly(k,\log d,1/\eps)$ samples?}
\end{question}

The main result of our work is an affirmative answer to \Cref{ques:subquadratic}. We hope our answer paves the path for progress towards answering \cref{ques:linear}, which was highlighted as an important open problem in \cite{Diakonikolas19-stoctalk,Cheng21-ideal, Diakonikolas23-youtube}. Our algorithm builds on advances in fast correlation detection algorithms by Valiant~\cite{Valiant15}.

\subsection{Our Results}

We establish the following result:
\begin{restatable}[Robust Sparse Mean Estimation in Subquadratic Time]{theorem}{ThmMainResult}
\label{thm:informal-main-result}
Let the contamination rate be $\epsilon \in (0,\epsilon_0)$ for a small constant $\epsilon_0 \in (0,1/2)$ and $k \in \N$ be the sparsity.
Let $T$ be an $\epsilon$-corrupted set of $n$ samples from $\cN(\mu,\bI_d)$ for an unknown $k$-sparse $\mu \in \R^d$.
Then there is a randomized algorithm $\cA$ that takes as input the corrupted set $T$, contamination rate $\epsilon$, sparsity $k \in \N$, 
and a parameter $q \in \N$, and produces an estimate $\widehat{\mu}$ such that
\begin{itemizec}
	\item (Sample Complexity and Error)  If $n \gtrsim (k^{2q} \log d)/\epsilon^{2q}$, then $\|\widehat{\mu} - \mu\|_{2} \lesssim \epsilon \sqrt{\log(1/\epsilon)}$ with probability at least $0.9$ over the randomness of the samples and the algorithm.\footnote{The success probability can be boosted to $1-\delta$ with a multiplicative increase of $\log(1/\delta)$ in the sample complexity and the runtime by repeating the procedure.}
	\item (Runtime) The algorithm runs in time at most
$
d^{1.62 + \frac{3}{q}} \poly(\log(d),k^q,1/\eps^q) \,.
$

\end{itemizec}
\end{restatable}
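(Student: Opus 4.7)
The plan is to combine the standard iterative-filter paradigm for robust sparse mean estimation with a subquadratic-time \emph{pair}-finder based on Valiant's algorithm for the generalized light-bulb problem. The bottleneck in existing algorithms is the computation and spectral analysis of the sample covariance matrix $\bSigma$, which alone takes $\Omega(d^2)$ time; I would replace this with a direct search for a single pair of coordinates that witnesses an anomaly, which suffices to drive the filter.

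\textbf{Step 1 (Structural certificate).} Following the framework of \cite{BalDLS17,DiaKKPS19}, a suitable reweighting $w$ of the corrupted set $T$ satisfies the following dichotomy: either the weighted mean $\widehat{\mu}_w$ is already $O(\eps\sqrt{\log(1/\eps)})$-close to $\mu$ on every $k$-sparse direction, or some $k$-sparse unit vector $v$ certifies $v^\top(\bSigma_w - \bI)v \gtrsim \eps\log(1/\eps)$. Restricting to $\mathrm{supp}(v)$ yields a $k\times k$ symmetric matrix $M = (\bSigma_w - \bI)_{\mathrm{supp}(v)\times \mathrm{supp}(v)}$ with $\|M\|_{\mathrm{op}}$ large; a Frobenius-vs-operator norm comparison then produces a single entry $(i,j)$ with $|(\bSigma_w - \bI)_{ij}| \gtrsim \eps/k$.

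\textbf{Step 2 (Reduction to correlation detection with tensor amplification).} It thus suffices per iteration to find such a pair $(i,j)$. Viewing coordinate $i$ as the unit vector $u_i \in \R^n$ obtained by recentering and normalizing the $i$-th column of $T$, one has $|(\bSigma_w - \bI)_{ij}| \approx |\langle u_i, u_j\rangle|$. Valiant's algorithm finds constant-correlation pairs in time $d^{(5-\omega)/(4-\omega)} \approx d^{1.62}$; to detect correlation $\rho \asymp \eps/k$, I take $q$-th tensor powers, which map $\langle u_i, u_j\rangle \mapsto \langle u_i, u_j\rangle^q$ and boost the signal into Valiant's regime. Using implicit (hashed or subsampled) representations of $u_i^{\otimes q}$ to avoid the naive $n^q$ blowup, applying Valiant then gives total runtime $d^{1.62 + 3/q}\poly(\log d, k^q, 1/\eps^q)$, while concentration of the $q$-th order empirical correlations requires $n \gtrsim k^{2q}\log(d)/\eps^{2q}$, matching the theorem.

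\textbf{Step 3 (Filter and iterate).} Once an anomalous pair is returned, I would restrict to the accumulated set of at most $O(k)$ suspect coordinates, compute the local covariance explicitly in $O(k^2n)$ time, and run a standard filter step along its top eigenvector to downweight samples contributing most to the anomaly. The usual filter potential argument yields termination after $\poly(n)$ iterations with error $O(\eps\sqrt{\log(1/\eps)})$; a final hard-thresholding to the top-$k$ coordinates preserves sparsity and the error bound.

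\textbf{Main obstacle.} The crux will be Step 2: marrying the adversarial robust-statistics guarantees with the randomized, matrix-multiplication-based core of Valiant's algorithm. One must verify that (i) against \emph{adversarial} (not planted-random) anomalies the detection still succeeds with high probability; (ii) the implicit representation of $u_i^{\otimes q}$ is compatible with the fast matrix-multiplication step so that the $d^{3/q}$ overhead is really achievable; and (iii) the per-iteration failure probability can be union-bounded across $\poly(n)$ filter iterations without inflating the sample complexity beyond the stated $k^{2q}\log(d)/\eps^{2q}$. A secondary subtlety is handling anomalies concentrated on the diagonal (variance rather than covariance), which likely requires an auxiliary subroutine that also flags single coordinates with anomalous second moment before the pair-finder is invoked.
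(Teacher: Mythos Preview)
Your Step~2 contains a genuine error. Tensoring $u_i \mapsto u_i^{\otimes q}$ sends $\langle u_i,u_j\rangle$ to $\langle u_i,u_j\rangle^q$, which for $|\langle u_i,u_j\rangle|<1$ makes the correlation \emph{smaller}, not larger; it does not ``boost the signal into Valiant's regime.'' Moreover, the ratio $\log(1/\rho^q)/\log(1/\tau^q)=\log(1/\rho)/\log(1/\tau)$ is unchanged, so even if you could implement the tensored vectors, Valiant's runtime exponent would not improve. The $d^{3/q}$ overhead in the theorem does not arise from any data-level amplification.

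What the paper actually does is call Valiant's algorithm directly with the two-threshold interface: strong threshold $\rho\asymp(\delta^2/\eps)/k$ and margin threshold $\tau=(\rho/12)^q$. The runtime exponent of Valiant is $1.62+O\bigl(\log(1/\rho)/\log(1/\tau)\bigr)=1.62+O(1/q)$, which is where the $3/q$ comes from. The catch---and the main idea you are missing---is that Valiant's guarantee requires at most $s$ coordinate pairs to be $\tau$-correlated, and an adversary placing dense outliers can create $\Omega(d^2)$ such pairs. The paper's fix is a win-win: first sample $m\approx d\cdot\poly(k^q/\eps^q)$ random coordinate pairs and count how many are $\tau$-correlated. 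If few are found, then with high probability the total count is $O(d)$ and Valiant runs in the stated time. If many are found, those very pairs already give a set $H$ of size $\kappa^2/\tau^2$ on which $\|(\bSigma_T-\bI)_H\|_\fr\gtrsim\delta^2/\eps$, and one filters using that $H$ directly (\Cref{lem:sparse-filtering-lemma}). This random-subsampling detector is the crux, and nothing in your proposal substitutes for it.

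Two smaller points: the filter does not work on ``at most $O(k)$ suspect coordinates'' but on sets of size up to $k'\asymp k^q/(\delta^2/\eps)^{q-1}$; it is precisely the requirement of $(\eps,\delta,k')$-stability at this inflated sparsity that forces $n\gtrsim (k')^2\log(d)/\eps^2\asymp k^{2q}\log(d)/\eps^{2q}$, not ``concentration of the $q$-th order empirical correlations.'' And your ``secondary subtlety'' about diagonal anomalies is real and is handled in the paper by a cheap $O(nd)$ preprocessing pass (\Cref{cond:idealistic-condition}, \Cref{claim:preprocessing}) before any correlation detection is invoked.
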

Several remarks are in order. The error guarantee of \Cref{thm:informal-main-result}, $O(\eps \sqrt{\log(1/\eps)})$, is nearly optimal even given infinite data and runtime.\footnote{The information-theoretic optimal error is $\Theta(\eps)$. Moreover, it is computationally hard to beat $\Theta\left(\eps \sqrt{\log(1/\eps)}\right)$ in the statistical query lower model~\cite{DiaKS17} and the low-degree polynomial tests~\cite{BreBHLS21} under \Cref{def:contamination}.} 
The main contribution of \Cref{thm:informal-main-result} is
the first algorithm for robust sparse mean estimation with runtime $d^{2 - \Omega(1)} \poly(k/\epsilon)$ and sample complexity $\poly(k,\log d, 1/\epsilon)$ (by selecting $q \in \N$ to be a constant bigger than $9$), thereby affirmatively answering \Cref{ques:subquadratic}.
As $q$ increases, the dependence of the runtime on the dimension
 approaches $d^{1.62}$. 
In particular, for a constant contamination rate $\epsilon$, we may set $q$ as large as $\gamma\left(\frac{\log d}{\log k}\right)$ for a  small $\gamma \in (0,1)$, and the algorithm retains sublinear (in $d$) sample complexity $d^{O(\gamma)}$ and subquadratic runtime $d^{1.62 + O(\gamma)}k^{O(1/\gamma)}$.
Finally, the sample complexity of \Cref{thm:informal-main-result} is (polynomially) larger than existing works; see \Cref{sec:discussion} for further remarks.

We next focus on robust version of  sparse principal component analysis (PCA).
Sparse PCA is a fundamental estimation task in high-dimensional statistics~\cite{dAspGJL07,HasTW15}, in which the algorithm observes samples from $\cN(0,\bI + \spike vv^\top)$ for a $k$-sparse unit vector $v\in \R^d$.
Similarly to sparse mean estimation, 
the standard algorithms for sparse PCA are not robust to outliers, and hence we study robust sparse PCA.
Again, existing robust sparse PCA algorithms from \cite{BalDLS17,DiaKKPS19,CheDKGGS21} require at least $\Omega(d^2)$ time.
In contrast, we establish the following result:

\begin{restatable}[Robust Sparse PCA in Subquadratic Time]{theorem}{ThmInformalRobustSparsePCA}
\label{thm:informal-robust-sparse-pca}
Let $T$ be an $\epsilon$-corrupted set of samples from $\cN(0, \bI_d + \spike vv^\top)$ for $\spike\in(0,1)$ and a $k$-sparse unit vector.
There is a randomized algorithm that takes as input the corrupted set $T$, contamination rate $\epsilon$, sparsity $k \in \N$, spike $\eta$, and a parameter $q \in \N$, and produces an estimate $\widehat{v}$ such that
\begin{itemizec}
    \item (Sample Complexity and Error) If $n \gtrsim \poly((k^q\log d)/ \epsilon^q)$, then $\|\widehat{v}\widehat{v}^\top - vv^\top\|_{\fr} \lesssim \sqrt{\epsilon \log(1/\epsilon)/\spike}$ with probability at least $1-\frac{1}{\poly(d)}$.
    \item (Runtime) The algorithm runs in time at most $d^{1.62 + \frac{3}{q}} \poly(n)$. 
\end{itemizec}
\end{restatable}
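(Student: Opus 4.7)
The plan is to reduce robust sparse PCA to the robust sparse mean estimation problem solved by \Cref{thm:informal-main-result} via a second-moment lift, and then exploit the Kronecker structure of the lifted data so that the runtime remains $d^{1.62 + 3/q}$ rather than $(d^2)^{1.62 + 3/q}$. The reduction itself is standard: for each sample $X^{(i)}$, I form the second-moment statistic $Y^{(i)} := \text{vec}(X^{(i)}(X^{(i)})^\top - \mathbf{I}_d) \in \R^{d^2}$, whose clean expectation is the $k^2$-sparse, rank-one, Kronecker-structured vector $\eta\,\text{vec}(vv^\top)$ (supported on $\text{supp}(v) \times \text{supp}(v)$). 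The $\epsilon$-contamination on $X$ passes directly to $\epsilon$-contamination on $Y$, and although $Y^{(i)}$ is only sub-exponential rather than sub-Gaussian, standard truncation arguments absorb the heavier tails into $\polylog(d, 1/\eta)$ factors in the sample complexity, which fits inside the $\poly(n)$ slack of the runtime bound.

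Given the reduction, I would run an adaptation of the algorithm underlying \Cref{thm:informal-main-result} to recover a matrix $\widehat M \in \R^{d \times d}$ with $\|\widehat M - \eta vv^\top\|_{\fr}$ small. A direct black-box application in ambient dimension $d^2$ with sparsity $k^2$ would cost $d^{3.24 + 6/q}$ time, which is too slow. The key observation is that the runtime bottleneck inside \Cref{thm:informal-main-result} is a Valiant-style fast correlation-detection primitive used to identify coordinates of large empirical covariance; when this primitive is applied to the $Y^{(i)}$'s it is effectively searching for large fourth-order moments of the original $X^{(i)}$'s, and these factor through pairwise correlations of the atomic coordinates via $Y^{(i)} = X^{(i)} \otimes X^{(i)} - \text{vec}(\mathbf{I}_d)$. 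By re-implementing Valiant's primitive on these atomic $d$-dimensional correlations instead of naively on the $d^2$-dimensional lifts, I expect the total runtime to remain $d^{1.62 + 3/q}\,\poly(n)$. Finally, I would set $\widehat v$ to be (an approximate) top eigenvector of $\widehat M$ and apply a Davis--Kahan / Wedin argument, using the spectral gap $\eta$ of the rank-one target, to translate the Frobenius matrix error into the claimed $\|\widehat v \widehat v^\top - vv^\top\|_{\fr} \lesssim \sqrt{\epsilon \log(1/\epsilon)/\eta}$ bound.

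The main obstacle is the second step: a black-box reduction is inadequate, and one must open up the internals of \Cref{thm:informal-main-result}---specifically the Valiant-style correlation-detection subroutine---and argue that, when its inputs have the Kronecker-square structure inherited from $Y^{(i)} = X^{(i)} \otimes X^{(i)} - \text{vec}(\mathbf{I}_d)$, all pair searches over $[d^2] \times [d^2]$ decompose into pair searches over $[d] \times [d]$. Carrying out this factorization, together with bookkeeping the sub-exponential concentration of the $Y^{(i)}$'s and the contamination transfer from $X$ to $Y$, is where the bulk of the technical effort lies; by contrast, the Davis--Kahan conversion at the end and the sub-Gaussian-to-sub-exponential adjustments at the start are standard.
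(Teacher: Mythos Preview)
Your approach differs substantially from the paper's, and it has a gap beyond the runtime issue you flag.

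The paper does \emph{not} lift to $\R^{d^2}$ and then try to salvage the runtime via Kronecker structure. Instead it works directly in $d$ dimensions and applies the same correlation-detection subroutine (\Cref{prop:win-win-guarantee}) to the \emph{second-moment matrix} $\overline{\bSigma}_T = \frac{1}{n}\sum_i X^{(i)}(X^{(i)})^\top \in \R^{d\times d}$, whose population value is already $\bI_d + \spike vv^\top$. Large off-diagonal entries of $\overline{\bSigma}_T - \bI_d$ either lie inside the $k^2$-sized support of $vv^\top$, in which case a small coordinate set $H$ containing most of $v$'s mass is identified, or they are caused by outliers, in which case a PCA-specific filter (\Cref{lem:sparse-pca-filter}) removes them. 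A tailored certificate lemma (\Cref{lem:sparse-certificate-pca}) and dense covariance estimation on $H$ (\Cref{lem:dense-pca-filter}) then recover $v$. No fourth-order statistics are ever needed: Valiant's primitive runs on the rows of the original $d\times n$ data matrix exactly as in the mean case, and the extra work is in designing PCA-specific stability, certificate, and filter lemmas rather than reusing the mean-estimation ones.

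The obstacle you underweight is statistical, not algorithmic. As the paper notes explicitly at the start of \Cref{app:robust-sparse-pca}, the lifted vectors $Y^{(i)}$ have \emph{non-isotropic} covariance, so they do not satisfy the stability condition (\Cref{def:stability-inliers}) on which the entire filtering machinery of \Cref{thm:informal-main-result} rests; sub-exponential truncation addresses tails, not the shape of the second moment. Separately, your proposed Kronecker factorization of the Valiant step would require detecting large entries of the $d^2\times d^2$ covariance of $Y$, i.e.\ large values of $\cov(X_{i_1}X_{i_2}, X_{j_1}X_{j_2})$; these do not ``factor through pairwise correlations'' in any usable sense under contamination (and even for clean Gaussians Isserlis gives a sum over pairings, not a single product). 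The paper's insight is that this fourth-order search is unnecessary: the $d\times d$ second-moment matrix already exposes the spike.
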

This result gives the first subquadratic time algorithm for dimension-independent error, improving on the $\Omega(d^2)$ runtime of \cite{BalDLS17,DiaKKPS19,CheDKGGS21}.
We note that the error guarantee of \Cref{thm:informal-robust-sparse-pca} is sub-optimal by a polynomial factor of $\eps/\spike$ (like in \cite{CheDKGGS21}), since the information-theoretic optimal error is $\eps/\spike$.
Despite this (polynomially) larger error, \Cref{thm:informal-robust-sparse-pca} is the first subquadratic time algorithm for robust sparse PCA with any non-trivial error, say, less than $0.01$.
We defer detailed discussion to \Cref{app:robust-sparse-pca}.

Our main technical ingredient in proving \Cref{thm:informal-main-result,thm:informal-robust-sparse-pca} is a result on detecting correlated coordinates of a high-dimensional distribution
by \cite{Valiant15}.
We give an overview of \Cref{thm:informal-main-result} in \Cref{sec:our-tech}, with 
 details in \Cref{sec:sparse_certificates_in_subquadratic_time},
 and defer \Cref{thm:informal-robust-sparse-pca} to \Cref{app:robust-sparse-pca}.

\subsection{Overview of Techniques}
\label{sec:our-tech}

We begin by presenting a brief overview of the landscape of current robust sparse mean estimation algorithms, followed by challenges in using these approaches to obtain an $o(d^2)$ runtime, and then conclude by presenting our algorithm.

\paragraph{(Dense) Robust Mean Estimation
} 
Let
$\mu'$ and $\bSigma'$ be the sample mean and the sample covariance of the current (corrupted) dataset.
The general guiding principle in robust \emph{dense} mean estimation is to use $\bSigma'$ to check for the presence of harmful outliers and iteratively remove them.
 Recall that inliers are sampled from an isotropic covariance distribution $\cN(\mu,\bI_d)$.
 Thus, if we take $\Theta(d/\epsilon^2)$ samples, then the variance of the inliers in any direction is $(1 \pm \tilde{O}(\eps))$. Moreover, the variance of any $(1 -\epsilon)$ fraction of inliers is $(1 \pm \tilde{O}(\eps))$.  

 The following are the key insights in developing algorithms for robust dense mean estimation: (i) Outliers cannot change the sample mean $\mu'$ of the data in any direction $v$ without significantly increasing the covariance $\bSigma'$ in the  direction $v$, (ii) Given a direction of large variance $v$ of the data (i.e., with variance larger than $1 + \tilde{\Omega}(\eps)$), one can reliably remove outliers by projecting the data onto $v$ and thresholding appropriately, and (iii) In the dense setting, the directions of large variance correspond to leading eigenvectors of the covariance matrix $\bSigma'$, and further they can be computed efficiently (in nearly-linear time) using power iteration.
Thus, one can iteratively remove outliers as follows: compute (approximately) the leading eigenvalues and eigenvectors of the sample covariance matrix $\bSigma'$ and remove the samples that have large projections along the computed direction.

\paragraph{Adapting to Sparsity and Smaller Sample Complexity}
For robust \textit{sparse} mean estimation, one can adapt the above strategy by focusing  only on the sparse directions $v$. Indeed, (i) and (ii)
above are straight-forward and the resulting sample complexity of the algorithm is $k \log(d)/\eps^2$ since we require concentration of the mean and the covariance only along $k$-sparse directions.
However, the problem of computing the direction of the leading \emph{sparse} eigenvalues of a matrix, $\max_{v: \|v\|_2=1, \text{$k$-sparse}} v^\top \bSigma' v$, is computationally-hard in the worst case.
Inspired by the literature on sparse PCA~\cite{dAspGJL07}, \cite{BalDLS17} proposed the following convex relaxation\footnote{For a matrix $\bA \in \R^{d \times d}$, $\|\bA\|_1$ denotes the $\ell_1$-norm of $A$ when flattened as a $d^2$-dimensional vector}:
\begin{align}
\label{eq:psd-relaxation}
    \sup_{\{\bA: \bA \succeq 0, \trace(\bA) = 1, \|\bA\|_1 \leq k\}}  \langle \bA, \bSigma' - \bI_d \rangle\,.
\end{align}
Given such a feasible $\bA$ with value larger than $\tilde{\Omega}(\eps)$, one can remove outliers provided a larger sample complexity of $(k^2 \log d)/\eps^2$.\footnote{This larger sample complexity, $k^2$ versus $k$, is due to the stronger concentration required by the relaxation.}
Although the resulting algorithm is polynomial-time and the desired sample complexity $\poly(k,\log d, 1/\eps)$, the algorithm requires solving semidefinite programs (SDPs), for which the current algorithms require time superquadratic in dimension~\cite{JiaKLPS20}.

\paragraph{Spectral Algorithm of \cite{DiaKKPS19}}
To avoid solving {the SDPs
 from the preceding paragraph,
\cite{DiaKKPS19} considers a different (and, in a sense, weaker)
 relaxation of sparse eigenvalues
by relying on the distributional properties of Gaussian distributions.
Let $\Hsparse{k^2}$ be the set of all (sparse) matrices $\bB$ with Frobenius norm 1
 and at most $k^2$ non-zero entries. 
 Importantly, $\Hsparse{k^2}$ contains all
 $vv^\top$ for all $k$-sparse unit vectors $v$.  
Their starting point is the observation is that for any $\bB \in \Hsparse{k^2}$, we have  $\Var_{x \sim \cN(0,\bI_d)}(x^\top \bB x) = 2\|\bB\|_\fr^2 =   2$. 
Thus,
for 
a fixed $\bB \in \Hsparse{k^2}$, the empirical mean of $x^\top \bB x$ over any $1-\eps$ fraction of  inliers should be $\trace(\bB) \pm \tilde{O}(\eps)$.
Moreover, standard uniform concentration arguments imply that this holds uniformly over $\bB \in \Hsparse{k^2}$ given $(k^2\log d)/\eps^2$ samples.
Their key observation is that the resulting (non-convex) optimization problem 
\begin{align}
\label{eq:frob-relaxation}
    \max_{\bB \in \Hsparse{k^2}} \langle \bB, \bSigma' -\bI_d \rangle   
\end{align}
can be solved via standard matrix operations (despite being non-convex) without resorting to SDPs (as opposed to \Cref{eq:psd-relaxation}); indeed, the optimal $\bB$ corresponds to the top-$k^2$ values of $\bSigma' - \bI_d$ in magnitude, computable in $\tilde{O}(d^2)$ time given $\bSigma'$.

Given such a feasible $\bB \in \Hsparse{k^2}$ that achieves $ \langle \bB, \bSigma'- \bI_d\rangle \ge \tilde{\Omega}(\eps)$, \cite{DiaKKPS19} also propose an efficient outlier-removal strategy. 
Overall, this yields 
an algorithm with runtime $d^2\poly(k/\eps)$ and sample complexity $(k^2 \log d)/\eps^2$. While a significant improvement over \cite{BalDLS17}, this still 
unfortunately falls short of our target runtime of $o(d^2)$.

The main challenge in extending \cite{DiaKKPS19}'s algorithm to get an $o(d^2)$ runtime is that one needs to write down $\bSigma'$ explicitly, which itself takes $\Omega(d^2)$ time.
Moreover, there is no known analog of power iteration for sparse settings with provable guarantees (recall that in the dense setting, the power iteration can be implemented in nearly-linear runtime~\cite{SacVis14}).
Our main technical insight is to use  advances in fast algorithms for correlation detection initiated by \cite{Valiant15}; 
see \Cref{sec:detecting_correlation_in_subquadratic_time} for precise statements.
Next, we explain why \cite{Valiant15} is potentially useful in our setting, explain the challenges in a direction application of their result, and our proposed fix.

\paragraph{Fast Correlation Detection Algorithm To The Rescue} Denote the correlation detection algorithm in \cite{Valiant15} by $\cA_{\text{corr}}$. 
In our setting, 
this algorithm 
guarantees
 that given a $\rho \in (0,1)$ and a large $q \in \N$, 
it
  can find (off-diagonal) coordinate pairs $(i,j)$ such that $|\bSigma'_{i,j}|\geq \rho$ in subquadratic time \emph{as long as} there are at most $O(d)$ off-diagonal coordinate pairs $(i,',j')$ such that $|\bSigma'_{i',j'}| \geq \rho^{q}$; observe that $\rho^q \ll \rho$. 

We now illustrate why such a subroutine
may
 be useful.
Suppose that $\cA_{\text{corr}}$ returns many correlated coordinate
pairs. Then the optimal value in \Cref{eq:frob-relaxation} must be large (if we optimize over $\Hsparse{k'}$ for some $k'$ large enough\footnote{Recall that we can interpret the maximum in \Cref{eq:frob-relaxation} as Euclidean norm of the largest $k^2$ entries, which would be at least $\rho \sqrt{k^2} = \rho k$.}),
and we can use those coordinates pairs
 to construct a $k'$-sparse $\bB$ that can be used to remove outliers as before \cite{DiaKKPS19}.
If,  on the other hand, $\cA_{\text{corr}}$ returns only a few coordinate pairs, then we know that only these small set of coordinates are (potentially) corrupted, and we have reduced our problem to robust dense mean estimation on these coordinates; recall that in this setting,  the sample mean is a good candidate for the remaining coordinates.
Thus, a fast correlation detection algorithm leads to a subquadratic time algorithm to filter outliers (or declare victory) provided that there are only $O(d)$ coordinate pairs with correlation larger than $\rho^q$. 

\paragraph{
Challenges in Applying Fast Correlation Detection and A Proposed Fix
}
A priori, it is unclear why there must be only $O(d)$ correlated coordinate pairs: indeed, the outliers are allowed to be dense (similar to inliers --- recall that only the population mean is 
 sparse), and, in the worst case,
 it is possible that they cause \textit{all} coordinate pairs to be correlated (on the corrupted data).
Thus, we need an alternative procedure to (i) detect if there are many $\rho^q$-correlated pairs and (ii) if so, find an alternative procedure to make progress. 

Fortunately, it turns out that
if there are $\Omega(d)$ correlated pairs, then a random pair 
has $\Omega(d/d^2)=\Omega(1/d)$ probability to be correlated.
Hence,
we can sample a relatively large --- but subquadratic, say $\Theta(d^{1.5})$ --- number of  random pairs
to estimate the true count of correlated pairs.
If random sampling does not find many such
pairs, then the true count would anyway  have been small with high probability, and we 
may
safely invoke \cite{Valiant15}'s algorithm, solving the detection problem (i) above.
On the other hand, if we do observe many (i.e.,
scaling polynomially with $d$) $\rho^q$-correlated pairs, then we know that the Frobenius norm of the largest $k' = \poly(k)$ entries of $\bSigma'-\bI_d$ must be large enough, $\Omega(\rho^q \sqrt{k'})$.
In other words, we can find a relatively sparse $\bB' \in \Hsparse{k'}$ such that $\langle \bSigma' - \bI_d, \bB'\rangle\geq\tilde{\Omega}(\eps)$.
Thus, we can iteratively remove outliers (or declare victory when safe to do so) irrespective of the number of correlated pairs.
Finally, the larger sample complexity of algorithm comes from invoking the filter on $k'$-sparse matrices $\bB$ for $k'\gg k$.
We give a more detailed overview in \Cref{sec:sparse_certificates_in_subquadratic_time}.

\subsection{Related Work}
\label{sec:related_work}

Our
work is situated within the field of algorithmic robust statistics, and we refer the reader to \cite{DiaKan22-book} for an extensive exposition on the topic.

\paragraph{Robust Sparse Estimation}
Efficient algorithms for robust sparse mean estimation were first developed in \cite{BalDLS17}, giving an algorithm to compute $\widehat{\mu}$ with sample complexity $\poly(k,\log d,1/\eps)$, runtime $\poly(d,1/\eps)$, and near-optimal error $\|\widehat{\mu} - \mu\|_2 = \tilde{O}(\eps)$.
Their algorithm used the ellipsoid algorithm with a separation oracle that requires solving an SDP. Invoking the ellipsoid algorithm can be avoided by using \cite{ZhuJS20} or through iterative filtering, but the resulting algorithm still requires solving multiple SDPs, which, as noted earlier, is inherently slow. Bypassing the use of SDPs,
\cite{DiaKKPS19} developed the first spectral algorithm for robust sparse estimation with runtime $d^2 \poly(k,\log d,1/\eps)$. Another novel take on this problem was seen in
\cite{CheDKGGS21}, which  gave an optimization-based algorithm
showing that first-order stationary points of a natural non-convex objective suffice.
Although the resulting algorithm relies on simple matrix operations, the derived runtime is super-quadratic in dimension.
In a different direction, \cite{DiaKKPP22-colt} and \cite{DiaKLP22} developed algorithms for robust sparse mean estimation for a wider class of distributions: heavy-tailed distributions and light-tailed distributions with unknown covariance, respectively. 

Robust sparse mean estimation is conjectured to have information-computation gaps~\cite{DiaKS17,BreBre20,DiaKKPP22-colt}. Particularly, while there exist inefficient ($d^k$-time) algorithms for robust sparse mean estimation using $(k \log (d))/\eps^2$ samples, all polynomial time algorithms are conjectured to require  $\Omega\left(k^2/\eps^2\right)$ samples~\cite{BreBre20}.

\paragraph{Fast Algorithms for Robust Estimation}
Looking beyond polynomial runtime as the criterion
of computational efficiency, a recent line of work has investigated \emph{faster} algorithms for a variety of robust estimation tasks:
 mean estimation~\cite{CheDG19,DonHL19,DepLec19,LeiLVZ20,DiaKKLT22-cluster,DiaKPP22-streaming,DiaKPP23-huber}, covariance estimation~\cite{CheDGW19}, principal component analysis~\cite{JamLT20,DiaKPP23-pca}, list-decoding~\cite{CheMY20,DiaKKLT22-cluster}, and  linear regression~\cite{CheATJFB20,DiaKPP23-huber}.
 The overarching goal in this line of work is to develop robust algorithms that have runtimes matching the corresponding non-robust off-the-shelf algorithms, thus reducing the computational overhead of robustness.
However, none of these algorithms is tailored to sparsity and hence have sample complexity scaling (at least) linearly with the dimension.\footnote{As a result, the dependence on the runtime again becomes quadratic because $nd = \Omega(d^2)$.}
\textit{
Our main contribution
is the first subquadratic runtime algorithm for robust sparse mean estimation with sample complexity $\poly(k \log d, 1/\eps)$.}

\paragraph{Fast Correlation Detection}
Given $n$ vectors in $\{\pm1\}^d$ and two thresholds $1 \geq \rho> \tau > 0$, the correlation detection problem 
asks to find
all coordinate pairs that have correlation at least $\rho$ given that not too many pairs have correlation 
larger
than $\tau$.
This problem is a generalization of the light bulb problem~\cite{Val88-light-bulb}.
The first subquadratic algorithm for both these problems was
given by \cite{Valiant15}, with further improvements and simplifications in  
 \cite{KarKK18-correlation,KarKKC20,Alm19}.
 Further algorithmic improvements for this task would likely also improve our
  runtime guarantees in \Cref{thm:informal-main-result,thm:informal-robust-sparse-pca}.

\section{Preliminaries}
\label{sec:preliminaries_for_sparse_estimation}

\paragraph{Notation} 
For a random variable $X$, $\E[X]$ denotes its expectation.
For a finite set $S$ and a function $g:S \to \R^d$, we use $\E_S[g(X)]$ to denote $(\sum_{x \in S}g(X))/|S|$.
We use $\poly(\cdots)$ to denote an expression that is polynomial in its arguments. The notations $\lesssim,\gtrsim, \asymp$ hide absolute constants.

For a vector $x\in\R^d$, we use $\|x\|_0$ and $\|x\|_2$
 to denote the  number of non-zero entries of $x$ and the $\ell_2$-norm of $x$, respectively.
For a vector $x$ and $k \in \N$, we define the $\|x\|_{2,k}$ norm as the maximum correlation between $x$ and a unit $k$-sparse vector, i.e., $\|x\|_{2,k} := \sup_{v: \|v\|_0 \leq k} \langle v, x\rangle \,$.
 Estimation in $\|\cdot\|_{2,k}$ immediately yields an estimate that is close in $\ell_2$ norm whenever $\mu$ is $k$-sparse:
\begin{proposition}[Sparse estimation using $\|\cdot\|_{2,k}$ norm~\cite{CheDKGGS21}]
\label{prop:sparse-estimation-using-2k-norm}
Let $x \in \R^d, y \in \R^d$, where $y$ is $k$-sparse.
Let $J\subset [d]$ be the top-$k$ coordinates of $x$ in magnitude, breaking ties arbitrarily. 
Define $x' \in \R^d$ to be $x'_i = x_i$ if $i \in J$ and $0$ otherwise.
Then $\|x' - y\|_2 \leq 6 \|x - y\|_{2,k}$. 
\end{proposition}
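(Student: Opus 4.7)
The plan is to decompose $\|x' - y\|_2^2$ by support and control each piece using the $\|\cdot\|_{2,k}$ norm, which by definition dominates the $\ell_2$ mass of any $k$-sparse subset of coordinates of $x - y$. Let $I \subseteq [d]$ denote the support of $y$, so $|I| \leq k$, and let $J$ be the top-$k$ coordinates of $x$ in magnitude. Since $x'$ is supported on $J$ and agrees with $x$ there, Pythagoras gives
\[
\|x' - y\|_2^2 \;=\; \|x_J - y_J\|_2^2 \;+\; \|y_{J^c}\|_2^2,
\]
where subscripts denote restriction to a coordinate set. The first term is at most $\|x - y\|_{2,k}^2$ because $|J| = k$.

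For the second term, observe that $y$ is zero outside $I$, so $y_{J^c}$ is supported on $I \cap J^c$, a set of size $\leq k$. By the triangle inequality,
\[
\|y_{J^c}\|_2 \;=\; \|y_{I \cap J^c}\|_2 \;\leq\; \|(x-y)_{I \cap J^c}\|_2 \;+\; \|x_{I \cap J^c}\|_2 \;\leq\; \|x - y\|_{2,k} \;+\; \|x_{I \cap J^c}\|_2,
\]
using $|I \cap J^c| \leq k$ on the first summand. The main obstacle is bounding $\|x_{I \cap J^c}\|_2$: these are coordinates where $y$ is nonzero but the threshold missed them, so a priori there is no direct comparison with $x - y$. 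The key trick is to exploit the top-$k$ optimality of $J$ to swap $I \cap J^c$ for an equally large set on which $y$ vanishes. Concretely, $|J \setminus I| = k - |I \cap J| \geq |I| - |I \cap J| = |I \cap J^c|$, and every coordinate of $J \setminus I \subseteq J$ has magnitude at least every coordinate of $I \cap J^c \subseteq J^c$. Matching elements of $I \cap J^c$ injectively to the $|I \cap J^c|$ smallest elements of $J \setminus I$ therefore yields a pointwise squared-magnitude domination, giving
\[
\|x_{I \cap J^c}\|_2 \;\leq\; \|x_{J \setminus I}\|_2.
\]
Now $J \setminus I \subseteq I^c$ so $y$ vanishes there, and $|J \setminus I| \leq k$, hence $\|x_{J \setminus I}\|_2 = \|(x - y)_{J \setminus I}\|_2 \leq \|x - y\|_{2,k}$.

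Combining the pieces gives $\|y_{J^c}\|_2 \leq 2\|x - y\|_{2,k}$ and therefore
\[
\|x' - y\|_2 \;\leq\; \sqrt{\|x - y\|_{2,k}^2 + 4\|x - y\|_{2,k}^2} \;=\; \sqrt{5}\,\|x - y\|_{2,k} \;\leq\; 6\,\|x - y\|_{2,k},
\]
which is the claimed bound (with room to spare). The only nontrivial step is the matching argument reducing $\|x_{I \cap J^c}\|_2$ to $\|x_{J \setminus I}\|_2$; everything else is Pythagoras, the triangle inequality, and the definition of $\|\cdot\|_{2,k}$.
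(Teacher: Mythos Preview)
Your proof is correct; in fact you obtain the sharper constant $\sqrt{5}$ rather than $6$. Note that the paper does not actually prove this proposition---it is stated as a preliminary result and attributed to \cite{CheDKGGS21}---so there is no in-paper argument to compare against. Your decomposition by support, together with the swap-by-optimality step bounding $\|x_{I\cap J^c}\|_2$ by $\|x_{J\setminus I}\|_2$, is the standard way to prove this kind of hard-thresholding lemma and is essentially what appears in the cited reference.
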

Thus, in the sequel, we solve the harder problem of estimating an \emph{arbitrary} mean $\mu\in \R^d$ in the $\|\cdot\|_{2,k}$ norm.

We denote matrices by bold capital letters, e.g., $\bA, \bSigma$.
We denote the $d\times  d$ identity matrix by $\bI_d$, omitting the subscript when clear.
For a matrix $\bA$, 
we use $\|\bA\|_0$,
and
$\|\bA\|_\fr$, 
to denote the number of non-zero entries 
and
the Frobenius norm,
respectively. 
For matrices $\bA$ and $\bB$ of the same dimensions, $\langle \bA, \bB \rangle$ to denotes the trace inner product $\trace(\bA^\top \bB)$.

For a subset $H \subset [d]$, and a vector $x \in \R^d$, define $(x)_H$ to be $|H|$-dimensional vector that corresponds to the coordinates in $H$.
 Similarly, for a matrix $\bA$, we define $(\bA)_H$ to be the $|H| \times |H|$ matrix corresponding to coordinates in $H$.
For a square matrix $\bA$, we use $\diag(\bA)$ and $\offdiag(\bA)$ to denote its diagonal and offdiagonal, respectively.

For a finite set $T \subset \R^d$, we define $\mu_T$ and $\bSigma_T$ to be the sample mean and the sample covariance of $T$, respectively.\footnote{Not to be confused with $(\bSigma)_H$ when $H \subset [d]$.}
When the set $T$ is clear from context, for a coordinate pair $(i,j) \in [d] \times [d]$ with $i \neq j$,  we denote the correlation between these coordinates on $T$ as $\corr(i,j) := \left| \bSigma'_{i,j}/\sqrt{\bSigma'_{i,i}\bSigma'_{j,j}} \right|$ for $\bSigma' = \bSigma_T$.
For a $\rho \in (0,1)$, we say coordinates $(i,j)$ are $\rho$-correlated if $\corr(i,j) \geq \rho$. 

Robust sparse estimation requires checking whether the current covariance matrix $\bSigma'$ has small quadratic forms, $v^\top (\bSigma' - \bI) v$, for sparse unit vectors.
For a matrix $\bA$ and $k \in \N$,
we define the sparse operator norm, $\|\bA\|_{\op,k} :=\sup_{v: \|v\|_2 =1, \|v\|_0 \leq k}|v^\top \bA v |$. 
Since computing $\|\cdot\|_{\op,k}$ is computationally hard, 
we look at the following relaxation from \cite{DiaKKPS19}:
For a matrix $\bA$, define $\|\bA\|_{\fr, k^2} :=\sup_{\bB: \|\bB\|_\fr = 1, \|\bB\|_0 \leq k^2}  \langle \bA, \bB\rangle\,$.
It can be seen that $\|\bA\|_{\op,k} \leq \|\bA\|_{\fr,k^2}$ since $\bB$ could be all $\pm vv^\top$ for $k$-sparse unit vectors $v$.
Moreover, $\|\bA\|_{\fr,k^2}$ is the Euclidean norm of the largest $k^2$ entries (in magnitude) of $\bA$.

Since we will routinely look at the projections of the points on a subset of coordinates, we formally define it below:
\begin{definition}[Projection of Pairs of Coordinates]
\label{def:set-projection}
Let $H_{\text{pair}} \subset [d] \times [d]$ be a set of pair of coordinates such that $(i,i) \not \in H_{\text{pair}}$ for any $i \in [d]$.
For an even $m \in [d^2]$, we define the operator $\proj_{m}$ that takes any such $H_{\text{pair}}$ and returns a set in $[d]$ that has a cardinality at most $m$  as follows:
\begin{itemizec}
	\item If $|H_{\text{pair}}| {\leq} \frac{m}{2}$, return $\{i: (i,j) \in H_{\text{pair}} \text{ or } (j,i) \in H_{\text{pair}}\}$.
	\item Otherwise, let any $m/2$ distinct elements of $H_{\text{pair}}$ be $(i_1,i_2)$, $\dots, (i_{m-1}, i_m)$. Return $\{i_j: j \in [m]\}$. 
\end{itemizec}
When the subscript $m$ is omitted, we take $m$ to be $d^2$.\end{definition}
Informally, the operator returns a set $H$ such that for any matrix $\bA$, for small $m$,  $ \|(\bA)_H\|_\fr^2 \geq \sum_{(i,j) \in H_{\text{pair}}}\bA_{i,j}^2$, while for larger $m$, $\|(\bA)_H\|_\fr^2 \geq m \min_{(i,j) \in H_{\text{pair}}} \bA_{i,j}^2$.

\subsection{Deterministic Condition on Inliers}
\label{sec:stability-prelim}
A recurring notion in developing robust algorithms is that of \textit{stability}, which stipulates that the first and second moment of the data not change much under removal of a small fraction of data points.
\begin{definition}[Stability]
\label{def:stability-inliers}
For an $\epsilon \in (0,1/2)$, $\delta \geq \epsilon$, and sparsity $k \in \N$, we say a set $S \subset \R^d$ is $(\epsilon, \delta,k)$-stable with respect to $\mu \in \R^d$ if the following holds for any subset $S' \subseteq S$ with $|S'| \geq (1 -\epsilon)|S|$:
	(i)  $\left\|\E_{S'}[X - \mu]\right\|_{2,k} \leq \delta$, and
	(ii)
	$\left\|\E_{S'}[(X - \mu)(X - \mu)^\top] - \bI_d\right\|_{\fr,k^2} {\leq} \delta^2/\epsilon$.
\end{definition}
The following result gives a nearly-tight bound on the sample complexity required to ensure stability.
\begin{lemma}[{Stability Sample Complexity~\cite[Lemma 3.3]{CheDKGGS21}}]
\label{lemma:stab-sample-complexity}
Let $S$ be a set of  $n$ i.i.d.\ samples from a subgaussian distribution $P$ over $\R^d$ such that $P$  has (i) mean $\mu \in \R^d$, (ii) identity covariance, and (iii) satisfies the
 Hanson-Wright inequality; in particular, $\cN(\mu,\bI_d)$ satisfies all three properties.
Then if  $n \gtrsim (k^2 (\log d)/ \eps^2)$, then  $S$ is $(\epsilon,\delta,k)$-stable with high probability with respect to $\mu$ for $\delta = C \eps \sqrt{\log(1/\eps)}$ where $C$ is a large absolute constant.
\end{lemma}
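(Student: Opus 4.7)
The plan is to establish both conditions of \Cref{def:stability-inliers} by decomposing the empirical moment of the retained subset $S'$ into contributions from the full sample $S$ and the removed subset $S'' := S \setminus S'$, and bounding each separately. Writing $\alpha := |S''|/|S| \leq \eps$, for any function $f$ and centering $\nu$,
$$\E_{S'}[f(X)] - \nu \;=\; \frac{1}{1-\alpha}\bigl(\E_S[f(X)] - \nu\bigr) - \frac{\alpha}{1-\alpha}\bigl(\E_{S''}[f(X)] - \nu\bigr),$$
so for any (semi-)norm $\|\cdot\|_\star$,
$$\|\E_{S'}[f] - \nu\|_\star \;\leq\; 2\,\|\E_S[f] - \nu\|_\star + 2\eps\,\|\E_{S''}[f] - \nu\|_\star.$$
I will apply this with $f(x)=x-\mu$, $\nu=0$, $\|\cdot\|_\star=\|\cdot\|_{2,k}$ for condition (i), and with $f(x)=(x-\mu)(x-\mu)^\top$, $\nu=\bI_d$, $\|\cdot\|_\star=\|\cdot\|_{\fr,k^2}$ for condition (ii).

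For condition (i), I would first bound $\|\E_S[X-\mu]\|_{2,k}$ by fixing a $k$-sparse unit vector $v$, using subgaussianity of $\langle v, X-\mu\rangle$ to obtain $|\langle v, \E_S[X-\mu]\rangle| \lesssim \sqrt{\log(1/\eta)/n}$ with probability $1-\eta$, and union-bounding over a $1/2$-net of the set of $k$-sparse unit vectors of cardinality $\binom{d}{k}\cdot C^k \leq d^{O(k)}$. This yields $\|\E_S[X-\mu]\|_{2,k}\lesssim \sqrt{k\log(d)/n}$, which is at most $\eps/\sqrt{k}\lesssim \delta$ under $n\gtrsim k^2\log(d)/\eps^2$. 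For the subset term, I would use the standard ``truncated moment'' bound: the average of the top $\eps n$ values of a mean-zero, unit-variance subgaussian iid sequence is $\lesssim \sqrt{\log(1/\eps)}$ with high probability. Extending uniformly over $k$-sparse directions via the same net gives $\|\E_{S''}[X-\mu]\|_{2,k}\lesssim \sqrt{\log(1/\eps)}$ uniformly in $S''$. Combining, $\|\E_{S'}[X-\mu]\|_{2,k}\lesssim \sqrt{k\log(d)/n}+\eps\sqrt{\log(1/\eps)}\lesssim \delta$.

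For condition (ii), the argument parallels the mean case but substitutes Hanson--Wright for subgaussian tail bounds. For each $\bB$ with $\|\bB\|_\fr=1$ and $\|\bB\|_0\leq k^2$, the quadratic form $(X-\mu)^\top\bB(X-\mu)-\trace(\bB)$ has mean zero and (by Hanson--Wright, using $\|\bB\|_\op\leq \|\bB\|_\fr=1$) is sub-exponential; its empirical mean over $S$ concentrates at rate $O(\sqrt{\log(1/\eta)/n})$ in the Gaussian regime. A union bound over a $1/2$-net of the feasible set, of cardinality $\binom{d^2}{k^2}\cdot C^{k^2}\leq d^{O(k^2)}$, gives $\|\E_S[(X-\mu)(X-\mu)^\top]-\bI\|_{\fr,k^2}\lesssim k\sqrt{\log(d)/n}\lesssim \eps \lesssim \delta^2/\eps$ under the stated sample complexity. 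For the subset term, the average of the top $\eps n$ order statistics of the sub-exponential quantity is $\lesssim \log(1/\eps)$ (the $\eps$-quantile sits at $O(\log(1/\eps))$ and the conditional tail integral contributes $O(1)$), so another net argument yields $\|\E_{S''}[(X-\mu)(X-\mu)^\top]-\bI\|_{\fr,k^2}\lesssim \log(1/\eps)$ uniformly in $S''$. The overall bound is $\lesssim \eps+\eps\log(1/\eps)\lesssim \eps\log(1/\eps)=\delta^2/\eps$ for an appropriate absolute constant $C$.

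The main technical obstacle is the subset contribution for condition (ii): one must bound, uniformly over all $\eps n$-subsets $S''$ and all $k^2$-sparse unit-Frobenius $\bB$, the top $\eps n$ average of $(X_i-\mu)^\top\bB(X_i-\mu)-\trace(\bB)$. It is precisely the sub-exponential (rather than sub-gaussian) character of this quadratic form that forces the allowed deviation $\delta^2/\eps$ in \Cref{def:stability-inliers} to carry the extra $\log(1/\eps)$ factor; any attempt to treat these quadratic forms as purely subgaussian would give the wrong dependence and fail to reach the optimal $\delta=O(\eps\sqrt{\log(1/\eps)})$. A secondary subtlety is that the top-$\eps n$ operator depends on $\bB$, so the net argument must be executed support-by-support, with scaling used to absorb the net-approximation error since a sparse $\bB$ and a nearby net point $\bB_0$ share the same support.
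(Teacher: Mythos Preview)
Your proposal is correct and follows the standard route, which is essentially the one taken in \cite{CheDKGGS21} (the paper itself defers to that reference and does not give its own proof): split $S'$ into $S$ and the removed $\eps$-fraction, handle the full-sample terms by subgaussian/Hanson--Wright concentration plus a net over $k$-sparse vectors or $k^2$-sparse unit-Frobenius matrices, and handle the removed-fraction terms via top-$\eps n$ order-statistic bounds (giving the extra $\sqrt{\log(1/\eps)}$ and $\log(1/\eps)$ factors, respectively). The only additional bookkeeping you should make explicit is that the uniform top-$\eps n$ bound requires the per-$\bB$ failure probability to beat the net size $d^{O(k^2)}$; this needs $\eps n \gtrsim k^2 \log d$, which is implied by $n \gtrsim k^2 (\log d)/\eps^2$.
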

We note that the deterministic condition in \Cref{def:stability-inliers} is slightly stronger than \cite{CheDKGGS21}---the condition for the covariance---but their proof continues to work with the same sample complexity for \Cref{def:stability-inliers}.\footnote{We remark that \cite[Lemma 3.3]{CheDKGGS21} claims to establish the deterministic condition for all isotropic subgaussian distributions. However, their proof crucially uses Hanson-Wright inequality, which does not hold for arbitrary isotropic subgaussian distributions. In particular, the example in \cite[Exercise 6.3.6]{Vershynin18}, a uniform mixture of two spherical Gaussians, gives a counter example; see \cite[Section 6.3]{Vershynin18} for additional discussion.}

\subsection{Randomized Filtering}
\label{app:randomized-filtering}
We will use the following template of filtering algorithm from \cite[Section 2.4.2]{DiaKan22-book} (after a slight change in parameters). The following template 
for filtering has now become a standard in algorithmic robust statistics.
\begin{algorithm}[H]
\caption{Randomized Filtering}
\label{alg:randomized_filtering}
\begin{algorithmic}[1]
\State Let $T_1 \gets T$
\State $i \gets 1$
\While{$T_i \neq \emptyset $ and 
$T_i$ does not satisfy the stopping condition $\cS$
}
\State Get the scores $f: T_i \to \R_+$ satisfying $\sum_{x \in T_i \cap S} f(x) \leq \sum_{x \in T_i \setminus S} f(x)$ \label{line:score-f-generic} and $\max_{x \in T_i} f(x) > 0$
\State $T_{i+1} \gets T_i$
\For {each $x \in T_i$}
\label{line:sample-removal-prob}
\State Remove the point $x$ from $T_{i+1}$ with probability  $\frac{f(x)}{\max_{x \in T_i} f(x)}$
\EndFor
\State $i \gets i +1$
\EndWhile
\State \Return $T_{i}$
\end{algorithmic}
\end{algorithm}
These filtering algorithms have become a standard template in algorithmic robust statistics.
Here, the stopping condition $\cS$ can be a generic condition that can be evaluated in $\cT_{\text{stopping}}$ time using some algorithm $\cA_s$---it can be a randomized algorithm (using independent randomness from Line \ref{line:sample-removal-prob}) that succeeds with probability $1 - \tau$. 
We also require that whenever the stopping condition is not satisfied and the set $T_i$ is an $10 \eps$-corruption of $S$, then the scores $f:T_i \to \R_+$ satisfying the guarantees of Line \ref{line:score-f-generic} can be computed in time $\cT_{\text{score}}$. 

\begin{theorem}[Guarantee of \Cref{alg:randomized_filtering}; {\cite[Theorem 2.17]{DiaKan22-book}}]
\label{thm:generic-randomized-alg}
If the above stopping conditions and filter conditions are met, then \Cref{alg:randomized_filtering} returns a set $T' \subseteq T$ such that, with probability at least $8/9 - \tau|T|$, the following statements hold.
\begin{itemizec}
	\item \Cref{alg:randomized_filtering} runs in time $O(|T| (\cT_{\text{stopping}} + \cT_{\text{score}} + |T|))$.
	\item Each set $T_i \subseteq T$ observed throughout the run of the algorithm (in particular, the output set $T'$) 
 is a $10 \epsilon$-corruption of $S$.
\item $T'$ satisfies the stopping condition $\cS$.
\end{itemizec}
\end{theorem}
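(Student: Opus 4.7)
The plan is a classical martingale analysis of iterative filtering, combined with a union bound over the (randomized) stopping-condition checks.

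First I would track $L_i := |(T \cap S) \setminus T_{i+1}|$ and $R_i := |(T \setminus S) \setminus T_{i+1}|$, the numbers of inliers and outliers removed by the end of iteration $i$. In iteration $i$, each $x \in T_i$ is independently removed with probability $f(x)/\max_y f(y)$, so linearity of expectation gives
\[
\E[L_i - L_{i-1} \mid T_i] \;=\; \frac{\sum_{x \in T_i \cap S} f(x)}{\max_y f(y)} \;\leq\; \frac{\sum_{x \in T_i \setminus S} f(x)}{\max_y f(y)} \;=\; \E[R_i - R_{i-1} \mid T_i],
\]
using the score hypothesis of line~\ref{line:score-f-generic}. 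Consequently $M_i := L_i - R_i + \epsilon|T|$ is a supermartingale with $M_0 = \epsilon|T|$, and it is nonnegative because $L_i \geq 0$ and $R_i \leq |T \setminus S| \leq \epsilon|T|$. The key algebraic observation is that $T_i$ is a $10\epsilon$-corruption of $S$ iff $L_i \leq 9\epsilon|T|$ (using $|S|=|T|$), which is implied by $M_i \leq 9\epsilon|T|$.

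The main obstacle is a circularity: the score hypothesis is only guaranteed when $T_i$ is itself a $10\epsilon$-corruption, which is precisely the invariant I am trying to establish. To break this, I would introduce the stopping time $\sigma := \min\{i : M_i > 9\epsilon|T|\}$; strictly before $\sigma$ the invariant holds, the score hypothesis applies, and the computation above is valid, making the \emph{stopped} process $M_{i \wedge \sigma}$ an unconditional nonnegative supermartingale. Doob's maximal inequality then yields
\[
\Pr\bigl[\sup_i M_{i \wedge \sigma} \geq 9\epsilon|T|\bigr] \;\leq\; \frac{M_0}{9\epsilon|T|} \;=\; \frac{1}{9},
\]
so with probability at least $8/9$, $\sigma = \infty$ and every $T_i$ encountered is a $10\epsilon$-corruption of $S$.

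Finally, the requirement $\max_x f(x) > 0$ forces the argmax of $f$ to be removed with probability $1$, so the while-loop terminates in at most $|T|$ iterations. Each iteration costs $\cT_{\text{stopping}}$ for the stopping oracle, $\cT_{\text{score}}$ for computing the scores, and $O(|T|)$ for the independent coin flips over $T_i$, giving total runtime $O(|T|(\cT_{\text{stopping}} + \cT_{\text{score}} + |T|))$. A union bound over the at most $|T|$ calls to the randomized stopping oracle $\cA_s$ (each failing with probability at most $\tau$) adds $\tau|T|$ to the failure probability; on the intersection of this event with $\{\sigma = \infty\}$, every iteration sees the invariant, hence obeys the score guarantee, the invariant is preserved throughout, and the loop exits only at a $T_i$ genuinely satisfying $\cS$. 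Taking $T' := T_i$ at exit yields all three conclusions with probability at least $8/9 - \tau|T|$.
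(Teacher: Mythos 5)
The paper does not prove this theorem itself; it invokes it as \cite[Theorem~2.17]{DiaKan22-book}. Your proof — tracking $L_i$ (inliers removed) and $R_i$ (outliers removed), forming a nonnegative supermartingale $M_i = L_i - R_i + \epsilon|T|$, stopping it at the first time the $10\epsilon$-invariant would fail so the score hypothesis always applies before $\sigma$, applying Doob's maximal inequality to get failure probability $M_0/(9\epsilon|T|) = 1/9$, and union-bounding over the at most $|T|$ calls to the $\tau$-error stopping oracle — is precisely the textbook argument behind the cited result, and it is correct. One small step you leave implicit and could state: on the good event, $L_i + R_i \leq 10\epsilon|T|$ for all $i$, hence (for $\epsilon < 1/10$) every $T_i$ is nonempty, so the while-loop cannot exit via $T_i = \emptyset$ and must exit by satisfying the stopping condition.
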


\subsection{Certificate Lemma and 
 Frobenius Norm Filtering}
\label{sec:certificate-lemma-prelim}

The following standard certificate lemma guides the algorithmic design: if the sample covariance matrix of the corrupted data is roughly isotropic, then the sample mean is a good estimate.
\begin{lemma}[{Sparse Certificate Lemma, see, e.g., \cite{BalDLS17}}]
\label{lem:certificate-lemma}
Let $T$ be an $\epsilon$-corrupted version of $S$, where $S$ is $(\epsilon,\delta,k)$-stable with respect to $\mu$ (cf.\ \Cref{def:stability-inliers}).
Then, 
$$\|\mu_T - \mu\|_{2,k} \lesssim \delta + \sqrt{ \epsilon \|\bSigma_T - \bI_d\|_{\op,k} } \,. $$
\end{lemma}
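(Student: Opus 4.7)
The plan is the standard ``good set / bad set'' decomposition, pushed through the sparse versions of the mean and covariance stability guarantees. Let $S' := S \cap T$ and $L := T \setminus S$. Since $|T| = |S|$ and at most $\epsilon|S|$ inliers were replaced, $|S'| \ge (1-\epsilon)|S|$, so the stability hypothesis applies to $S'$. Let $w := |L|/|T| \le \epsilon/(1-\epsilon) \le 2\epsilon$. The identity $\mu_T = (1-w)\mu_{S'} + w\mu_L$ rearranges to
\begin{equation*}
\mu_T - \mu \;=\; (\mu_{S'} - \mu) \;+\; w\bigl(\mu_L - \mu_{S'}\bigr).
\end{equation*}
Taking the $\|\cdot\|_{2,k}$ norm and applying stability part (i) to $S'$ bounds the first summand by $\delta$. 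The task reduces to bounding $w\,|v^\top(\mu_L - \mu_{S'})|$ uniformly over $k$-sparse unit $v$ by $O(\sqrt{\epsilon\,\|\bSigma_T - \bI\|_{\op,k}} + \delta)$.

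For the second summand, I would write the total-covariance decomposition
\begin{equation*}
\bSigma_T \;=\; (1-w)\,\bSigma_{S'} \;+\; w\,\bSigma_L \;+\; w(1-w)\,(\mu_L - \mu_{S'})(\mu_L - \mu_{S'})^\top,
\end{equation*}
which is the standard law of total variance. Pairing with $vv^\top$ (a $k^2$-sparse, unit-Frobenius matrix) and discarding the non-negative $w\,v^\top\bSigma_L v$ term yields
\begin{equation*}
w(1-w)\,\bigl(v^\top(\mu_L - \mu_{S'})\bigr)^2 \;\le\; v^\top\bSigma_T v \;-\; (1-w)\,v^\top\bSigma_{S'}v.
\end{equation*}
The first term on the right is at most $1 + \|\bSigma_T - \bI\|_{\op,k}$ by definition. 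For the second, I apply stability part (ii) to $S'$: since $\E_{S'}[(X-\mu)(X-\mu)^\top] = \bSigma_{S'} + (\mu_{S'} - \mu)(\mu_{S'} - \mu)^\top$ and $vv^\top$ is admissible in the $\|\cdot\|_{\fr,k^2}$ dual, a short calculation gives $v^\top \bSigma_{S'} v \ge 1 - 2\delta^2/\epsilon$. Plugging back, the numerator on the right side is at most $\|\bSigma_T - \bI\|_{\op,k} + w + 2\delta^2/\epsilon \lesssim \|\bSigma_T - \bI\|_{\op,k} + \epsilon + \delta^2/\epsilon$.

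Dividing by $w(1-w)$, multiplying by $w^2$, and taking square roots,
\begin{equation*}
w\,\bigl|v^\top(\mu_L - \mu_{S'})\bigr| \;\lesssim\; \sqrt{\epsilon\,\|\bSigma_T - \bI\|_{\op,k}} \;+\; \epsilon \;+\; \delta,
\end{equation*}
where I used $w \le 2\epsilon$ and $\sqrt{a+b+c}\le\sqrt{a}+\sqrt{b}+\sqrt{c}$. Combining with the first step, and absorbing $\epsilon \le \delta$ (guaranteed by $\delta \ge \epsilon$ in \Cref{def:stability-inliers}), I obtain the claimed bound $\|\mu_T - \mu\|_{2,k} \lesssim \delta + \sqrt{\epsilon\,\|\bSigma_T - \bI\|_{\op,k}}$, taking a supremum over $v$ at the end.

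I do not expect a real obstacle here; the calculation is routine once the correct decomposition is chosen. The only subtle point is the distinction between the sample covariance $\bSigma_{S'}$ and the raw second-moment matrix $\E_{S'}[(X-\mu)(X-\mu)^\top]$: stability controls the latter in the $\|\cdot\|_{\fr,k^2}$ norm, and one must pay a $\delta^2$ correction to convert to the former. This correction is harmless because it is dominated by $\delta^2/\epsilon$, and ultimately shows up inside a $\sqrt{\cdot}$ alongside $\epsilon$, contributing only an $O(\delta)$ term that is already present on the right-hand side.
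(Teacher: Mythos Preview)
Your proof is correct and follows exactly the standard argument for certificate lemmas in robust statistics (the good-set/bad-set decomposition combined with the law of total variance), specialized to $k$-sparse directions; the paper does not prove this lemma itself but simply cites it from prior work, and your write-up matches what one finds there. The one point you flag---converting between $\bSigma_{S'}$ and the uncentered second moment in stability condition (ii)---is handled correctly, and the resulting $\delta^2$ correction is indeed absorbed as you say.
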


We now state the guarantee of filtering procedures,
where the goal is to filter outliers from a corrupted set $T$.
In dense mean estimation, the most common filters are based on scores of the form $(v^\top (x - \mu_T))^2$ for a direction of large variance $v$; this filter is guaranteed to succeed as long as the covariance matrix $\bSigma_T$ is far from the identity in operator norm.
In our setting, we will need a stronger filter guarantee that is guaranteed to succeed under the weaker condition that the covariance (submatrix) matrix is far from the (submatrix) identity in the Frobenius norm; Observe that the operator norm of (the corresponding submatrix of)  $\bSigma_{T} - \bI_d$ can be much smaller.
The following lemma
corresponds to the above situation and simplifies \cite[Steps 6-10 of Algorithm 1]{DiaKKPS19}:  
\begin{restatable}[Sparse Filtering Lemma]{lemma}{LemSparseFilteringLemma}
\label{lem:sparse-filtering-lemma}
Let $\epsilon \in (0 ,\epsilon_0)$ for a small absolute constant $\epsilon_0$.
Let $T$ be an $\epsilon$-corrupted version of $S$, where $S$ is $(\epsilon,\delta,k)$-stable with respect to $\mu$.
Let $H \subset [d]$ be such $\left\|(\bSigma_T - \bI)_H\right\|_\fr = \lambda$ for $\lambda \gtrsim \delta^2/\epsilon$ and $|H| \leq k$.
There exists an algorithm $\cA$ that takes $T$, $H$, $\epsilon$, and $\delta$ and returns scores $f:T \to \R_+$ so that $\sum_{x \in S \cap T} f(x) \leq \sum_{x \in T\setminus S } f(x)$, i.e., the sum of scores over inliers is less than that of outliers, and $\max_{x \in T}f(x) > 0$.
Moreover, the algorithm runs in time $d\cdot\poly(|H||T|)$.
\end{restatable}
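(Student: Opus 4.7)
The plan is to restrict the Frobenius-filter scheme of \cite{DiaKKPS19} (Steps 6--10 of their Algorithm 1) to the coordinate subset $H$, exploiting that the Frobenius witness already lives in $H \times H$. In $O(|T|\cdot|H|^2)$ time, compute $(\mu_T)_H$ and $(\bSigma_T)_H$, and set the normalized witness matrix
\[
\bB \;:=\; (\bSigma_T - \bI)_H / \lambda \;\in\; \R^{|H|\times|H|},
\]
so that $\|\bB\|_\fr = 1$. Viewed as an element of $\R^{d\times d}$ by zero-extension, $\bB$ is supported on $H \times H$ and is therefore $k^2$-sparse in the sense of \Cref{def:stability-inliers}. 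Define the quadratic form $Y_x := (x - \mu_T)_H^\top \bB (x - \mu_T)_H$ and produce scores $f(x) := (Y_x - \tau)_+^p$ for suitable parameters $\tau$ and $p \in \{1,2\}$, following the standard truncated-quadratic-form template from the algorithmic robust statistics literature.

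The central comparison is $\sum_T Y_x$ versus $\sum_{S\cap T} Y_x$. Directly, $\E_T[Y_x] = \langle \bB, (\bSigma_T)_H\rangle = \lambda + \trace(\bB)$. For $S' := S\cap T$, stability of $S$ applied to the $k^2$-sparse unit-Frobenius matrix $\bB$ gives $|\langle \bB, \E_{S'}[(X-\mu)(X-\mu)^\top] - \bI\rangle| \le \delta^2/\epsilon$; combined with the first-moment stability bound $\|\E_{S'}[X - \mu]\|_{2,k} \le \delta$ and an a-priori bound on $\|\mu_T - \mu\|_{2,k}$ (maintained throughout the filter iterations via the sparse certificate lemma, \Cref{lem:certificate-lemma}), the cross terms arising from recentering $Y_x$ from $\mu_T$ to $\mu$ absorb into $O(\delta^2/\epsilon)$. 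Hence $\E_{S'}[Y_x] \le \trace(\bB) + O(\delta^2/\epsilon)$, and since $\lambda$ is a sufficiently large multiple of $\delta^2/\epsilon$, the at most $\epsilon|T|$ outliers in $T \setminus S$ must shoulder $\sum_{T\setminus S}(Y_x - \trace(\bB)) \ge (\lambda/2)|T|$. Choosing $\tau$ and $p$ according to the standard truncation template (using Hanson--Wright to kill Gaussian inlier tails) converts this aggregate deviation into the pointwise filter inequality $\sum_{S\cap T} f(x) \le \sum_{T\setminus S} f(x)$ with $\max_x f(x) > 0$.

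Runtime follows by inspection: forming $\bB$, evaluating each $Y_x$, and producing the scores cost $O(|T| \cdot (d + |H|^2))$, well within $d \cdot \poly(|T||H|)$. The main obstacle is that $\bB$ is \emph{data-dependent}, so a blackbox Hanson--Wright bound for a fixed quadratic form does not apply to inlier tails. This is exactly what the Frobenius part of \Cref{def:stability-inliers} handles: it uniformly controls $\E_{S'}$ over every $k^2$-sparse unit-Frobenius matrix, and hence over the specific $\bB$ extracted from $T$. Any residual tail contribution beyond $\tau$ is handled by a standard $\epsilon$-net or tail-stability augmentation, as in \cite{DiaKKPS19}, and costs at most logarithmic factors already absorbed into \Cref{lemma:stab-sample-complexity}.
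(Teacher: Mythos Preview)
Your approach is essentially the paper's: the same normalized witness $\bB=(\bSigma_T-\bI)_H/\lambda$, the same centered quadratic form, the same aggregate comparison $\E_T[\cdot]$ versus $\E_{S\cap T}[\cdot]$, and the same thresholded score. Two points deserve tightening.

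First, the recentering cross terms do \emph{not} absorb into $O(\delta^2/\epsilon)$ alone. The certificate lemma only gives $\|\mu_T-\mu\|_{2,k}\lesssim \delta+\sqrt{\epsilon\,\|\bSigma_T-\bI\|_{\op,k}}\lesssim \delta+\sqrt{\epsilon\lambda}$, so after expanding $(x-\mu_T)=(x-\mu)+(\mu-\mu_T)$ the inlier bound becomes $|\E_{S'}[Y_x]-\trace(\bB)|\lesssim \delta^2/\epsilon+\epsilon\lambda$, exactly as the paper computes. The extra $\epsilon\lambda$ is harmless (it is $\ll\lambda$ for $\epsilon$ small), but your stated bound is incorrect as written and the threshold $\tau$ must be set accordingly.

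Second, the last paragraph overcomplicates the conversion from the aggregate bound to the score inequality. No Hanson--Wright, no $\epsilon$-net, and no ``tail-stability augmentation'' are needed; the lemma is stated under the deterministic stability condition, which already controls $\E_{S'}[g]$ \emph{uniformly over all $(1-\epsilon)$-subsets $S'\subseteq S$}. The paper extracts the filter inequality from this via a short combinatorial claim: if $|\E_{S'}[h]|\le R$ for every such $S'$, then $f(x):=h(x)\mathbf{1}\{h(x)\ge 3R/\epsilon\}$ satisfies $\sum_S f\le 3R|S|$. Your $(Y_x-\tau)_+$ with $p=1$ can be handled by the same argument; the $p=2$ variant would require variance (fourth-moment) control that \Cref{def:stability-inliers} does not provide, so you should commit to $p=1$ (or the paper's hard-threshold form) rather than leave $p\in\{1,2\}$ open.
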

These scores can be used to filter points from $T$ such that on expectation over the algorithm's randomness, more outliers are removed than inliers~\cite{DiaKan22-book}.
We give a proof of \Cref{lem:sparse-filtering-lemma} in \Cref{app:preliminaries}.

Thus, if $\bSigma_T - \bI_d$ has large (sparse) Frobenius norm, then we can make progress by removing outliers.
The contribution to this norm from the \emph{diagonal} entries can be calculated efficiently in $O(dn)$ time, and if large, then can also be used to remove outliers.
Thus, we will assume that the corrupted set has already been pre-processed to satisfy the following:
\begin{restatable}[Preprocessing]{condition}{CondDataProcess}
\label{cond:idealistic-condition}
Let $T$ be an $\epsilon$-corrupted version of $S$, where $S$ is $(\epsilon,\delta,k)$-stable.
Suppose $T$ satisfies $ \|\diag(\bSigma_T - \bI_d)\|_{\fr,k^2} \leq \min\left( O(\delta^2/\epsilon), 0.5 \right)$.
\end{restatable}
For completeness, we give details in \Cref{app:preprocessing}.
The next result, also proved in \Cref{app:preprocessing}, shows that any further small modifications of the preprocessed sets retains small sparse operator norm. 
\begin{restatable}{claim}{LemInheritStability}
\label{lem:inherit-stability-mean}
Let $C$ be a large enough constant $C>0$.
 Let $T'' \subset T'$ be two $O(\eps)$-contamination of $S$ such that $S$ is an $(C\eps,\delta,k)$-stable with respect to $\mu$.
 Suppose that $\|\diag(\bSigma_{T'} - \bI_d) \|_{\op,k} \lesssim \delta^2/\eps$. Then
 $\|\diag(\bSigma_{T'} - \bI_d) \|_{\op,k} \lesssim \delta^2/\eps$
\end{restatable}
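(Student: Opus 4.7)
I first interpret the stated claim as asserting the bound for $T''$ rather than $T'$ in the conclusion (the hypothesis and conclusion as written are identical, which I take to be a typo). Since $\diag(\bA)$ is a diagonal matrix and $\|D\|_{\op,k} = \max_i |D_{ii}|$ for any diagonal $D$ and $k\geq 1$, the goal reduces to showing $|(\bSigma_{T''})_{ii} - 1| \lesssim \delta^2/\eps$ for every coordinate $i \in [d]$. Fix $i$, and write $T' = S' \sqcup O'$, $T'' = S'' \sqcup O''$ where $S' := S \cap T'$, $S'' := S \cap T''$, $O' := T' \setminus S$, $O'' := T'' \setminus S$; since both are $O(\eps)$-contaminations of $S$, $|O'|,|O''| \leq O(\eps)|S|$ and $|T'|,|T''|=\Theta(|S|)$, and $T'' \subset T'$ gives $O'' \subseteq O'$.

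The main identity I would use is
\[
(\bSigma_{T''})_{ii} = \E_{T''}[(x_i - \mu_i)^2] - ((\mu_{T''})_i - \mu_i)^2,
\]
since $(\mu_{T''})_i$ minimizes $c \mapsto \E_{T''}[(x_i - c)^2]$. Splitting the right-hand expectation along $S''$ and $O''$, stability of $S$ applied to $S''\subseteq S$ with the test matrix $e_i e_i^\top \in \Hsparse{k^2}$ (resp.\ with the test vector $e_i$) gives $\E_{S''}[(x_i-\mu_i)^2] = 1 \pm O(\delta^2/\eps)$ and $|\E_{S''}[x_i - \mu_i]| \leq \delta$. So the two nontrivial quantities to bound are $\tfrac{|O''|}{|T''|}\E_{O''}[(x_i-\mu_i)^2]$ and $((\mu_{T''})_i - \mu_i)^2$.

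The main obstacle, which is where the hypothesis on $T'$ enters, is extracting outlier second-moment information from the scalar bound $(\bSigma_{T'})_{ii} - 1 \lesssim \delta^2/\eps$ and transferring it to $T''$. Set $a := (\mu_{T'})_i - \mu_i$, $b := (\mu_{O'})_i - \mu_i$, $\alpha := |O'|/|T'| \leq O(\eps)$. Expanding
\[
(\bSigma_{T'})_{ii} = (1-\alpha)\big(\E_{S'}[(x_i - \mu_i)^2] - 2a\E_{S'}[x_i - \mu_i] + a^2\big) + \alpha\big(\Var_{O'}(x_i) + (b-a)^2\big),
\]
and applying stability with AM--GM ($2|a|\delta \leq a^2 + \delta^2$) to the inlier summand to get a lower bound of $1 - O(\delta^2/\eps)$, the hypothesis forces
\[
\alpha\,\Var_{O'}(x_i) + \alpha(b-a)^2 \;\lesssim\; \delta^2/\eps.
\]
Combining the bound on $\alpha(b-a)^2$ with the elementary identity $a = (1-\alpha)((\mu_{S'})_i - \mu_i) + \alpha b$ (which gives $|a| \leq \delta + \alpha|b|$ by stability), a short algebraic manipulation yields $\alpha b^2 \lesssim \delta^2/\eps$, hence $\alpha\E_{O'}[(x_i - \mu_i)^2] = \alpha(\Var_{O'}(x_i) + b^2) \lesssim \delta^2/\eps$. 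Because $O'' \subseteq O'$ and $|T''| \asymp |T'|$, this transfers directly to $\tfrac{|O''|}{|T''|}\E_{O''}[(x_i - \mu_i)^2] \lesssim \delta^2/\eps$, and in particular $\E_{T''}[(x_i-\mu_i)^2] = 1 \pm O(\delta^2/\eps)$.

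For the mean shift, I decompose $(\mu_{T''})_i - \mu_i$ into its $S''$ and $O''$ contributions; stability bounds the inlier contribution by $O(\delta)$, and Cauchy--Schwarz on the outlier contribution gives
\[
\tfrac{|O''|}{|T''|}|(\mu_{O''})_i - \mu_i| \;\leq\; \sqrt{\tfrac{|O''|}{|T''|}}\cdot\sqrt{\tfrac{|O''|}{|T''|}\E_{O''}[(x_i - \mu_i)^2]} \;\lesssim\; \sqrt{\eps}\cdot\sqrt{\delta^2/\eps} \;=\; \delta.
\]
Thus $|(\mu_{T''})_i - \mu_i| \lesssim \delta$ and the mean-shift term is $O(\delta^2) \leq O(\delta^2/\eps)$ (using $\eps \leq \delta$). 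Substituting both ingredients back into the main identity gives $|(\bSigma_{T''})_{ii} - 1| \lesssim \delta^2/\eps$ uniformly in $i$, which is the desired bound.
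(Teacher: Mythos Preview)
Your proof is correct, but it takes a different and noticeably longer route than the paper's. The paper exploits the pairwise representation $v^\top \bSigma_T v \propto \tfrac{1}{|T|^2}\sum_{x,y\in T}(v^\top(x-y))^2$ to get both directions by pure monotonicity of sums of nonnegative terms: the lower bound on $v^\top \bSigma_{T''} v$ comes by dropping all pairs not in $(S\cap T'')^2$ and invoking stability; the upper bound (only needed for $1$-sparse $v$) comes by enlarging the sum from $T''\times T''$ to $T'\times T'$ and invoking the hypothesis on $\diag(\bSigma_{T'})$. No explicit decomposition into inliers/outliers, no outlier second-moment extraction, and no mean-shift control is needed.

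Your argument instead works coordinate by coordinate with the identity $(\bSigma_{T''})_{ii}=\E_{T''}[(x_i-\mu_i)^2]-((\mu_{T''})_i-\mu_i)^2$, then expands the hypothesis on $T'$ to deduce $\alpha\,\E_{O'}[(x_i-\mu_i)^2]\lesssim \delta^2/\eps$, transfers this to $O''\subseteq O'$, and bounds the mean shift via Cauchy--Schwarz. This is all valid (in particular your ``short algebraic manipulation'' to get $\alpha b^2\lesssim \delta^2/\eps$ does go through, using $\eps\leq\delta\leq\delta^2/\eps$), and it yields as a byproduct an explicit bound on the outlier contribution that the paper never isolates. The tradeoff is length and bookkeeping: the paper's pairwise-sum trick sidesteps the need to ever separate the centering term or to analyze the outliers' moments at all.
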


\subsection{Detecting Correlation in Subquadratic Time}
\label{sec:detecting_correlation_in_subquadratic_time}

We will use \cite[Theorem 2.1]{Valiant15} that can detect $\rho$-correlated coordinates in subquadratic time if there are not too many $\tau$-correlated coordinates for $\tau \ll \rho$, say $\rho^{3}$.
\begin{theorem}[Fast Correlation Detection~\cite{Valiant15}]
\label{thm:robust-correlation-detection}
Let $\rho \in (0,1)$ be strong correlation threshold and $\tau \in (0,1)$ be margin threshold with $\rho > 12 \tau$.
Let $T$ be a set of $n$ vectors in  $\R^d$ such that
there are at most $s$ $\tau$-correlated coordinate pairs.
Then, there is an algorithm that takes $\rho, \tau, T$ as input, and, with probability $1-o(1)$, will output all $\rho$-coordinate pairs.
 Additionally, the runtime of the algorithm is 
$\left( sd^{0.62} + d^{1.62 + 2.4\frac{\log(4/\rho)}{\log(1/3\tau)}}\right) \poly(n, \log d,1/\tau)$.
\end{theorem}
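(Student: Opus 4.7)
The proof follows Valiant's algorithmic framework for subquadratic correlation detection via fast rectangular matrix multiplication. Arrange the $n$ samples as columns of a matrix $\bX \in \R^{d \times n}$, normalized so that the correlation between coordinates $i$ and $j$ is the row inner product $\langle \bX_i, \bX_j\rangle$; the target is to find every pair with $|\langle \bX_i, \bX_j\rangle| \geq \rho$, given that at most $s$ pairs exceed $\tau$. The naive approach computes $\bX\bX^\top$ in $\Theta(d^2 n)$ time, so we need both signal amplification and compression to beat this.

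The plan is first to amplify the signal-to-noise gap and then to compress via bucketed matrix multiplication. Apply an $r$-wise tensor reduction with $r \asymp \log(4/\rho)/\log(1/3\tau)$, mapping each row $\bX_i$ to a sketched version of $\bX_i^{\otimes r}$ of dimension $\poly(n, 1/\tau)$ (e.g., via a Johnson--Lindenstrauss-style projection that preserves inner products), so that $\rho$-correlated pairs become $\rho^r$-correlated while $\tau$-correlated pairs drop to $\tau^r$. The margin condition $\rho > 12\tau$ ensures the resulting ratio $(\rho/3\tau)^r$ is large enough to dominate subsequent aggregation noise. Next, partition the $d$ sketched rows uniformly at random into $m = d/g$ buckets $B_1,\ldots,B_m$ and form
\[
y_\ell = \sum_{i \in B_\ell} \sigma_i \widetilde{\bX}_i,
\]
with independent Rademacher signs $\sigma_i$, where $\widetilde{\bX}_i$ is the sketched row. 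Compute the full $m\times m$ Gram matrix of the $y_\ell$ via rectangular fast matrix multiplication in time $m^{\omega(1,\log_m n,1)}\poly(n)$. Expanding
\[
\langle y_\ell, y_{\ell'}\rangle = \sum_{i\in B_\ell,\, j\in B_{\ell'}} \sigma_i\sigma_j\,\langle \widetilde{\bX}_i,\widetilde{\bX}_j\rangle,
\]
a signal term of magnitude $\rho^r$ appears in each bucket pair that contains a $\rho$-correlated pair, while the remaining $g^2 - 1$ mean-zero terms and the spillover from the $s$ exceptional pairs form controllable noise. Threshold the Gram matrix to produce candidate bucket pairs and recurse (or brute-force) within each to recover the actual coordinate pairs.

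The runtime decomposes as follows: (i) the $d^{1.62 + O(\log(4/\rho)/\log(1/3\tau))}\poly(n, \log d, 1/\tau)$ cost of the rectangular matrix multiplication at the optimized bucket size $g$, where the $1.62$ exponent comes from the balance point of the best current rectangular matrix multiplication bounds; (ii) an additive $s\,d^{0.62}$ term for processing bucket pairs ``polluted'' by one of the $s$ pre-existing $\tau$-correlated pairs (these pairs can mask signal in their bucket and must be re-examined to avoid false negatives); and (iii) a $\poly(n,\log d,1/\tau)$ overhead absorbing the sketch dimension, the repetitions needed for high-probability success, and the hashing cost. The main technical obstacle is the noise analysis: one must show via Chernoff/Hanson--Wright concentration over the random signs and random partition that, after $r$-fold amplification, the aggregate contribution of the $g^2 - 1$ non-signal pairs in each bucket pair, together with the spillover from the $s$ exceptional pairs, is a constant factor below $\rho^r$. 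The condition $\rho > 12\tau$ is what guarantees this separation survives the compression, and driving $r$ to a constant multiple of $\log(4/\rho)/\log(1/3\tau)$ is what ties the exponent in the runtime to the margin between $\rho$ and $\tau$.
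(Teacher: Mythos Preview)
Your proposal sketches the \emph{internal} mechanics of Valiant's algorithm (tensor amplification, random bucketing, rectangular fast matrix multiplication, noise separation), whereas the paper does something much more modest: it treats Valiant's result for \emph{binary} vectors (\cite[Theorem~2.1]{Valiant15}) as a black box and only proves a short \emph{reduction} from the real-valued setting of \Cref{thm:robust-correlation-detection} to that binary setting. Concretely, the paper forms the centered data matrix $X\in\R^{d\times n}$ (so that coordinate correlations are normalized row inner products), right-multiplies by a Gaussian matrix $G\in\R^{n\times m}$ with $m\asymp\log(d)/\tau^2$, and takes elementwise signs $Y'=\sign(XG)$; by the standard random-hyperplane identity (\cite[Lemma~4.1]{Valiant15}) the row correlations of $Y'$ equal $\frac{2}{\pi}\arcsin(\cdot)$ of the original ones up to $\pm\tau$, so a $\rho$-correlated pair becomes $(\rho/4)$-correlated and a $\tau$-correlated pair becomes at most $3\tau$-correlated in $Y'$. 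One then invokes the binary theorem with $n'=d$, $d'=m$, $\rho'=\rho/4$, $\tau'=3\tau$, which is exactly where the constants $4/\rho$, $3\tau$, and the hypothesis $\rho>12\tau$ in the statement come from.

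What each approach buys: the paper's reduction is a few lines, explains the otherwise mysterious constants in the runtime exponent and the margin condition, and avoids reproving any of the delicate noise analysis you outline. Your route is more self-contained but is essentially a re-derivation of Valiant's theorem; as written it does not account for the real-to-binary issue (Valiant's stated bounds are for $\{\pm1\}$-vectors), nor does it pin down why the specific thresholds $\rho/4$ and $3\tau$ (hence the factor $12$) arise. If you want to keep your approach, you would either need to argue Valiant's analysis goes through verbatim for real-valued rows, or insert the same sign-hashing step the paper uses.
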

The version above follows from \cite[Theorem 2.1]{Valiant15} for the binary vectors  using standard reductions, for example, \cite[Lemma 4.1]{Valiant15}.
For completeness, we state the guarantees of \cite[Theorem 2.1]{Valiant15} and show the reduction in \Cref{app:preliminaries}.

\section{Robust Sparse Mean Estimation in Subquadratic Time} %
\label{sec:sparse_certificates_in_subquadratic_time}
In this section, we explain our main technical contribution: a fast algorithm for robust sparse mean estimation under the stability condition.
\begin{restatable}[Robust Sparse Mean Estimation in Subquadratic Time]{theorem}{ThmMainResultFormal}
\label{thm:formal-main-result}
Let $c$ be a small enough absolute constant and $C$ be a large enough absolute constant.
Consider
 the corruption rate $\eps \in (0,\eps_0)$, where $\eps_0$ is a small enough absolute constant.
Let $k \in \N$ be the sparsity parameter and $q \in \N$  the correlation decay parameter with $q\geq 3$.
Let $T$ be an $\eps$-corrupted version of a set $S$, where $S$ satisfies $(C\eps,\delta,k')$-stability with respect to $\mu$ for $k':= \frac{(Ck)^q}{(\delta^2/\eps)^{q-1}}$ and $\delta^2/\eps \leq c$.
Then there is a randomized algorithm (\Cref{alg:wrapper-algorithm}) that takes as inputs $T$,  $\epsilon$, $\delta$, $k $, 
and  $q$ and produces an estimate $\widehat{\mu}$ such that, with a probability at least $1-1/d^2$ over the randomness of the algorithm, we have the following guarantees:
\begin{itemizec}
	 \item (Error) $\|\widehat{\mu} - \mu\|_{2,k} \lesssim \delta$.
	\item (Runtime) The algorithm runs in time at most
$d^{1.62 + \frac{3}{q}}\poly\left( |T|, \log d, k^q, 1/\eps^q \right)$.

\end{itemizec}
\end{restatable}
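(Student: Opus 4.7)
The plan is to embed the algorithm within the randomized filtering template of \Cref{thm:generic-randomized-alg}. After an initial preprocessing pass that establishes \Cref{cond:idealistic-condition} (controlling the diagonal of $\bSigma_{T_i} - \bI_d$), the outer loop starts at $T_1 = T$ and at each iteration either declares success and returns $\mu_{T_i}$, or produces filtering scores $f: T_i \to \R_+$ fed to the template. Success is declared once we have certified $\|\bSigma_{T_i}-\bI_d\|_{\op,k'} \lesssim \delta^2/\eps$; the Sparse Certificate Lemma (\Cref{lem:certificate-lemma}) then yields $\|\mu_{T_i} - \mu\|_{2,k} \leq \|\mu_{T_i} - \mu\|_{2,k'} \lesssim \delta$. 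The template guarantees that every intermediate $T_i$ remains a $10\eps$-corruption of $S$ (so the $k'$-stability of $S$ continues to hold) and, together with \Cref{lem:inherit-stability-mean}, that \Cref{cond:idealistic-condition} is preserved through the iterations.

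The heart of the proof is a subquadratic subroutine that either (i) certifies $\|\bSigma_{T_i}-\bI_d\|_{\fr,(k')^2} \lesssim \delta^2/\eps$ (which upper-bounds the sparse operator norm, triggering success), or (ii) returns a set $H \subset [d]$ with $|H| \leq k'$ and $\|(\bSigma_{T_i}-\bI_d)_H\|_{\fr} \gtrsim \delta^2/\eps$, which the Sparse Filtering Lemma (\Cref{lem:sparse-filtering-lemma}) converts into valid scores in $d\cdot\poly(k',|T_i|)$ time. The subroutine will use \Cref{thm:robust-correlation-detection} with thresholds $\rho$ and $\tau = \rho^q$ tuned so that (a) the absence of any $\rho$-correlated off-diagonal pair, together with \Cref{cond:idealistic-condition} on the diagonal, upper-bounds $\|\bSigma_{T_i}-\bI_d\|_{\fr,(k')^2}$ at the target level, and (b) finding enough $\tau$-correlated pairs produces a set $H$ via $\proj_{k'}$ with the required Frobenius norm by \Cref{def:set-projection}. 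The exponent $\tau = \rho^q$ matches the runtime profile of \Cref{thm:robust-correlation-detection}: its cost becomes $(sd^{0.62} + d^{1.62+O(1/q)})\poly(n,\log d,1/\tau)$, subquadratic as long as the count $s$ of $\tau$-correlated pairs is $O(d)$.

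The catch is that an adversary can easily make $s$ enormous by engineering the outliers' pairwise correlations, so the subroutine's first step is to estimate $s$ by random sampling. Uniformly sample $M = \tilde{\Theta}(d^{1.5})$ off-diagonal coordinate pairs, compute their empirical correlations in $O(Mn)$ time, and count how many exceed $\tau$. If this count is above a threshold indicating $s \gtrsim d$, then the sampled $\tau$-correlated pairs themselves form a set $H_{\text{pair}}$, and $H := \proj_{k'}(H_{\text{pair}})$ gives $|H| \leq k'$ and $\|(\bSigma_{T_i}-\bI_d)_H\|_{\fr}^2 \gtrsim k'\tau^2$ by \Cref{def:set-projection}, which is $\gtrsim (\delta^2/\eps)^2$ precisely by the choice $k' = (Ck)^q/(\delta^2/\eps)^{q-1}$. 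Otherwise, a standard concentration argument implies $s = o(d)$ with high probability, and \Cref{thm:robust-correlation-detection} may be safely invoked; if it returns any $\rho$-correlated pairs, these feed $\proj_{k'}$ to produce $H$, and if it returns none, the certification branch is triggered.

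The main obstacle is the joint calibration of $\rho$, $\tau = \rho^q$, $k'$, the sampling size $M$, and the empirical count threshold: the two branches of the dichotomy must be airtight, the Frobenius certificate must meet the $\delta^2/\eps$ target in both branches, and $|H| \leq k'$ must hold throughout. Once the parameters are fixed, the $O(|T|)$-iteration budget of \Cref{thm:generic-randomized-alg}, combined with a union bound over the internal randomness of sampling and of \Cref{thm:robust-correlation-detection}, controls the overall failure probability at $1/d^2$, and summing the subquadratic per-iteration cost yields the claimed runtime $d^{1.62+3/q}\poly(|T|,\log d,k^q,1/\eps^q)$.
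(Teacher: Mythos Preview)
Your plan is essentially the paper's: preprocess the diagonal, then iterate the randomized filter (\Cref{thm:generic-randomized-alg}) with a stopping/scoring subroutine built from random pair-sampling followed by \Cref{thm:robust-correlation-detection}, feeding any returned $H$ into \Cref{lem:sparse-filtering-lemma}. This is exactly \Cref{prop:win-win-guarantee} wrapped inside \Cref{alg:wrapper-algorithm}.

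There is, however, one calibration error that would block the argument as written. You aim to certify $\|\bSigma_{T_i}-\bI_d\|_{\fr,(k')^2} \lesssim \delta^2/\eps$, but the ``no $\rho$-correlated pair'' branch only yields $\|\offdiag(\bSigma_{T_i})\|_{\fr,(k')^2} \le O(\rho k')$, forcing $\rho \lesssim (\delta^2/\eps)/k'$; with $\tau=\rho^q$ this makes $\tau\sqrt{k'}$ far smaller than $\delta^2/\eps$, so the ``many $\tau$-correlated pairs'' branch can no longer deliver $\|(\bSigma_{T_i}-\bI)_H\|_\fr \gtrsim \delta^2/\eps$ with $|H|\le k'$. (Concretely, the two constraints combine to $(\delta^2/\eps)^{q-1}\ge (k')^{q-1/2}$, which is impossible for $k'\ge 1$ and $\delta^2/\eps<1$.) The paper instead certifies only $\|\cdot\|_{\fr,k^2}$, hence $\|\cdot\|_{\op,k}$, at the \emph{original} sparsity $k$: since $(C\eps,\delta,k')$-stability implies $(C\eps,\delta,k)$-stability, \Cref{lem:certificate-lemma} applied with $k$ already gives $\|\mu_{T_i}-\mu\|_{2,k}\lesssim\delta$, and now the certification constraint relaxes to $\rho k \lesssim \delta^2/\eps$, i.e., $\rho\asymp(\delta^2/\eps)/k$, leaving room for the filter branch. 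This asymmetry---filter with $|H|$ as large as $k'$, but certify at sparsity $k$---is what makes the parameters close. A secondary gap: your dichotomy ``any $\rho$-correlated pairs $\Rightarrow$ filter, none $\Rightarrow$ certify'' is too coarse; when \Cref{thm:robust-correlation-detection} returns a nonempty but small set of pairs, the resulting $H$ can have $\|(\bSigma_{T_i}-\bI)_H\|_\fr<\delta^2/\eps$ and cannot drive \Cref{lem:sparse-filtering-lemma}. The paper's subroutine handles this intermediate case by explicitly checking that Frobenius norm on the returned coordinates and outputting ``$\perp$'' when it is already small (see lines~\ref{line:FronNormH_2-app} onward in \Cref{alg:win-win-algorithm-app}).
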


\paragraph{Organization} In \Cref{sec:alg-blueprint}, we highlight the key technical challenges of obtaining $o(d^2)$ runtime algorithm and our proposed fix. We record the guarantees of the key subroutines in \Cref{sec:sparse-certificate-key-procedures}. The proof of \cref{thm:formal-main-result} is given in \Cref{app:proof_of_cref_thm_formal_main_result}, and we finally show in \Cref{app:proof_of_cref_thm_informal_main_result}  how \Cref{thm:formal-main-result} implies \Cref{thm:informal-main-result}.

\subsection{Algorithmic Blueprint of \Cref{thm:formal-main-result}}

\label{sec:alg-blueprint}
To establish \Cref{thm:formal-main-result}, we start with the following blueprint for robust sparse mean estimation, with the aim of implementing it in $o(d^2)$ time.
\begin{algorithm}[H]
\caption{Algorithmic Blueprint}
\label{alg:outline}
\begin{algorithmic}[1]
	\State  Compute $\|\bSigma_T - \bI_d\|_{\fr,k^2}$ approximately. \label{item:blueprint-frob-norm}
	\While{$\|\bSigma_T - \bI_d\|_{\fr,k^2}$ is large}
		\State 
		Let $H$ be the corresponding coordinates with large Frobenius norm.
		\label{item:blueprint-H-matrix}
		\State
		 Filter $T$ using $H$ in  \Cref{lem:sparse-filtering-lemma}. 
	\EndWhile
	\State  Output the sample mean $\mu_T$. 
\end{algorithmic}
\end{algorithm}
The problem in implementing this blueprint naively is that Steps~\ref{item:blueprint-frob-norm} and \ref{item:blueprint-H-matrix} in \Cref{alg:outline} take $d^2$ time.
However, these are the only two bottlenecks:
Each filtering step takes only $d\poly(k,n) = d\poly(k/ \eps)$ time, and there are at most $n = \poly(k/\eps)$ iterations.
As we describe below, our goal in this section is to use \Cref{thm:robust-correlation-detection} to speed up the Steps~\ref{item:blueprint-frob-norm} and \ref{item:blueprint-H-matrix}.

\paragraph{Usefulness of Fast Correlation Detection}
\looseness=-1We will run \Cref{thm:robust-correlation-detection} to identify the off-diagonal indices $(i,j) \in [d] \times [d]$ that are correlated. 
Recall that \Cref{thm:robust-correlation-detection} takes two arguments $\rho$ (the threshold for strong correlation) and $\tau$ (the margin threshold).
Suppose we fix $\rho$  to be small, roughly $\frac{\delta^2}{\epsilon \poly(k)}$. Then \Cref{thm:robust-correlation-detection} reports back all $\rho$-correlated coordinate pairs. For the time to be subquadratic in $d$, we require the number of $\tau$-correlated pairs to be small. 
Let $H_{\text{pair}} \subset [d]\times [d]$ be the set of coordinate pairs in the output, and  define $H$ to be the set of all coordinates that appear in $H_{\text{pair}}$; $\proj(H_{\text{pair}})$ from \Cref{def:set-projection}  to be more formal.
Then, one of the following two cases must be true:
\begin{itemizec}
	\item $|H|$ is small ($\poly(k/\epsilon)$): 
	Since coordinates in $H^\complement$ have correlations at most $\rho$, then we know that 
	$\|(\bSigma_T - \bI)_{H^\complement}\|_{\fr,k^2} \leq k\rho$, which can be made less than $\delta^2/\epsilon$ for $\rho$ small enough ($\rho \lesssim \delta^2/(k\epsilon )$). Thus, the sample mean on $H^\complement$ is a good estimate in $\ell_{2,k}$ norm (cf.\ \Cref{lem:certificate-lemma}). On $H$, we can use a dense mean estimation algorithm, which would be fast as $|H| = \poly(k/\epsilon)$.

\item $|H|$ is large: Since there are many coordinate pairs with correlation at least $\rho$, we can filter and iterate as follows:
If we take any $H' \subset H$ of size $k'$, then each row and column in $(\bSigma_T - \bI)_{H'}$ has an entry larger than $\rho$ (in absolute value), and thus  $\|(\bSigma_T - \bI)_{H'}\|_{\fr} \geq \rho \sqrt{k'}$.
By taking $k'$ large enough (larger than $\delta^4/(\rho^2\epsilon^2)$), the resulting quantity will be bigger than $\delta^2/\epsilon$, allowing us to filter if the inliers satisfy stability with parameter $k'$ (cf.\ \Cref{lem:sparse-filtering-lemma}).
\end{itemizec}

Thus, we can implement Steps \ref{item:blueprint-frob-norm} and \ref{item:blueprint-H-matrix} in \Cref{alg:outline} fast using 
\Cref{thm:robust-correlation-detection}, so long as (i) inliers satisfy $(\epsilon,\delta,k')$ stability, (ii) we choose
$\rho^2 \asymp \frac{\delta^4}{\epsilon^2 k^2}$  and $k' \asymp \frac{\delta^4}{\epsilon^2 \rho^2} \asymp k^2$,
and (iii) \emph{there are not too many $\tau$-correlated pairs.}

\paragraph{Challenges in Using Fast Correlation Detection}
Suppose we set $\tau = \rho^q$ for some $q \geq 3$.\footnote{We choose this parameterization because the runtime of \Cref{thm:robust-correlation-detection} depends on $\log(1/\rho)/\log(1/\tau) = 1/q$.}
Looking at \Cref{thm:robust-correlation-detection}, we obtain a subquadratic time algorithm \emph{only} if $s$, the number of $\tau$-correlated pairs for $\tau := \rho^q$, is much smaller
than $d^{1.38}$; In fact, we will impose $s$ to be less than $d$ so that it is not the dominant factor in the runtime.
A priori, there is no reason for there to be
at most $d$ coordinate pairs (out of $d^2$ pairs) that are $\tau$-correlated.
Thus, we need a way to detect this situation
and find an alternative way to make progress.

\paragraph{Proposed Solutions: Efficient Detection and Filtering}
We begin with the detection procedure.
If there are $\Omega(d)$ many $\tau$-correlated pairs, then a pair
 sampled uniformly at random has a probability of $\Omega(d^{-1})$ of being $\tau$-correlated.
Thus, if we check many random coordinate pairs, superlinear but subquadratic, then we can accurately guess $s$.

In particular, let $U$ be the number of $\tau$-correlated pairs that were observed out of $m$ random pairs (sampled with replacement). 
Then, $U$ is distributed roughly as $\text{Ber}(m, s/d^2)$.
Binomial concentration implies that 
 $s \lesssim   (d^2U/m) + (\log d)(d^2/m)$ with probability at least $(1 - 1/d^2)$.
Taking $m$ to be $d^{1.5}$, we see that $s \lesssim \sqrt{d} U + \sqrt{d}\log d$. 
Thus, we obtain a fast (randomized) check to see if $s$ is less than $d$ than runs in $mn = d^{1.5} \poly(k/\eps)$ time: simply check if $U\leq \sqrt{d}$.
Thus, it remains to ensure that we can make progress when $U$ is large, in particular, $\Omega(\sqrt{d})$.

Let $H'_\text{pair}$ be the $\tau$-correlated coordinates that were observed in the above procedure; $U= |H'_\text{pair}|$. 
Crucially, we are in the regime when $|H'_\text{pair}| \geq \sqrt{d}$.
We want to use a small subset of the coordinates in $H'_\text{pair}$ to filter outliers. 
Let $H$ be $k''$-sized set of coordinates that appear in $H'_\text{pair}$; formally, $H := \proj_{k''}(H'_\text{pair})$.
Thus, each row and column in $(\bSigma_T - \bI)_{H}$ has an entry larger than $\tau$ in absolute value, implying that $\|(\bSigma_T - \bI)_{H}\|_\fr \geq \tau \sqrt{k''}$.
Therefore, we can use this $H$ to filter using \Cref{lem:sparse-filtering-lemma}
as long as the original set is also $(\epsilon,\delta,k'')$-stable and $ \sqrt{k''} \tau \gg \delta^2/\epsilon$, i.e., $k'' \asymp \frac{\delta^4}{ \epsilon^2 \tau^2} \asymp\frac{\delta^4}{ \epsilon^2 \rho^{2q}} \asymp \frac{\delta^{4}}{ \epsilon^2 (\delta^4 /k\epsilon^2 )^q} \asymp \frac{k^{2q}}{ (\delta^4/\epsilon^2)^{q-1}}$.

Observe that we require the inliers to satisfy $(\epsilon,\delta,k'')$- stability for $k'' = \poly(k^q/\epsilon^q)$. 
By \Cref{lemma:stab-sample-complexity}, a set of $(k'')^2/\epsilon^2$ many i.i.d.\ points will satisfy this stability condition, giving us the sample complexity.

\subsection{Sparse Certificates and Filters in Subquadratic Time}
\label{sec:sparse-certificate-key-procedures}
We now give formal guarantees of the key procedures outlined above.
First, consider the procedure
that randomly samples the coordinates to estimate the number of weakly-correlated coordinates.
\begin{restatable}{lemma}{LemRandomlyCheckCoordinates}
\label{lem:randomly-check-coordinates}
Let $T \subset \R^d$ be a multiset with covariance matrix $\Sigma'$.
Let $m \in N$ be the sampling parameter.
Let $J_* \subset [d] \times [d]$ be the off-diagonal coordinate pairs such that $|\bSigma'_{i,j}| \geq \tau$.
Then \Cref{alg:randomlycheck} takes $\tau$, $m$, and $T$ as input and returns a set $J \subset J_*$ in time $O(m|T|)$ such that with probability $ 1- 1/d^2$,
$|J| \geq \frac{m |J_*|}{4d^2}   - 16 \log d$. 
\end{restatable}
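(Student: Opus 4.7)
The algorithm is the natural one hinted at in \Cref{sec:alg-blueprint}: draw $m$ off-diagonal coordinate pairs $(i_1, j_1), \ldots, (i_m, j_m)$ independently and uniformly from $\{(i,j) \in [d] \times [d] : i \neq j\}$, compute the empirical covariance entry $\bSigma'_{i_\ell, j_\ell} = \E_T[(X_{i_\ell} - \mu'_{i_\ell})(X_{j_\ell} - \mu'_{j_\ell})]$ for each sampled pair (first computing the two involved coordinate means in a single pass over $T$, then the cross term), and include $(i_\ell, j_\ell)$ in $J$ iff $|\bSigma'_{i_\ell, j_\ell}| \geq \tau$. Each entry is computed in $O(|T|)$ time, so the total running time is $O(m|T|)$, and $J \subseteq J_*$ by construction.

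For the quantitative lower bound, observe that each sample lies in $J_*$ independently with probability $p = |J_*|/(d^2 - d) \geq |J_*|/d^2$, so $|J|$ stochastically dominates $\mathrm{Bin}(m, |J_*|/d^2)$, whose mean is $\mu := m|J_*|/d^2$. The multiplicative Chernoff bound then gives $\P[|J| < \mu/2] \leq \exp(-\mu/8)$.

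A short case split completes the argument. If $\mu \geq 16 \log d$, then the Chernoff failure probability is at most $1/d^2$, and on the success event $|J| \geq \mu/2 \geq m|J_*|/(4d^2) \geq m|J_*|/(4d^2) - 16 \log d$. If instead $\mu < 16 \log d$, then the target quantity $m|J_*|/(4d^2) - 16\log d = \mu/4 - 16 \log d$ is strictly negative, so the desired inequality $|J| \geq m|J_*|/(4d^2) - 16\log d$ holds deterministically. In either case the conclusion holds with probability at least $1 - 1/d^2$.

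There is no substantive obstacle. The only small care needed is this bifurcation on the size of $\mu$, which is precisely what the additive $16 \log d$ slack in the statement buys: it absorbs the regime where $\mu$ is too small for multiplicative concentration to yield a meaningful lower bound, while in the concentrated regime a standard Chernoff is more than enough to recover a constant-factor lower bound on $|J|$.
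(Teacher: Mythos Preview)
Your proposal is correct and follows essentially the same approach as the paper: both recognize that $|J|$ is (stochastically dominated by) a binomial with mean $\mu \approx m|J_*|/d^2$ and then apply a concentration inequality. The only cosmetic difference is that the paper invokes the square-root binomial bound $|\sqrt{X}-\sqrt{np}|\le 2\sqrt{\log(1/\delta)}$, which directly yields the additive $16\log d$ slack, whereas you achieve the same thing via multiplicative Chernoff together with the case split on $\mu \gtrless 16\log d$; both routes are equally valid and equally short.
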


\begin{restatable}{algorithm}{AlgRandomlyCheck}
\caption{\textsc{RandomlyCheckCoordinates}}
\label{alg:randomlycheck}
\begin{algorithmic}[1]
\State
Let $H_{\text{pair}} \subset [d] \times [d]$ of size $m$, with $(i,j) \in H_{\text{pair}}$ sampled i.i.d.\ from off-diagonal elements.
 	\State Let $J  \gets \{ (i,j) \in H_{\text{pair}}: |\bSigma'_{i,j}| \geq \tau \}$ for $\bSigma'= \bSigma_T$ 
 	\State \Return $J$.\end{algorithmic}
\end{restatable}
\begin{proof}
To show correctness, we shall use the following concentration inequality for Binomials: If $X \sim \Ber(n,p)$, then with probability $1- \delta$, $|\sqrt{X} - \sqrt{np}| \leq 2\sqrt{\log(1/\delta)}$. See, for example, \cite[Equations (15.21) and (15.22)]{PolWu23}.
In particular, with probability $1-1/d^4$,
$ \sqrt{X} \geq \sqrt{np} - 4 \sqrt{\log d}$,
which implies $X \geq 0.25np - 16 \log d$.\footnote{We use that if $a,b,c \in \R_+$ then $a \geq b - c$ implies that $a^2 \geq b^2/4 - c^2$.
The proof is as follows: if $b \geq 2c$ then $ a \geq b/2$ and thus $a \geq b^2/4 \geq b^2/4 - c^2$; otherwise $ b^2/4 \leq c^2  $ and  thus $a^2 \geq 0 \geq b^2/4 - c^2$ holds trivially.}

The probability that a single pair in $H_{\text{pair}}$ has correlation of magnitude at least $\tau$ is exactly $(|J_*|/d(d-1))$, and  thus $|J| \sim \Ber(m, |J_*|/(d(d-1)))$.
Therefore, applying the Binomial concentration,
with probability $1- 1/d^2$,
it holds that $|J| \geq \frac{m |J_*|}{4d^2} - 16 \log d$.
The claim about the runtime is immediate.

\end{proof}

Combining 
\Cref{alg:randomlycheck}
with \Cref{thm:robust-correlation-detection}, 
\Cref{alg:win-win-algorithm-app}
either returns a small subset of coordinates with large Frobenius norm or indicates 
when all tiny subsets have small Frobenius norm.

\begin{restatable}[Subroutine to Identify Corrupted Coordinates \Cref{alg:win-win-algorithm-app}]{proposition}{PropWinWinGuarantee}
\label{prop:win-win-guarantee}
Suppose we are given a corrupted set $T$, Frobenius threshold $ \kappa \in \R$, correlation threshold $\rho \in (0,1)$, margin threshold $\tau \in (0,\rho/12)$, sampling parameter $m \in \N$, and correlation count $s \in \N$.
Suppose that each diagonal entry of $\bSigma_T$ lies in $[1/2,3/2]$ and
$\frac{c_1d^2}{m} \left( \frac{\kappa^2}{\tau^2} + \log d \right) \leq s$.

Then \Cref{alg:win-win-algorithm-app} takes as input $T, \kappa, \rho, \tau, m, s$ as input and satisfies the following with probability $1 - 1/\poly(d)$:
\begin{itemizec}
	\item[\faCaretRight] It either outputs a set  $H \subset [d]$ with $|H| \leq \frac{\kappa^2}{\tau^2}$ and $\| \left(\mathrm{offdiag}(\bSigma_T)\right)_H\|_{\fr}  \geq \kappa/2$.
	\item[\faCaretRight] Else, it outputs ``$\perp$''. If it outputs ``$\perp$'', then $\|\offdiag(\bSigma_T)\|_{\fr,k^2}$ is at most $2\kappa + 2 \rho k$.
\end{itemizec}
Moreover, the algorithm runs in $O\Big(m + sd^{0.62} + d^{1.62+ 3 \frac{\log(4/\rho)}{\log(1/3 \tau)}}  \Big)\poly(n,\log d,1/\tau)$ time.
\end{restatable}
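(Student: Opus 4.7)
The plan is to realize \Cref{alg:win-win-algorithm-app} as a branching algorithm combining \Cref{alg:randomlycheck} (random sampling of pairs) with \Cref{thm:robust-correlation-detection} (fast exact correlation detection). The random sampling cheaply tests whether many coordinate pairs have large covariance entries; if so we already have enough structure to produce a good $H$. Otherwise, with high probability the number of such pairs is below $s$, which is precisely the hypothesis needed by \Cref{thm:robust-correlation-detection} so that we can extract all $\rho$-correlated pairs within the promised subquadratic time.

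First, I would run \Cref{alg:randomlycheck} with parameters $(\tau, m, T)$ to obtain $J \subseteq J_*$. If $|J| \geq \kappa^2/(2\tau^2)$, I would set $H := \proj_{\kappa^2/\tau^2}(J)$. By \Cref{def:set-projection}, $|H| \leq \kappa^2/\tau^2$, and $(\offdiag(\bSigma_T))_H$ contains at least $\kappa^2/(2\tau^2)$ distinct pairs, each of which contributes two off-diagonal entries of magnitude at least $\tau$. The Frobenius norm squared is therefore at least $(\kappa^2/\tau^2) \cdot \tau^2 = \kappa^2$, comfortably above the target $\kappa^2/4$. If instead $|J| < \kappa^2/(2\tau^2)$, then \Cref{lem:randomly-check-coordinates} gives (with probability $1 - 1/d^2$) the bound $|J_*| = O((d^2/m)(|J| + \log d)) = O((d^2/m)(\kappa^2/\tau^2 + \log d))$, which is at most $s$ provided the constant $c_1$ in the proposition's hypothesis is chosen large enough.

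Next I would invoke \Cref{thm:robust-correlation-detection} with thresholds $\rho$ and $\tau$ on $T$, which is valid since the number of $\tau$-correlated pairs is $O(|J_*|) \leq s$ (the factor-of-constant gap between ``$|(\bSigma_T)_{ij}| \geq \tau$'' and ``correlation $\geq \tau$'' is absorbed in $c_1$ via the diagonal bound). Let $H_{\text{pair}}$ denote the returned set of $\rho$-correlated pairs. If $|H_{\text{pair}}| \geq \kappa^2/(2\tau^2)$, the same projection trick returns a valid $H$: each pair has entry magnitude $\geq \rho/2 > \tau$ by the assumption $\rho > 12\tau$, so the Frobenius argument goes through. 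Otherwise I would compute $S := \sum_{(i,j) \in H_{\text{pair}}} (\bSigma_T)_{ij}^2$ explicitly in $\poly(|H_{\text{pair}}|, n)$ time; if $S \geq \kappa^2/8$ I return $H := \{i : (i,j) \in H_{\text{pair}} \text{ for some } j\}$, which has size $\leq 2|H_{\text{pair}}| \leq \kappa^2/\tau^2$ and Frobenius $\geq \sqrt{2S} \geq \kappa/2$ because $H \times H$ captures every $\rho$-correlated entry; if instead $S < \kappa^2/8$ I return $\perp$.

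For the $\perp$-branch correctness, every off-diagonal entry with correlation below $\rho$ satisfies $|(\bSigma_T)_{ij}| \leq (3/2)\rho$ (Cauchy--Schwarz plus the diagonal bound), so non-$\rho$-correlated entries contribute at most $(3/2)\rho k$ to $\|\offdiag(\bSigma_T)\|_{\fr,k^2}$, while $\rho$-correlated entries contribute at most $\sqrt{2S} < \kappa/2$ in Frobenius; the triangle inequality gives $\|\offdiag(\bSigma_T)\|_{\fr,k^2} \leq \kappa/2 + (3/2)\rho k \leq 2\kappa + 2\rho k$. The main obstacle is the middle regime where few $\rho$-correlated pairs are reported but may each be large in magnitude: there, projection alone does not certify Frobenius $\geq \kappa/2$, and simultaneously the $\perp$-branch bound is threatened. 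The explicit computation of $S$ (at negligible cost since $|H_{\text{pair}}| = \poly(k/\epsilon)$ in this regime) cleanly resolves both cases. The runtime bound in the proposition then follows by summing the $O(mn)$ cost of \Cref{alg:randomlycheck} with the cost of \Cref{thm:robust-correlation-detection}, loosening the $2.4$ exponent from the latter to the $3$ stated in the proposition.
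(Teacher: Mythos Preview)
Your proposal is correct and follows essentially the same approach as the paper: random sampling via \Cref{lem:randomly-check-coordinates} to test whether there are many $\tau$-correlated pairs, falling back to \Cref{thm:robust-correlation-detection} when there are few, and in the final branch explicitly computing the Frobenius mass captured by the detected pairs to decide between returning $H$ or $\perp$. The only presentational differences are that the paper first passes to the normalized correlation matrix $\bA$ (so that ``$\tau$-correlated'' and ``entry $\geq \tau$'' coincide exactly) whereas you work directly with $\bSigma_T$ and absorb the factor-of-two slack via the diagonal bound, and that the paper uses the threshold $\kappa^2/\rho^2$ rather than $\kappa^2/(2\tau^2)$ in the second branching test; neither difference affects the argument.
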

\begin{algorithm}[t]
\caption{Main Subroutine}
\label{alg:win-win-algorithm-app}
\begin{algorithmic}[1]
\Require Frobenius threshold $\kappa \in \R_+$, a finite set $T \subset \R^d$ such that diagonal entries of $\Sigma_T$ lie in $[1/2,2]$, correlation threshold $\rho \in (0,1)$, weak correlation threshold $\tau$, 
sampling parameter $m \in \N$. 
We require the parameters to satisfy 
\begin{align}
\label{eq:constrain-on-m-s}
\frac{c_1d^2}{m}\left(\frac{\kappa^2}{\tau^2} + \log d\right) \leq s.
\end{align}
\Ensure With high probability, output either (i) a set $H \subset [d] $ with $|H| \leq k''$ and $\|(\offdiag(\bSigma_T)_{H}\|_\fr \geq 0.5\kappa$ or (ii) ``$\perp$''; 
	If it outputs ``$\perp$'', then $\|(\offdiag(\bSigma_T)_{H}\|_\fr \leq 2\kappa + 2k \rho$.
\State Let $H'_{\text{pair}}$ be the output of Random Subsampling Algorithm (\Cref{alg:randomlycheck}) with $m$ and $\tau$
\If{$|H'_{\text{pair}}| \geq \kappa^2/\tau^2$}\label{line:block-with-randomsubsample-start-app}
\State Set $H_1 \gets \proj_{\kappa^2/\tau^2}(H_{\text{proj}})$
\State \Return $H_1$
\label{line:block-with-randomsubsample-end-app}
\Else 
\label{line:else-label-H_pair-large-app}
				\State Let $H_{\text{pair}} \subseteq [d] \times [d]$ be the output of  \Cref{thm:robust-correlation-detection} with $\tau, \rho, s$ on $T$ 
		\If {$|H_{\text{pair}}| > \kappa^2/\rho^2$}
				\State $H_3 \gets \proj_{\kappa^2/\rho^2}(H_{\text{proj}})$  
		\State \Return $H_3$
		\Else 
			\State Let $H_2 \gets \proj(H_{\text{pair}})$
			\If{$\|(\offdiag(\bSigma_S))_{H_2}\|_\fr \geq \kappa$}
			\label{line:FronNormH_2-app}
			\State \Return $H_2$
			\Else 
			\State \Return ``$\perp$''
			\EndIf
		\EndIf  
\EndIf
\end{algorithmic}
\end{algorithm}

\begin{proof}
Let $\bA$ be the matrix with diagonal entries zero and off-diagonal entry $(i,j)$ equal to $\frac{\bSigma'_{i,j}}{ \sqrt{\bSigma'_{i,i}\bSigma'_{j,j}}}$.
Since diagonal entries of $\bSigma_T$ lie in $[0.5,1.5]$, the entries of $\bA$ and $\offdiag(\bSigma_T) = \offdiag(\bSigma_T - \bI_d)$ have the same signs and have the magnitude up to a factor of $2$.
Thus, at the cost of constant factor, we will upper bound $\|\bA\|_{\fr,k^2}$ and lower bound $\|(\bA)_H\|_\fr$ instead of dealing with $\offdiag(\bSigma_T - \bI_d)$. 

We will use the notation from \Cref{alg:win-win-algorithm-app}.
Let $s_*$ be the number of off-diagonal coordinates of $\bA_*$ that have absolute value larger than $\tau$.
Our algorithm will be correct on the event when both \Cref{thm:robust-correlation-detection} and \Cref{lem:randomly-check-coordinates} succeed, which happens with high probability.
For the rest of this proof, we condition on both of these algorithms succeeding.

We first consider the case when $s_* > s$.
Then, the lower bound on $s$ in the statement (cf.\ \Cref{eq:constrain-on-m-s}), coupled with  \Cref{lem:randomly-check-coordinates}, implies that 
\begin{align*}
|H_{\text{pair}}'| &> \frac{ms_*}{4d^2} - 16\log d > \frac{ms}{4d^2} - 16\log d \\
&> \frac{m}{4d^2} \left( \frac{c_1d^2}{m} \left( \frac{\kappa^2}{\tau^2} + \log d \right) \right) - 16\log d > \frac{c_1\kappa^2}{2\tau^2}, 
\end{align*}
where we use $c_1$ is large enough, say $ c_1 \geq 100$.
By definition of $H_1 = \proj_{\kappa^2/\tau^2}(H_{\text{proj}})$, there are at least $|\kappa^2/\tau^2|$ entries in $(\bA)_{H_1}$ with absolute value at least $\tau$, and thus $(\bA)_{H_1}$ has Frobenius norm at least $\sqrt{\kappa^2/\tau^2}\tau = \kappa$.

Consider the alternate case when $s_* < s$.
If $|H_{\text{pair}}'|$ happens to be large, then the same argument as above implies the correctness and the runtime.
Thus, in the rest of this proof, we consider the case when we enter Line \ref{line:else-label-H_pair-large-app}.
Since $s_* \leq s$, 
\Cref{thm:robust-correlation-detection} runs in the desired time and finds all off-diagonal coordinate pairs of $A$, collected in $H_{\text{pair}}$, with entries bigger than $\rho$ in the promised time.
If $H_{\text{pair}}$ has more than $\kappa^2/\rho^2$ entries, then by definition of $H_3 = \proj_{\kappa^2/\rho^2}(H_{\text{pair}})$, the same argument as above implies that the Frobenius norm of $(\bA)_{H_3}$ is at least $\kappa$.

If $|H_{\text{pair}}| < \kappa^2 /\rho^2$, then $H_2$ is defined to contain all coordinates that appear in $H_{\text{pair}}$.
If Line \ref{line:FronNormH_2-app} succeeds, then the algorithm returns a subset of coordinates satisfying the desired conditions (small size and Frobenius norm larger than $\kappa$).
Suppose the if condition is not satisfied (and we return ``$\perp$'').
We argue that $\|\bA\|_{\fr,k^2}$ is small enough. 
Write $\bA = \bB +  \bB'$, where $\bB_{i,j}$ is non-zero only when $i\in H_2, j \in H_2$ with values equal to $\bA_{i,j}$ and zero otherwise ($\bB'$ is then defined to be $\bA-\bB$).
By definition of $H_2$, we know each entry of $\bB'$ is at most $\rho$ in absolute value.
By triangle inequality,
 $\|\bA\|_{\fr,k^2} \leq \|\bB\|_{\fr,k^2} + \|\bB'\|_{\fr,k^2} = \|(\bA)_{H_2}\|_{\fr,k^2} + \|\bB'\|_{\fr,k^2} \leq \|(\bA)_{H_2}\|_\fr + k \rho \leq \kappa + k\rho$.
 The accuracy guarantee follows by noting that the entries of $\bA$ and $\bSigma-\bI$ are within a factor of $2$.
Finally, the runtime guarantees are immediate by \Cref{thm:robust-correlation-detection,lem:randomly-check-coordinates}.
\end{proof}

\subsection{Proof of \Cref{thm:formal-main-result}}
\label{app:proof_of_cref_thm_formal_main_result}

The complete algorithm is given below:
\begin{algorithm}[H]
\caption{Main Algorithm}
\label{alg:wrapper-algorithm}
\begin{algorithmic}[1]
\Require corruption rate $\epsilon \in (0,1)$, stability parameter $\delta \in (0,1)$, corrupted set $T \subset \R^{d}$, correlation-threshold $\rho \in (0,1)$, 
 correlation-decay $q \in \N$, sparsity $k \in \N$, sampling parameter $m \in \N$. We require $T$  to be an $\epsilon$-corrupted version of an $(C\epsilon, \delta, k')$ stable set with respect to $\mu$ for $k':= \frac{(Ck)^q}{(\delta^2/\eps)^{q-1}}$ and $\delta^2/\eps \leq c$ for a small absolute constant $c>0$.
\Ensure $\widehat{\mu} \in \R^d$ such that, with high probability, $\|\widehat{\mu} - \mu\|_{2,k} \lesssim \delta$. 
\State $T' \gets $ Filter $T$ using \Cref{claim:preprocessing}
\Statex\Comment{Preprocessing along the diagonals to ensure \Cref{cond:idealistic-condition}}
\State $i \gets 1$
\State $T_i \gets T'$.
\State $H \gets$ output of \Cref{prop:win-win-guarantee}  with inputs: corrupted set $T_i$, Frobenius threshold $ \kappa = C' \delta^2/\eps$ for a large constant $C'$, correlation threshold $\rho = (\delta^2/\eps)/k$, margin threshold $\tau = (\rho/12)^q$, sampling parameter $m \asymp d(\kappa^2/\tau^2 + \log d)$, and correlation count $s= d$
\label{line:choice-of-parameters}
\Statex
	\Comment{\Cref{alg:win-win-algorithm-app}}
	\While{$T_i \neq \emptyset$ and $H \neq$ ``$\perp$''}
	\State Get the scores $f: T_i \to \R_+$ from \Cref{lem:sparse-filtering-lemma} with inputs $T_i$, $H$, $\epsilon$, and  $\delta$ 
	\State $T_{i+1} \gets $ Filter $T_{i}$ using the scores $f$ similar to \Cref{alg:randomized_filtering}
	\State $i \gets i+1$
	\State Update $H$ as above
	\EndWhile
	\State \Return the sample mean of $T_i$
\end{algorithmic}
\end{algorithm}
We now present the proof of its correctness:
\begin{proof}[Proof of \Cref{thm:formal-main-result}]
The first step of \Cref{alg:wrapper-algorithm} is the fast preprocessing step from \Cref{claim:preprocessing}, which takes at most $\tilde{O}(dk^2|T|^2)$ time and removes not too many inliers with high probability
In particular, the returned set $T'$ has diagonal values in $[1/2,2]$.
In fact, \Cref{lem:inherit-stability-mean} implies that the subsequent sets $T_i$'s will satisfy this property, at least until we remove more than $\Omega(\eps)$ fraction of points, which is the regime of interest anyway (cf.\ \Cref{thm:generic-randomized-alg}). 
Thus, \Cref{prop:win-win-guarantee}
 will be applicable.

We will follow the standard proof template of randomized  filtering algorithm from \Cref{thm:generic-randomized-alg}, with the stopping condition provided by \Cref{prop:win-win-guarantee}.
The choice of parameters in Line~\ref{line:choice-of-parameters} are such that whenever \Cref{prop:win-win-guarantee} does not return ``$\perp$'' (and $T_i$ is $10\eps$-corruption of $S$), it returns a set $H$ satisfying the guarantees of \Cref{lem:sparse-filtering-lemma}.
To see this, observe that whenever \Cref{alg:win-win-algorithm-app} outputs a subset $H \subset [d]$, then $|H| \leq \kappa^2/\tau^2$ and $\|(\bSigma_{T_i} - \bI_d )_{H}\|_\fr > \|(\offdiag(\bSigma_{T_i}))_{H}\|_\fr > \kappa/4 \gtrsim  \delta^2/\eps$.
Moreover, whenever \Cref{prop:win-win-guarantee} returns ``$\perp$'', then $\|(\bSigma_{T_i} - \bI_d )_{H}\|_{\fr,k^2} \lesssim \kappa + k \rho \lesssim \delta^2/\eps$ since $\kappa \asymp \delta^2/\eps$ and $\rho\asymp \delta^2/\eps$.
The choice of sparsity parameter in the stability of inliers is $k'$, which is larger than $\kappa^2/\tau^2$, because the choice of $\kappa$ above and $\tau = (\rho/12)^q$ imply that $\kappa^2/\tau^2$  is of the order $(\delta^2/\eps)/(c'\rho^q) \asymp (\delta^2/\eps)/( ((c'\delta^2/\eps)/k)^q) \asymp \frac{k^q}{C^q (\delta^2/\eps)^{q-1}}$.
Thus, the scores generated by $H$ using \Cref{lem:sparse-filtering-lemma} give more weights to outliers than inliers.
\Cref{thm:generic-randomized-alg} now implies that the final set $T''$, with probability at least $0.6$, satisfies (i) $T''$ is an $O(\eps)$-contamination of $S$, (ii) $T''$ is an $O(\eps)$ contamination of $T'$, where $T'$ satisfies \Cref{cond:idealistic-condition}, and (iii) $\|\offdiag(\bSigma_{T''} - \bI_d )_{H}\|_{\fr,k^2} \lesssim \delta^2/\eps$.

We now argue that $\|\bSigma_{T''} - \bI_d\|_{\op,k} \lesssim \delta^2/\eps$, and not just the off-diagonal terms.
We follow the following inequalities using triangle inequality:
\begin{align*}
    \left\|\bSigma_{T''} - \bI_d\right\|_{\op,k} &\leq     \left\|\diag\left(\bSigma_{T''} - \bI_d\right)\right\|_{\op,k}
+ \left\|\offdiag\left(\bSigma_{T''} - \bI_d\right)\right\|_{\op,k}\\
&\lesssim \delta^2/\eps + \left\|\offdiag\left(\bSigma_{T''} - \bI_d\right)\right\|_{\fr,k^2} \tag*{(using \Cref{lem:inherit-stability-mean} and $T'$ satisfies \Cref{cond:idealistic-condition})}\\
&\lesssim \delta^2/\eps + \delta^2/\eps \lesssim \delta^2/\eps\,.
\end{align*}
Applying \Cref{lem:certificate-lemma}, the final output of the algorithm, sample mean of $T''$ will be $O(\delta)$ close to the true mean $\mu$, implying the correctness of the procedure.

It remains to show the choice of the parameters $m$, $s$, and $q$ lead to fast runtimes.  
We take $\tau = (\rho/12)^q$ and  $s = d$.
Finally, we take $m \asymp (d^2/s)\cdot(\kappa^2/\tau^2 + \log d) \leq (d \log d)(\kappa^2/\tau^2)$, which satisfies the parameter constraints in \Cref{prop:win-win-guarantee} (cf.\  \Cref{eq:constrain-on-m-s}).
Letting $n=|T|$, the resulting runtime of a single application of \Cref{prop:win-win-guarantee} is thus at most
 \begin{align*}
 A &= \Big(m + sd^{0.62} + d^{1.62+ 3 \frac{\log(4/\rho)}{\log(1/3 \tau)}}  \Big)\poly(n,\log d,1/\tau) \\
 &\leq \Big(d^{1.62+ 3 \frac{\log(4/\rho)}{\log((\rho/4)^q)}}  \Big)\poly(n,\log d,1/\rho^q) \\
 &\leq \Big(d^{1.62+ \frac{3}{q}}  \Big)\poly(n,\log d, k^q, 1/\epsilon^q)\,.
 \end{align*}
Since there are at most $n$ iterations, the claim on the total runtime follows.
Finally, the success probability of the algorithm can be boosted from $0.55$ to $1-\delta$ by repeating the algorithm $\ell = \log(1/\delta)$ times and outputting the estimate that is $O(\delta)$-close (in the $\|\cdot\|_{2,k}$ norm) to the majority of the $\ell$ estimates. It adds only a multiplicative factor of $\ell$ and an additive term of $d \ell^2$ to the runtime.   
\end{proof}

\subsection{Proof of \Cref{thm:informal-main-result}}
\label{app:proof_of_cref_thm_informal_main_result}
We now explain how \Cref{thm:formal-main-result} implies \Cref{thm:informal-main-result}.
\begin{proof}[Proof of \Cref{thm:informal-main-result}]
First, observe that by applying  \Cref{prop:sparse-estimation-using-2k-norm}, the estimation guarantee of \Cref{thm:formal-main-result} can be translated from  $\|\cdot\|_{2,k}$ norm  into $\|\cdot\|_{2}$ norm by hard-thresholding the estimator.
We now turn our attention to the sample complexity. 
Let $S$ be a set of $n$ i.i.d.\ samples from $\cN(\mu,I)$. If $n \geq ((k')^2\log d)/\eps^2$,
then  \Cref{lemma:stab-sample-complexity} implies that $S$ is $(\eps,\delta, k')$ stable with respect to $\mu$ for $\delta \lesssim \eps \sqrt{\log(1/\eps)}$.
Plugging in the value of $\delta$ and $k' \leq \frac{(O(k))^{q}}{\eps^{q-1}}$ in \cref{thm:formal-main-result} yields a sample complexity of at most 
$n \lesssim \frac{(k')^2\log d}{\eps^2} \lesssim (O(k/\eps))^{2q} \log d$. 
The claim on the runtime in \Cref{thm:informal-main-result}  follows from \Cref{thm:formal-main-result} along with the bound on the sample complexity.
\end{proof}

\section{Robust Sparse PCA}
\label{app:robust-sparse-pca}
In this section, we show that the ideas from fast correlation detection can also be useful for robust sparse PCA.
Our main result in this section is the result below:
\ThmInformalRobustSparsePCA*

The result above provides the first subquadratic time algorithm for robust sparse PCA that has error independent of $d$ and $k$.
Similar to the literature on robust sparse mean estimation, existing algorithms  for robust sparse PCA with similar dimension-independent error took $\Omega(d^2)$ time.
However, the error guarantee of \Cref{thm:informal-robust-sparse-pca} is not optimal: the error guarantee above is $(\eps \log(1/\eps)) /\spike$, same as \cite{CheDKGGS21}, as opposed to the near-optimal error of $(\eps^2 \polylog(1/\eps))/\spike$ in \cite{BalDLS17,DiaKKPS19}.

We remark that \Cref{thm:informal-robust-sparse-pca}  does not follow directly from \Cref{thm:informal-main-result}.  In particular, a standard reduction relies on the fact that the mean of the random variable $\bY - \bI_d $ is exactly $\spike vv^\top$ for $\bY = xx^\top$, where $x \sim \cN(0, \bI + \spike vv^\top)$, i.e., (robust) sparse PCA reduces to (robust) sparse mean estimation.
However, $\bY$ is a $d^2$-dimensional object, and thus a naive application of \Cref{thm:informal-main-result} would yield a super-quadratic runtime, in fact,  $(d^2)^{1.62} = \Omega(d^3)$.
Moreover, the covariance of $\bY$ is not isotropic and thus it is unclear if samples from $Y$ would satisfy  the stability condition from \Cref{def:stability-inliers}.
In the rest of this section, we show that a more whitebox analysis of the main ideas from \Cref{thm:informal-main-result}, in particular, \Cref{prop:win-win-guarantee} can yield a subquadratic runtime.

\paragraph{Organization}
This section is organized as follows: We list the deterministic conditions for robust PCA in \Cref{app:deterministic-condition-pca}.
\Cref{app:filters-certificates-and-dense-estimation} contains the results pertaining to the certificate lemma, filtering, and dense estimation algorithm for sparse PCA.
Finally, the algorithm and its proof under a generic stability condition is given in \Cref{app:subquadratic-time-algorithm-for-sparse-pca}.
Finally, we prove \Cref{thm:informal-robust-sparse-pca} in \Cref{sub:proof-of-informal-pca}.

\subsection{Deterministic Conditions}
\label{app:deterministic-condition-pca}
For a set $T$, we use the notation $\overline{\bSigma}_T$ to denote the second moment matrix $\E_{T}[xx^\top]$---not to be confused with the covariance matrix $\bSigma_T$.
\begin{definition}[Stability Condition for PCA]
\label{def:stability-inliers-pca}
For the contamination rate $\epsilon \in (0,1/2)$, error parameter $ \gamma \geq \epsilon$, spike strength $\spike \in \R_+$, and sparsity $k \in \N$, we say a set $S \subset \R^d$ is $(\epsilon, \gamma,k, \spike)$-pca-stable with respect to a $k$-sparse unit vector $v \in \R^d$ and spike strength $\spike$ if
\begin{enumerate}
	\item  For any subset $S' \subseteq S$ with $|S'| \geq (1 -\epsilon)|S|$:$\left\|\overline{\bSigma}_{S'} - (\bI_d + \spike vv^\top)\right\|_{\fr,k^2} {\leq} \gamma$
 
\item For all subsets $H \subset [d]$ with $|H| \leq k$, the set
$\{(x)_H: x \in S\}$ is $(\epsilon, \gamma)$-covariance stable in the sense of \cite[Definition 4.5]{DiaKan22-book} (with respected to an appropriate flattening of $\bI_{|H|})$.
\end{enumerate}
\end{definition}
The second stability condition allows us to perform covariance estimation in Frobenius norm (and hence stronger than principal component analysis) but only when the support is known.

The following result lists the sample complexity of \Cref{def:stability-inliers-pca}.
\begin{lemma}[Sample Complexity]
\label{lem:samp-comp-pca-stab}
Let $S$ be a set of $n$ i.i.d.\ samples from $\cN(0, \bI + \spike vv^\top)$ for $\spike\in (0,1)$ and a $k$-sparse unit vector $v$.
Then if $n \gtrsim \poly(k,\log(d/\delta),1/\epsilon)$, then $S$ is $(\epsilon, \gamma, k, \spike)$-pca-stable with respect to $v$ and $\spike$ for $\gamma = O(\epsilon \log(1/\epsilon))$.
\end{lemma}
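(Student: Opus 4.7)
The plan is to establish the two requirements of \Cref{def:stability-inliers-pca} separately and then combine the resulting sample complexities. Each requirement reduces to Hanson--Wright-type concentration for quadratic forms of the form $x^\top \bB x$, followed by a union bound tailored to the structure of the test matrices.

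For the first requirement, note that $\spike \in (0,1)$ implies $\|\bI + \spike vv^\top\|_\op \leq 2$, so the inliers are $O(1)$-subgaussian. For any fixed $k^2$-sparse unit-Frobenius matrix $\bB$, the quadratic form $x^\top \bB x$ is sub-exponential with constant parameter, and Hanson--Wright gives $|\E_S[x^\top \bB x] - \langle \bB, \bI + \spike vv^\top\rangle| \leq t$ with probability $1 - \exp(-\Omega(n \min(t, t^2)))$. To promote this fixed-$\bB$ estimate to uniform control in $\|\cdot\|_{\fr,k^2}$, I would use a discretization argument: there are at most $\binom{d^2}{k^2} \leq d^{2k^2}$ possible supports, and on each support a standard $\tfrac{1}{2}$-net over unit-Frobenius matrices has cardinality $2^{O(k^2)}$. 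A union bound over this net with $t = \Theta(\eps \sqrt{\log(1/\eps)})$ and $n \gtrsim k^2 \log(d/\delta)/\eps^2$ yields $\|\overline{\bSigma}_S - (\bI + \spike vv^\top)\|_{\fr,k^2} \leq \gamma/2$ with probability $1-\delta/2$. To lift this deviation bound from $S$ itself to every $(1-\eps)$-fraction subset $S'$, I would invoke the standard equivalence between subgaussian concentration and stability (the argument underlying \cite[Lemma 3.3]{CheDKGGS21} and \Cref{lemma:stab-sample-complexity} here): removing any $\eps$-fraction of samples changes $\E_{S'}[x^\top \bB x]$ by at most $O(\eps \log(1/\eps))$, uniformly over the same net, which is absorbed into the final $\gamma = O(\eps \log(1/\eps))$.

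For the second requirement, fix $H \subseteq [d]$ with $|H| \leq k$ and observe that the marginal of $\cN(0, \bI + \spike vv^\top)$ on $H$ equals $\cN(0, \bI_{|H|} + \spike v_H v_H^\top)$, whose operator norm is at most $2$ and which is therefore an $O(1)$-subgaussian $|H|$-dimensional distribution. The standard sample complexity for $(\eps,\gamma)$-covariance stability of such well-conditioned subgaussian distributions (see \cite[Section 4]{DiaKan22-book}) is $n \gtrsim \poly(|H|,\log(1/\delta))/\eps^2$ samples, with $\gamma = O(\eps \log(1/\eps))$ and failure probability $\delta$. The ``appropriate flattening'' caveat amounts to treating $(x)_H (x)_H^\top$ as a vector and applying covariance-style concentration with the true second-moment matrix $\bI_{|H|} + \spike v_H v_H^\top$ as the reference point; since this reference is bounded, it plays the role of $\bI$ up to constants. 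A union bound over the $\binom{d}{k} \leq d^k$ candidate supports $H$ is paid for by increasing the exponent of the tail by an additive $k \log(d/\delta)$, which is still within the $\poly(k, \log(d/\delta))/\eps^2$ budget.

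The main obstacle is managing the union bound in the first requirement, since the covering number of $k^2$-sparse unit-Frobenius matrices in ambient dimension $d^2$ is $\exp(\Theta(k^2 \log d))$; the $O(1)$-subgaussianity afforded by $\spike \leq 1$ is crucial to match this covering exponent against the Hanson--Wright tail and keep the sample complexity only $\poly(k, \log d)/\eps^2$, rather than scaling polynomially with $d$. Taking the maximum of the two sample complexities then yields the claimed $n \gtrsim \poly(k, \log(d/\delta), 1/\eps)$.
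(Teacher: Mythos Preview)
Your proposal is correct and follows essentially the same two-part structure as the paper: for the first condition you unpack the Hanson--Wright plus covering argument that the paper simply cites from \cite[Lemma 4.2]{CheDKGGS21}, and for the second condition you do exactly what the paper does---fix $H$, invoke the standard covariance-stability sample complexity from \cite{DiaKan22-book}, and union-bound over the $d^k$ choices of $H$.
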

\begin{proof}
We sketch the argument here.
The first condition is identical to \cite[Definition 2.2]{CheDKGGS21} and \cite[Lemma 4.2]{CheDKGGS21} establishes a sample complexity of $\frac{k^2 \log(d/ \delta)}{\epsilon^2}$.

For the second condition, we do an admittedly loose analysis. Fixing a subset $H$, it amounts to showing a stability condition for a $|H|$-dimensional Gaussian distribution, for which \cite[Proposition 4.2]{DiaKan22-book} establishes a sample complexity of $\poly(k \log (1/\delta)/\epsilon)$ for the set to be $(\epsilon,\gamma)$-covariance stable for $\gamma = O(\epsilon \log(1/\eps))$ with probability $1- \delta$.\footnote{Although the proof of \cite[Proposition 4.2]{DiaKan22-book} does not explicitly write the dependence on $\delta$, but it is immediate from the proof.}
Since there are at most $d^k$ possible choices of $H$, a union bound shows that the second condition in \Cref{def:stability-inliers-pca} holds with probability $1 - \delta$ with sample complexity $\poly(k \log (d^k/\delta)/\epsilon) = \poly(k \log (d/\delta)/\epsilon)$.
\end{proof}

\subsection{Filters, Certificates, and Dense Estimation}
\label{app:filters-certificates-and-dense-estimation}

\paragraph{Dense Estimation Algorithm}
We shall use the following algorithm to estimate the spike when we have identified the support of the spike.
\label{app:dense-estimation-algorithm}
\begin{lemma}[Dense Covariance Estimation Algorithm; Implication of ~{\cite[Theorem 4.6]{DiaKan22-book}}]
\label{lem:dense-pca-filter}
Let $\epsilon \in (0 ,\epsilon_0)$ for a small absolute constant $\epsilon_0$.
Let $T$ be an $\epsilon$-corrupted version of $S$, where $S$ is $(\epsilon,\gamma,k, \spike)$-pca-stable with respect to $v$ (cf.\ \Cref{def:stability-inliers-pca}) and $\spike \in (0,1)$.
Let $H$ be a $k$-sparse subset of $[d]$.
There exists an algorithm $\cA$ that takes as inputs corrupted set $T$, contamination rate $\epsilon$, error parameter $\gamma$, spike $\spike$, and the sparse support set $H$ and outputs a $k$-sparse vector $u$, supported on $H$, such that  $\left\|\left( \bI_d + \spike uu^\top \right)_H - \left( \bI_d + \spike vv^\top \right)_H \right\| \lesssim \gamma $.
 Moreover, the algorithm runs in time $d\cdot\poly(k,|T|/ \epsilon)$.
 \end{lemma}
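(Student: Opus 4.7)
\medskip
\noindent\textbf{Proof proposal for \Cref{lem:dense-pca-filter}.}
The plan is a reduction to the dense robust covariance estimation algorithm of~\cite[Theorem~4.6]{DiaKan22-book} applied to the data \emph{projected} onto the known support $H$, followed by a rank-$1$ truncation to convert a Frobenius-norm covariance estimate into an estimate of the form $\spike uu^\top$.

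First, project each point: let $T_H := \{(x)_H : x \in T\} \subset \R^{|H|}$, which is an $\epsilon$-corrupted version of $S_H := \{(x)_H : x \in S\}$ because the adversary's replacements in $T$ induce at most $\epsilon|S|$ replacements in $T_H$. By the second item of \Cref{def:stability-inliers-pca}, $S_H$ is $(\epsilon,\gamma)$-covariance stable in the sense of \cite[Definition~4.5]{DiaKan22-book}. Invoking \cite[Theorem~4.6]{DiaKan22-book} on $T_H$, we obtain in time $\poly(|H|, |T|/\epsilon) = \poly(k, |T|/\epsilon)$ a matrix $\widehat{\bM} \in \R^{|H|\times |H|}$ satisfying
\[
\bigl\| \widehat{\bM} - \E_{S_H}[yy^\top] \bigr\|_{\fr} \lesssim \gamma.
\]
Combining this with the first item of \Cref{def:stability-inliers-pca} applied to $S' = S$, which yields $\bigl\|\E_{S_H}[yy^\top] - (\bI_d + \spike vv^\top)_H\bigr\|_{\fr} \leq \gamma$ (since $(\cdot)_H$ is a subset of $k^2$ entries), triangle inequality gives $\bigl\|\widehat{\bM} - (\bI_d + \spike vv^\top)_H \bigr\|_{\fr} \lesssim \gamma$.

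Next, convert $\widehat{\bM}$ into the required rank-$1$ form. Let $\bB := \widehat{\bM} - \bI_{|H|}$, so that $\bB$ approximates the rank-$1$ PSD matrix $\bA := \spike (v_H)(v_H)^\top$ in Frobenius norm within $O(\gamma)$. Compute the top eigenvalue $\lambda_1$ and corresponding unit eigenvector $w \in \R^{|H|}$ of $\bB$ via eigendecomposition in $\poly(|H|)$ time. If $\lambda_1 > 0$, set $u_H := \sqrt{\lambda_1/\spike}\, w$ so that $\spike u_H u_H^\top = \lambda_1 ww^\top$ is the best rank-$1$ approximation to $\bB$; otherwise set $u_H := 0$. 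Finally, define $u \in \R^d$ by padding $u_H$ with zeros outside $H$, taking $O(d)$ time.

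For correctness, since $\bA$ is itself rank-$1$, Eckart--Young combined with the triangle inequality gives $\|\spike u_H u_H^\top - \bA\|_{\fr} \leq \|\spike u_H u_H^\top - \bB\|_{\fr} + \|\bB - \bA\|_{\fr} \leq 2\|\bB - \bA\|_{\fr} \lesssim \gamma$, which is exactly the advertised bound (note $(\bI_d + \spike uu^\top)_H - (\bI_d + \spike vv^\top)_H = \spike u_H u_H^\top - \bA$). In the degenerate case $\lambda_1 \leq 0$, one has $\spike \|v_H\|_2^2 = \|\bA\|_{\op} \leq \|\bA - \bB\|_{\op} \lesssim \gamma$, so outputting $u_H = 0$ already achieves error $\|\bA\|_{\fr} \leq \|\bA\|_{\op} \lesssim \gamma$.

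The main subtlety to check carefully is the first stability condition in \Cref{def:stability-inliers-pca} versus the hypothesis needed for the dense algorithm: the projected inliers have second moment $(\bI_d + \spike vv^\top)_H$, not $\bI_{|H|}$, so I need to be sure that the ``covariance stability with respect to a flattening of $\bI_{|H|}$'' condition suffices to run \cite[Theorem~4.6]{DiaKan22-book} with the true target, which follows because item~1 absorbs the $O(\spike)$ mismatch into the $\gamma$ error. The runtime totals $d\,\poly(k, |T|/\epsilon)$ as claimed, with the $d$ factor arising only from writing down $u$ in $\R^d$.
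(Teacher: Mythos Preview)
Your proposal is correct and follows essentially the same route as the paper: project onto $H$, invoke the dense robust covariance estimator of \cite[Theorem~4.6]{DiaKan22-book}, subtract $\bI_{|H|}$, take the best rank-$1$ approximation, rescale by $\sqrt{\lambda_1/\spike}$, and handle the non-positive eigenvalue case separately. The only cosmetic difference is that the paper writes the guarantee of \cite[Theorem~4.6]{DiaKan22-book} directly as $\|\widehat{\bSigma}-\bSigma'\|_{\fr}\lesssim \gamma\|\bSigma'\|_{\op}\lesssim\gamma$ (using $\spike\leq 1$), whereas you route through the empirical second moment and then apply item~1 of \Cref{def:stability-inliers-pca}; both give the same $O(\gamma)$ bound. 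Your final ``subtlety'' paragraph is a non-issue for exactly this reason: the multiplicative $\|\bSigma'\|_{\op}$ factor in the dense guarantee is $O(1)$ here, so the stability condition in item~2 suffices as stated and nothing needs to be ``absorbed'' via item~1.
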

\begin{proof}
Let $T' \subset \R^{k}$ and $S' \subset \R^k$ denote the projections of $T$ and $S$, respectively, on $H$.
Let $\bSigma' = (\bI_d + \spike vv^\top)_H = \bI_{|H|} + \spike'  v' {v'}^\top$ for $v' = (v)_H/ \|(v)_H\|_2$ and $\spike' = \spike \|(v)_H\|_2^2$.

\cite[Theorem 4.6]{DiaKan22-book} gives an estimate $ \widehat{\bSigma}$ such that $\|\widehat{\bSigma} - \bSigma'\|_\fr \lesssim\gamma \|\bSigma'\|_\op  \lesssim\gamma$, which can be calculated in time $d\poly(k|T|/\epsilon)$; here $\|\cdot\|_\op$ denotes the operator norm of a matrix.
That is, $ \| (\widehat{\bSigma} -\bI) - \spike' v'{v'}^\top \|_\fr \lesssim\gamma $.
Letting $\bA$ be the best one-rank approximation of $(\widehat{\bSigma} -\bI)$ in the Frobenius norm, we see that
\begin{align*}
\|\bA - \spike'v'{v'}^\top\|_\fr &\leq \|\bA - (\widehat{\bSigma} -\bI)\|_\fr + \|(\widehat{\bSigma} -\bI) - \spike'v'{v'}^\top\|_\fr \leq 2 \|(\widehat{\bSigma} -\bI) - \spike'v'{v'}^\top\|_\fr \lesssim \gamma,
\end{align*}
where the second inequality follows from the fact that $\bA$ is the best rank-one approximation.
Moreover, the symmetry of $\bA$ implies that $\bA$ must be of the form $\lambda' u'{u'}^\top$ for a unit vector $u'$ and $\lambda' \in \R$.
We thus obtain 
\begin{align*}
\|\lambda'u'{u'}^\top - \spike' v'{v'}^\top\|_\fr \lesssim\gamma\,.
\end{align*}
Consequently, $|\lambda' - \spike'| \lesssim\gamma$ by Weyl's inequality.
If $\lambda' \leq 0$, we set $u = 0$.
Otherwise, we set $u = \sqrt{\lambda'/\spike}u'$. 
We now calculate the approximation error. If $\lambda' \geq 0$, then the resulting error in our estimate is $\left\|\left( \bI_d + \spike uu^\top \right)_H - \left( \bI_d + \spike vv^\top \right)_H \right\|_\fr =\left\| \lambda' uu^\top - \spike'v'{v'}^\top \right\|_\fr \lesssim\gamma\,$ by the approximation guarantee.
If $\lambda' < 0$, then $\spike'$ must also be $O(\gamma)$ because $\spike' \leq \lambda' + |\lambda' - \spike'| \lesssim \gamma$.
In this case, the approximation error is $\eta'$, which is also $O(\gamma)$ as required. 
\end{proof}

\paragraph{Certificate Lemma for Sparse PCA}
\label{app:certificate-lemma-for-sparse-pca}
The following result shows that if the covariance matrix looks roughly isotropic on a subset of coordinates $H^\complement$, then the spike vector $v$ places most of its mass on the complement, $H$. 
The benefit of this stopping condition is that it does not depend on the unknown spike vector $v$.
\begin{lemma}[Sparse Certificate Lemma for PCA]
\label{lem:sparse-certificate-pca}
Let $T$ be an $\epsilon$-corrupted version of $S$, where $S$ is $(\epsilon,\gamma,k, \spike)$-pca-stable with respect to a $k$-sparse unit vector $v \in \R^d$ and spike strength $\spike \in (0,1)$.
Let $H \subset [d]$ be such that 
$\| (\overline{\bSigma}_T - \bI)_{H^\complement}\|_{\op,k} = O(\gamma)$,
then $\|vv^\top - (v)_{H}(v)_{H}^\top\|_\fr^2 = O\left( ( \gamma/\spike) +  (\gamma/\spike)^2\right)$.
\end{lemma}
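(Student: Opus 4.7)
The goal is to show that if $\overline{\bSigma}_T - \bI$ has small sparse operator norm on $H^\complement$, then the unit spike direction $v$ must place almost all of its mass on $H$. Let $w \in \R^d$ be the vector obtained from $v$ by zeroing out the coordinates in $H$ (so $w$ identifies naturally with $(v)_{H^\complement}$ padded by zeros); since $v$ is $k$-sparse, $w$ has at most $k$ non-zero entries. If $w = 0$ the claim is vacuous, so assume $\|w\|_2 > 0$ and set $u := w/\|w\|_2$, a $k$-sparse unit vector supported on $H^\complement$.

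The first step is to use the hypothesis to upper bound the quadratic form $u^\top \overline{\bSigma}_T u$. Because $u$ lives on $H^\complement$ and is $k$-sparse, the definition of $\|\cdot\|_{\op,k}$ yields
\[
u^\top \overline{\bSigma}_T u \le 1 + \bigl|u^\top(\overline{\bSigma}_T - \bI)_{H^\complement} u\bigr| \le 1 + O(\gamma).
\]
The second step is to lower bound the same quantity in terms of the true spike. Let $S' := T \cap S$, which has $|S'| \ge (1-\epsilon)|S|$, so the first clause of \Cref{def:stability-inliers-pca} applies. Since $uu^\top$ is $k^2$-sparse with unit Frobenius norm, the pairing $\langle uu^\top, \cdot\rangle$ is dominated by $\|\cdot\|_{\fr,k^2}$, giving
\[
u^\top \overline{\bSigma}_{S'} u \ge 1 + \spike(v^\top u)^2 - \gamma.
\]
Transferring from $\overline{\bSigma}_{S'}$ to $\overline{\bSigma}_T$ is the one slightly delicate point: the outliers in $T\setminus S$ could have arbitrary magnitudes, but they only contribute a nonnegative term $\frac{1}{|T|}\sum_{x\in T\setminus S}(u^\top x)^2$ to $u^\top \overline{\bSigma}_T u$. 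Expanding $\overline{\bSigma}_T$ as an average over $S'$ plus outliers and discarding the outlier term from below yields
\[
u^\top \overline{\bSigma}_T u \;\ge\; \frac{|S'|}{|T|}\, u^\top \overline{\bSigma}_{S'} u \;\ge\; (1-\epsilon)\bigl(1 + \spike(v^\top u)^2 - \gamma\bigr).
\]
Combining the pinch with $\epsilon \le \gamma$ gives $\spike (v^\top u)^2 \le O(\gamma)$.

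The final step is the translation to the Frobenius bound in the conclusion. Since $\langle v, u\rangle = \langle w, w\rangle/\|w\|_2 = \|w\|_2$, the inequality above reads $\|w\|_2^2 \le O(\gamma/\spike)$. Writing $\alpha := \|(v)_H\|_2^2 = 1 - \|w\|_2^2$, a direct computation using $\|vv^\top\|_\fr^2 = 1$, $\|ww^\top\|_\fr^2 = \|w\|_2^4$, and $\langle vv^\top, ww^\top\rangle = \langle v, w\rangle^2 = \|w\|_2^4$ gives
\[
\bigl\|vv^\top - (v)_H(v)_H^\top\bigr\|_\fr^2 \;=\; 1 - \alpha^2 \;=\; 2\|w\|_2^2 - \|w\|_2^4 \;\le\; 2\|w\|_2^2,
\]
which is $O(\gamma/\spike)$, so the stated bound $O\!\left(\gamma/\spike + (\gamma/\spike)^2\right)$ follows (and is in fact loose). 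The only real obstacle in this proof is the outlier step in the middle paragraph; everything else is a straightforward combination of the stability definition with the algebraic identity for $\|vv^\top - ww^\top\|_\fr^2$.
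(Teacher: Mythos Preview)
Your proof is correct and follows essentially the same route as the paper: both test with the $k$-sparse unit vector along $(v)_{H^\complement}$, upper bound $u^\top\overline{\bSigma}_T u$ via the sparse operator norm hypothesis, lower bound it via stability of $S\cap T$ after dropping the nonnegative outlier contribution, and conclude $\|(v)_{H^\complement}\|_2^2 = O(\gamma/\spike)$. The only difference is the final translation: the paper uses the triangle inequality $\|vv^\top-(v)_H(v)_H^\top\|_\fr \le \|(v)_{H^\complement}\|_2^2 + 2\|(v)_{H^\complement}\|_2$, whereas your exact identity $\|vv^\top-(v)_H(v)_H^\top\|_\fr^2 = 1-\alpha^2 = 2\|w\|_2^2-\|w\|_2^4$ is cleaner and shows the $(\gamma/\spike)^2$ term in the stated bound is indeed unnecessary.
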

\begin{proof}
Let $z$ be the unit vector along $(v)_{H^\complement}$, which is at most $k$-sparse because $v$ is $k$-sparse.
Since $z$ is supported on $H^\complement$, 
the assumption on $\overline{\bSigma}_T$ and $H^\complement$ implies that 
$\left|z^\top \left( \overline{\bSigma}_T - \bI \right)z \right| \lesssim\gamma$,
which further gives us that $z^\top\overline{\bSigma}_T z = 1 \pm O(\gamma) $.
Using the stability of $S$, we see $z^\top\overline{\bSigma}_T z \geq (1 - \epsilon) z^\top\overline{\bSigma}_{S \cap T} z \geq (1 - \epsilon) \left(  z^\top \left( \bI + \spike vv^\top \right) z  - \gamma\right)$.
Defining $\eta':= \eta(v^\top z)^2 = \eta \|v_{H^\complement}\|_2^2$, we have that 
$ z^\top\overline{\bSigma}_T z \geq (1 - \epsilon)(1 + \spike' - \gamma) $.
Combining this with the aforementioned upper bounds on $z^\top\overline{\bSigma}_T z$, we obtain that
$(1 - \epsilon)(1 + \spike' - \gamma)  \leq 1 + O(\gamma)$.
Thus, $\spike' = O(\gamma + \epsilon) = O(\gamma)$ by using $\eps \leq \gamma$.
Therefore, $\|(v)_{H^\complement}\|_2^2 = \spike'/\spike \lesssim \gamma/\spike$.
Finally, the triangle inequality implies that 
\begin{align*}
\|vv^\top -(v)_H (v)_H\|_\fr \leq \|(v)_{H^\complement}\|_2^2 + 2\|(v)_{H^\complement}\|_2  \lesssim \max\left( \sqrt{(\gamma/\spike)}, (\gamma/\spike) \right)\,.
\end{align*}

\end{proof}

\paragraph{Filter for Sparse PCA}
\label{app:filter-for-sparse-pca}
We use the following result to filter outliers when we identify a small set of coordinates with a large variance.
\begin{restatable}[Sparse PCA Filter]{lemma}{lemSparsePCAFilter}
\label{lem:sparse-pca-filter}
Let $\epsilon \in (0 ,\epsilon_0)$ for a small absolute constant $\epsilon_0$ and let $C$ be a large enough absolute constant.
Let $T$ be an $\epsilon$-corrupted version of $S$, where $S$ is $(\epsilon,\gamma,k, \spike)$-stable with respect to $v$ (cf.\ \Cref{def:stability-inliers-pca}) and $\spike \in (0,1)$.
Let $H \subset [d]$ be such $\left\|(\bSigma_T - \bI - \spike vv^\top)_H\right\|_\fr = \lambda$ for $\lambda \geq 8C\gamma$ and $|H| \leq k$.

Then there exists an algorithm $\cA$ that takes $T$, $H$, $\epsilon$,  $\gamma$, and $\spike$ and returns a score mapping $f:T \to \R_+$ such that
 the sum of inliers' scores is less than outliers': $\sum_{x \in S \cap T} f(x) \leq \sum_{x \in T\setminus S } f(x)$ and $\max_{x \in T}f(x) > 0$.
 Moreover, the algorithm runs in time $d\poly(k,|T|)$.
\end{restatable}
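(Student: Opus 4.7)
The plan is to reduce the sparse PCA filtering task to a dense covariance filtering task on $\R^{|H|}$ by restricting every sample to the coordinates in $H$. Because $|H| \le k$, the dense subproblem can be solved in $\poly(k, |T|)$ time, so the overall cost is $O(d|T|) + \poly(k, |T|) = d \cdot \poly(k, |T|)$, matching the claimed runtime bound. This mirrors the reduction used to prove the sparse mean filter (\Cref{lem:sparse-filtering-lemma}), and indeed \Cref{lem:sparse-pca-filter} is the natural PCA analogue of that statement.

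Concretely, I would first form the projected set $T_H := \{(x)_H : x \in T\} \subset \R^{|H|}$ in $O(d|T|)$ time, implicitly treating $S_H := \{(x)_H : x \in S\}$ as the inlier counterpart. Using the identity $(\bSigma_T)_H = \bSigma_{T_H}$, the lemma's hypothesis translates to $\|\bSigma_{T_H} - (\bI_{|H|} + \eta (v)_H (v)_H^\top)\|_\fr = \lambda \ge 8C\gamma$, so the sample covariance restricted to $H$ deviates from the inlier target covariance by much more than the stability parameter $\gamma$. By item~2 of \Cref{def:stability-inliers-pca}, $S_H$ is $(\eps, \gamma)$-covariance stable in the sense of \cite[Definition~4.5]{DiaKan22-book}, which is precisely the structural input needed by a standard dense covariance filter from the literature (e.g., the filter underlying \cite[Theorem~4.6]{DiaKan22-book}). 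Running that filter on $T_H$ produces scores $g : T_H \to \R_+$ with $\sum_{x \in S \cap T} g((x)_H) \le \sum_{x \in T \setminus S} g((x)_H)$ and $\max g > 0$; lifting via $f(x) := g((x)_H)$ yields the desired score map on $T$.

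The main obstacle is that the target covariance $\bSigma_* := \bI_{|H|} + \eta (v)_H (v)_H^\top$ depends on the unknown $(v)_H$, so one cannot hand an explicit target to a naive filter. This is resolved by using a stability-based dense filter: such a filter constructs its ``bad direction'' (or bad matrix) certificate purely from the sample covariance together with the covariance-stability property of the inliers, without ever needing the target in closed form. The inlier stability supplied by item~2 of \Cref{def:stability-inliers-pca} delivers exactly this property, and the hypothesis $\lambda \ge 8C\gamma$ with $C$ larger than the absolute constant required by the dense filter ensures that the observed deviation is well above the noise floor, so the filter produces nontrivial scores satisfying the required inequality. A secondary, routine point is that $\bSigma_T$ is centered at the sample mean $\mu_T$ rather than at the true mean $0$; since the stability condition also controls $\|(\mu_T)_H\|_2$ on any $(1-O(\eps))$-subset of $S$, the $\mu_T \mu_T^\top$ correction is absorbed into the $\lambda \ge 8C\gamma$ gap and does not affect the analysis.
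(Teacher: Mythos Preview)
Your proposal is correct but resolves the ``unknown target'' obstacle differently from the paper. You invoke item~2 of \Cref{def:stability-inliers-pca} (covariance stability in the sense of \cite{DiaKan22-book}) and appeal to a black-box dense covariance filter, arguing by contraposition of its certificate lemma that the filter must fire and produce good scores whenever $\|\bSigma_{T_H}-\Sigma_*\|_\fr \geq 8C\gamma$. The paper instead makes the target explicit: it first runs the dense covariance \emph{estimation} algorithm (\Cref{lem:dense-pca-filter}) on $T_H$ to obtain a vector $u$ with $\|\spike uu^\top - \spike (v)_H(v)_H^\top\|_\fr \lesssim \gamma$, and then builds the score by hand exactly as in the sparse mean filter, setting $\bA = (\overline{\bSigma}_T - \bI - \spike uu^\top)/\|\overline{\bSigma}_T - \bI - \spike uu^\top\|_\fr$, $g(x) = x^\top \bA x - \langle \bI + \spike uu^\top, \bA\rangle$, and thresholding $g$ at $3C\gamma/\eps$. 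The score analysis then uses only item~1 of \Cref{def:stability-inliers-pca} (the $\|\cdot\|_{\fr,k^2}$ bound on $\overline{\bSigma}_{S'}$), together with $\|\spike uu^\top - \spike vv^\top\|_\fr \lesssim \gamma$, via a direct calculation identical in form to \Cref{lem:sparse-filtering-lemma}.

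Both routes are valid. The paper's approach is more self-contained and makes the parallel with \Cref{lem:sparse-filtering-lemma} transparent: the only new ingredient beyond the mean case is the bootstrapping step that replaces the unknown $\spike vv^\top$ by the computable $\spike uu^\top$. Your approach is slightly cleaner conceptually (no need to instantiate $u$), but it leans on an unstated single-round guarantee of the black-box covariance filter --- namely, that whenever the sample second moment is $\Omega(\gamma)$-far in Frobenius norm from the true covariance, the internal stopping criterion fails and the filter emits scores with the inlier/outlier inequality. That implication holds for the Gaussian covariance filter in \cite{DiaKan22-book}, but you would need to state and cite it explicitly rather than treat it as automatic.
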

We give the proof of the result above in \Cref{sec:proof-of-lem-sparse-pca-filter}.

\paragraph{Preprocessing Condition for Sparse PCA}
\label{app:idealistic-condition-sparse-PCA}
Similar to robust sparse mean estimation, our algorithm tracks the contribution to the diagonal terms and off-diagonal terms separately. 
The following conditions mirrors \Cref{cond:idealistic-condition}.
\begin{condition}[Idealistic Condition for PCA]
\label{cond:idealistic-condition-pca}
Let $T$ be an $\epsilon$-corrupted version of $S$, where $S$ is $(\epsilon,\gamma,k, \spike)$-stable with respect to a $k$-sparse unit vector $v$ and spike strength $\spike$ (cf.\ \Cref{def:stability-inliers-pca}).
We have an $H_1 \subset [d]$ and $|H_1| \lesssim k^2$ such that $T$ satisfies  $ \|\diag(\overline{\bSigma}_T - \bI_d)_{H_1^\complement}\|_{\fr,k^2} \lesssim  \gamma$.
\end{condition}
The next result gives an efficient algorithm to ensure the above condition:
\begin{restatable}{claim}{ClaimPreprocessingPCA}
\label{claim:preprocessing-pca}
Let $\epsilon \in (0,\epsilon_0)$
and $\gamma \in (0,\gamma_0)$ for small constants $\epsilon_0 \in (0,1/2)$, $\gamma_0 \in (0,1)$.
Let sparsity $k \in \N$.
Let $C$ be a large enough constant and $T$ be an $\epsilon$-corrupted set $S$ where $S$ is $(C\epsilon,\gamma,k',\spike)$-pca-stable with respect to an unknown $k$-sparse unit vector $v \in \R^d$,  $\spike \in (0,1)$, $\gamma \geq \eps$, and $k' = C'k^2$ for a large enough constant $C' >0$.
There is a randomized algorithm $\cA$ that takes as input the corrupted set $T$, contamination rate $\epsilon$, sparsity $k \in \N$, 
and a parameter $q \in \N$, and returns a set $T' \subset T$ and $H_1 \subset [d]$ in time $O(d \poly(k,|T|))$ such that with probability $0.9$  
\begin{enumerate}
\item $T'$ is an $O(\eps)$-contamination of $S$.
    \item Each diagonal entry of $\overline{\bSigma}_{T'} \in [1/2,4]$.
\item $T'$ and $H_1$ satisfy  \Cref{cond:idealistic-condition-pca}, i.e., $|H_1| \lesssim k^2$ and $ \|\diag(\overline{\bSigma}_{T'} - \bI_d )_{H_1^\complement}\|_{\fr,k^2} \lesssim  \gamma$.    
\end{enumerate}

\end{restatable}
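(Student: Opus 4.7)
Since the target condition constrains only the diagonal of $\overline{\bSigma}_{T'}$, the algorithm iteratively inspects the $d$-vector $D \in \R^d$ with $D_i := \E_{T'}[x_i^2]$, which is computable in $O(d |T|)$ time (thereby avoiding the $\Omega(d^2)$ cost of forming the full matrix). At each iteration, sort coordinates by $|D_i - 1|$ and set $H_1$ to be the top $C_1 k^2$ coordinates for a sufficiently large constant $C_1$. Check two stopping conditions: (a) every $D_i \in [1/2, 4]$; (b) the top-$k^2$ $\ell_2$-norm of $(D - 1)_{H_1^\complement}$ is at most $C_2 \gamma$. If both hold, return $(T', H_1)$; otherwise identify a coordinate $i^*$ responsible for a violation and invoke a standard univariate variance filter on the projection $x_{i^*}$ via the randomized filtering template of \Cref{thm:generic-randomized-alg}.

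\textbf{Correctness of the 1D filter.} The first item of \Cref{def:stability-inliers-pca} implies that for any $(1-\eps)$-fraction $S' \subseteq S$, the inlier diagonals satisfy $\E_{S'}[x_i^2] = 1 + \spike v_i^2 + \xi_i$ with $\|\xi\|_{2, k^2} \leq \gamma$. Combined with the $k$-sparsity of $v$ and the elementary fact that $\|\xi\|_{2, k^2} \leq \gamma$ forces at most $k^2/\beta^2$ coordinates to have $|\xi_i| \geq \beta \gamma / k$, the set of inlier-distorted coordinates---those with $|\E_{S \cap T'}[x_i^2] - 1| \gtrsim \gamma/k$---has size at most $k + k^2/\beta^2$, which is strictly less than $C_1 k^2$ for suitably chosen $C_1, \beta$. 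Thus, when condition (b) fails, there are at least $k^2$ coordinates in $H_1^\complement$ with $|D_i - 1| \geq C_2 \gamma / k$, and at least one such coordinate $i^*$ is not inlier-distorted. The decomposition $D_{i^*} - 1 = (1-\alpha)(\E_{S\cap T'}[x_{i^*}^2] - 1) + \alpha (\E_{T' \setminus S}[x_{i^*}^2] - 1)$ then forces $|\E_{T' \setminus S}[x_{i^*}^2] - 1| = \Omega(\gamma/(k \eps))$, so a univariate variance filter on $x_{i^*}$ (e.g., the standard variance filter from \cite[Section 2.4]{DiaKan22-book}) produces scores whose expected sum over outliers exceeds that over inliers. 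Violations of condition (a) are handled analogously: since $\E_{S \cap T'}[x_i^2] \in [1 - \gamma, 2 + \gamma]$, any $D_i \notin [1/2, 4]$ directly exhibits outlier-dominated deviation.

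\textbf{Runtime, success probability, and main obstacle.} Each iteration costs $O(d |T|)$ for computing $D$, sorting the $d$ coordinates, and applying the 1D filter. By \Cref{thm:generic-randomized-alg}, each iteration removes at least one point in expectation, so the total iteration count is at most $O(|T|)$, yielding the claimed $O(d \cdot \poly(k, |T|))$ runtime, and the $0.9$-success probability follows from standard amplification of the filter template. The main obstacle is algorithmic: the algorithm cannot directly identify \emph{which} candidate $i^*$ is "not inlier-distorted", since this depends on the unknown $v$ and $\xi$. The resolution is to try the univariate filter on every candidate in $\{i \in H_1^\complement : |D_i - 1| \geq C_2 \gamma/k\}$ in turn, since the counting above guarantees that at least one trial succeeds and each trial costs only $\poly(|T|)$ additional time. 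A secondary concern---that $T'$ remains an $O(\eps)$-contamination of $S$ throughout and that the diagonal bound is inherited across iterations---is handled analogously to \Cref{lem:inherit-stability-mean}, using that the randomized filter removes an $O(\eps)$-fraction of inliers in total.
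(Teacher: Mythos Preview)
Your proposal has a genuine gap in the filtering step for condition (b). The univariate variance filter on a single coordinate $i^*$ does not carry enough signal. You show $|D_{i^*}-1|\ge C_2\gamma/k$ while $|\E_{S\cap T'}[x_{i^*}^2]-1|\le\gamma/k$, and conclude the outlier contribution is $\Omega(\gamma/(k\eps))$. But the stability condition only guarantees $|\E_{S'}[x_{i^*}^2]-1|\le\gamma$ for \emph{arbitrary} $(1-\eps)$-subsets $S'$, not $\gamma/k$; and the inlier per-sample variance of $x_{i^*}^2$ is $\Theta(1)$. Since the per-coordinate shift $\gamma/k$ is much smaller than the stability scale $\gamma$ whenever $k\gg 1$, no score based on $x_{i^*}$ alone can distinguish the outlier-induced $\Omega(\gamma/k)$ shift from the $O(\gamma)$ inlier fluctuation that appears as the filter removes points. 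Concretely, if outliers satisfy $x_{i^*}^2=1+\gamma/(k\eps)$, this value lies well inside the $\chi^2_1$ bulk once $\gamma/(k\eps)=o(1)$, so thresholding cannot isolate them. Your ``try every candidate in turn'' patch does not help either: \Cref{thm:generic-randomized-alg} requires the score condition to hold \emph{every} time scores are produced, so running filters that fail on most candidates removes inliers uncontrollably.

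The paper sidesteps this by aggregating the signal across $\Theta(k^2)$ coordinates. It collects the set $J=\{i:|D_i-1|>\gamma/k\}$; if $|J|>k'=C'k^2$, it picks any $H\subset J$ of size $k'$, notes that at most $k$ of these lie in the support of $v$, and obtains $\|(\overline{\bSigma}_T-\bI-\spike vv^\top)_H\|_{\fr}\gtrsim\sqrt{k'}\cdot\gamma/k\gtrsim\gamma$. This Frobenius-level signal now matches the stability parameter $\gamma$, and the paper invokes the batch filter of \Cref{lem:sparse-pca-filter}, which first estimates $v$ on $H$ via the dense covariance algorithm and then uses quadratic scores $x^\top\bA x$ with $\bA\propto(\overline{\bSigma}_T-\bI-\spike uu^\top)_H$. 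If instead $|J|\le k'$, it simply returns $H_1=J$. (Your handling of condition~(a) is fine, since there a single coordinate already exhibits deviation $\Omega(1)\gg\gamma$, and the paper treats that case with the same batch filter on $H=\{i\}$.)
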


The proof of this simple claim is given in \Cref{app:proof-of-claim-preprocessing-pca}.

Similar to \Cref{lem:inherit-stability-mean}, the next result shows that all large subsets of a set satisfying \Cref{claim:preprocessing-pca} are close to identity in the sparse operator norm.
\begin{restatable}{claim}{ClInheritStabilityPCa}
\label{lem:inherit-stability-pca}
Let $C$ be a large enough constant $C>0$ and $k,k' \in \N$.
 Let $T'' \subset T'$ be two $O(\eps)$-contamination of $S$ such that $S$ is an $(C\eps,\gamma,k',\spike)$-pca-stable with respect to $v$ and $\spike \in (0,1)$.
 Let $H \subset [d]$ be a small subset $|H| \leq k$ such that
 $ \|\diag(\overline{\bSigma}_{T'} - \bI_d )_{H_1^\complement}\|_{\fr,k^2} \lesssim \gamma$.
Then
 $ \|\diag(\overline{\bSigma}_{T''} - \bI_d )_{H_1^\complement}\|_{\op,k} \lesssim \gamma$. 
\end{restatable}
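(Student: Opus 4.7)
The plan is to reduce the claim to a per-coordinate diagonal bound. First I observe that for any diagonal matrix $D$, the sparse operator norm restricted to indices outside $H_1$ equals $\max_{i \in H_1^\complement}|D_{ii}|$, since putting all mass on a single coordinate $i \in H_1^\complement$ realizes the supremum in $\sup_{\|v\|_2=1, \|v\|_0 \le k}|v^\top D v|$. Thus it suffices to show that $|(\overline{\bSigma}_{T''})_{ii} - 1| \lesssim \gamma$ for every $i \in H_1^\complement$, and the claim follows by taking the maximum. Throughout I use $\gamma \ge \eps$ from \Cref{def:stability-inliers-pca}.

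For the upper bound, I would exploit $T'' \subseteq T'$ together with the non-negativity of $x_i^2$, giving $\sum_{x \in T''} x_i^2 \leq \sum_{x \in T'} x_i^2 = |T'|\,(\overline{\bSigma}_{T'})_{ii}$, so that $(\overline{\bSigma}_{T''})_{ii} \leq (|T'|/|T''|)\,(\overline{\bSigma}_{T'})_{ii}$. Since both $T'$ and $T''$ are $O(\eps)$-contaminations of $S$, the ratio $|T'|/|T''|$ is $1 + O(\eps)$. Meanwhile the hypothesis $\|\diag(\overline{\bSigma}_{T'}-\bI_d)_{H_1^\complement}\|_{\fr,k^2} \lesssim \gamma$ bounds each single entry with $i \in H_1^\complement$ by $O(\gamma)$, so $(\overline{\bSigma}_{T'})_{ii} \leq 1 + O(\gamma)$. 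Combining yields $(\overline{\bSigma}_{T''})_{ii} \leq 1 + O(\gamma)$.

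For the lower bound I would drop the outlier contribution entirely, writing $(\overline{\bSigma}_{T''})_{ii} \geq (|T'' \cap S|/|T''|)\,(\overline{\bSigma}_{T'' \cap S})_{ii}$. The set $T'' \cap S$ is a $(1 - O(\eps))$-fraction of $S$, so the first condition of $(C\eps, \gamma, k', \spike)$-pca-stability gives $|(\overline{\bSigma}_{T'' \cap S})_{ii} - (1 + \spike v_i^2)| \leq \gamma$ entry-wise, and hence $(\overline{\bSigma}_{T'' \cap S})_{ii} \geq 1 - \gamma$ by non-negativity of $\spike v_i^2$. Combined with $|T'' \cap S|/|T''| \geq 1 - O(\eps)$, this yields $(\overline{\bSigma}_{T''})_{ii} \geq 1 - O(\gamma)$. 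The two bounds give $|(\overline{\bSigma}_{T''})_{ii} - 1| \le O(\gamma)$ for every $i \in H_1^\complement$, completing the proof.

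The proof is essentially a bookkeeping exercise and I do not anticipate any real obstacles; all the heavy lifting has already been done by \Cref{def:stability-inliers-pca} and the inclusion $T'' \subset T'$. The one subtle point is that the restriction to $H_1^\complement$ is used only for the upper bound, since that is where the Frobenius-norm hypothesis provides entry-wise control, whereas the lower bound comes entirely from inlier stability and in fact holds for every coordinate.
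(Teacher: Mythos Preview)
Your proof is correct and follows essentially the same approach as the paper: both reduce to coordinate-wise bounds via $1$-sparse vectors, obtain the upper bound from $T'' \subset T'$ together with the entry-wise control given by the $\|\cdot\|_{\fr,k^2}$ hypothesis on $T'$, and obtain the lower bound from the pca-stability of $S$ applied to $S\cap T''$. Your write-up is in fact a bit more explicit than the paper's in justifying the reduction to single coordinates and in noting that the $H_1^\complement$ restriction is only needed for the upper bound.
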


\subsection{Subquadratic Time Algorithm For Sparse PCA}
\label{app:subquadratic-time-algorithm-for-sparse-pca}

We now establish the main technical result of this section:

\begin{theorem}[Subquadratic Time Algorithm for Robust Sparse PCA under Stability]
\label{thm:robust-sparse-pca-stability}
Let $\epsilon \in (0,\epsilon_0)$
and $\gamma \in (0,\gamma_0)$ for small constants $\epsilon_0 \in (0,1/2)$, $\gamma_0 \in (0,1)$.
Let $C$ be a large enough constant and  $k \in \N$ be the sparsity parameter and $q \in \N$ the correlation decay parameter with $q \geq 3$.
Let $T$ be an $\epsilon$-corrupted set $S$ where $S$ is $(C\epsilon,\gamma,Ck',\spike)$-pca-stable with respect to an unknown $k$-sparse unit vector $v \in \R^d$,  $\spike \in (0,1)$, and $\gamma \geq \eps$
for $k' = \left( \frac{ C^q k^{2q}}{ \gamma^{2q-2}}\right)$.

There is a randomized algorithm $\cA$ that takes as input the corrupted set $T$, contamination rate $\epsilon$, stability parameter $\gamma$, sparsity $k \in \N$, 
and a parameter $q \in \N$, and produces an estimate $\widehat{v}$ such that
\begin{itemizec}
	\item (Error)  We have $\|\widehat{v}\widehat{v}^\top - vv^\top\|_\fr \lesssim 
	 \sqrt{\frac{\gamma}{\spike}} $ with high probability over the randomness of the samples and the algorithm.
	\item (Runtime) The algorithm runs in time at most
$d^{1.62 + \frac{3}{q}} \poly\left( |T| \log d, (k/\epsilon)^q   \right).$
\end{itemizec}
\end{theorem}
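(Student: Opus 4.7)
My plan is to mirror the blueprint of \Cref{thm:formal-main-result}, adapted to the PCA setting by tracking the \emph{second-moment} matrix $\overline{\bSigma}_T$ against the target $\bI_d + \spike vv^\top$ (rather than $\bSigma_T$ against $\bI_d$), and by replacing the ``output the sample mean'' step with a call to a dense sparse-PCA subroutine on a small identified support. Concretely: first, invoke \Cref{claim:preprocessing-pca} once as a preprocessing step, obtaining a filtered set $T^{(0)}$ and a subset $H_1\subset [d]$ with $|H_1|\lesssim k^2$ such that $\|\diag(\overline{\bSigma}_{T^{(0)}}-\bI_d)_{H_1^\complement}\|_{\fr,k^2}\lesssim \gamma$ and all diagonal entries of $\overline{\bSigma}_{T^{(0)}}$ lie in $[1/2,4]$. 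Second, iteratively filter: at iteration $i$, call the subroutine of \Cref{prop:win-win-guarantee} on $T^{(i)}$ restricted to the off-diagonal block indexed by $H_1^\complement$, with parameters $\kappa \asymp \gamma$, correlation threshold $\rho \asymp \gamma/k$, margin $\tau=(\rho/12)^q$, and sampling parameter $m\asymp (d\log d)(\kappa/\tau)^2$. If the subroutine returns a subset $H^{(i)}\subseteq H_1^\complement$ with $\|\offdiag(\overline{\bSigma}_{T^{(i)}})_{H^{(i)}}\|_\fr \gtrsim \gamma$, feed this $H^{(i)}$ into the sparse-PCA filter of \Cref{lem:sparse-pca-filter} to produce scores and filter $T^{(i)}$ down to $T^{(i+1)}$ in the style of \Cref{alg:randomized_filtering}; if it returns ``$\perp$'', terminate.

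\textbf{Termination analysis and final estimate.} Upon termination at some iterate $T^{(f)}$, \Cref{prop:win-win-guarantee} guarantees $\|\offdiag(\overline{\bSigma}_{T^{(f)}}-\bI_d)_{H_1^\complement}\|_{\fr,k^2}\lesssim \gamma$, and the diagonal control carries over from preprocessing to $T^{(f)}$ via \Cref{lem:inherit-stability-pca}, since each iteration removes only an $O(\eps)$-fraction of points. Combining via triangle inequality gives $\|(\overline{\bSigma}_{T^{(f)}}-\bI_d)_{H_1^\complement}\|_{\op,k}\lesssim \gamma$. The sparse certificate lemma (\Cref{lem:sparse-certificate-pca}) then yields $\|vv^\top-(v)_{H_1}(v)_{H_1}^\top\|_\fr \lesssim \sqrt{\gamma/\spike}$, i.e., $v$ places almost all its mass on $H_1$. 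Since $|H_1|\lesssim k^2$, we invoke the dense sparse-PCA estimator of \Cref{lem:dense-pca-filter} on the coordinates $H_1$ to produce a vector $u$ supported on $H_1$ with $\|\spike uu^\top-\spike (v)_{H_1}(v)_{H_1}^\top\|_\fr\lesssim \gamma$, hence $\|uu^\top-(v)_{H_1}(v)_{H_1}^\top\|_\fr\lesssim \gamma/\spike$. Triangle inequality then gives $\|uu^\top-vv^\top\|_\fr\lesssim \sqrt{\gamma/\spike}$. Finally, I output $\widehat{v}:=u/\|u\|_2$ (or any unit vector if $\|u\|_2$ is tiny, in which case $vv^\top$ itself already has Frobenius norm $1 \lesssim \sqrt{\gamma/\spike}$), which by a standard rank-one perturbation argument satisfies $\|\widehat{v}\widehat{v}^\top-vv^\top\|_\fr\lesssim \sqrt{\gamma/\spike}$.

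\textbf{Parameter choices and runtime.} The sparse Frobenius filter of \Cref{lem:sparse-pca-filter} requires stability with sparsity matching $|H^{(i)}|\leq \kappa^2/\tau^2$. With $\kappa\asymp \gamma$, $\rho\asymp \gamma/k$, and $\tau=(\rho/12)^q$, this sparsity is of order $k^{2q}/\gamma^{2q-2}$, precisely matching the assumed stability sparsity $Ck'$. Each iteration costs $O(m+sd^{0.62}+d^{1.62+3\log(4/\rho)/\log(1/3\tau)})\poly(|T|,\log d,1/\tau)=d^{1.62+3/q}\poly(|T|,\log d,(k/\eps)^q)$ by \Cref{prop:win-win-guarantee}; the iteration count is bounded by $|T|$ as in the mean-estimation argument (each filter step removes at least one point in expectation, and the randomized-filtering analysis of \Cref{thm:generic-randomized-alg} applies with the stopping condition encoded by \Cref{prop:win-win-guarantee}), giving the stated total runtime.

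\textbf{Main obstacle.} The delicate point is the interplay between the spike signal $\spike vv^\top$ and the outlier-induced off-diagonal correlations picked up by correlation detection. Unlike the mean-estimation setting (where all off-diagonal correlations above $\gamma/k$ must come from outliers), here the signal itself can produce off-diagonal entries $\spike v_iv_j$ up to $\spike\|v\|_\infty^2$. My resolution is to run correlation detection \emph{only} on the complement $H_1^\complement$: the preprocessing isolates the coordinates of large individual variance (which include the support of $v$ up to the slack allowed by the certificate lemma) into $H_1$, so that the off-diagonal pairs flagged on $H_1^\complement$ must come predominantly from corruption, and the filter of \Cref{lem:sparse-pca-filter} applied there genuinely separates inliers from outliers. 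A second, more minor technical point is the inheritance of both the diagonal and the stability conditions across iterations, which I handle by appealing to \Cref{lem:inherit-stability-pca} and to the fact that randomized filtering produces a set that is an $O(\eps)$-contamination of $S$.
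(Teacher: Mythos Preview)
Your high-level strategy---preprocess the diagonals, iteratively filter using the win-win subroutine, then estimate on a small support---matches the paper's, but your resolution of the ``main obstacle'' has a genuine gap. You claim that restricting correlation detection to $H_1^\complement$ ensures the flagged off-diagonal pairs come from corruption, because preprocessing ``isolates the support of $v$ into $H_1$.'' This is not what \Cref{claim:preprocessing-pca} guarantees: it only ensures each diagonal entry of $\overline{\bSigma}_{T'}$ on $H_1^\complement$ is within $O(\gamma)$ of $1$, which (by nonnegativity of second moments and stability) yields $\spike v_i^2\lesssim\gamma$ for each $i\in H_1^\complement$---but since $v$ is $k$-sparse, one can still have $\|(v)_{H_1^\complement}\|_2^2$ as large as $\min(1,O(k\gamma/\spike))$ and hence $\|(\spike vv^\top)_{H_1^\complement}\|_\fr$ as large as $\min(\spike,O(k\gamma))$. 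In the regime $\gamma\le\spike\le O(k\gamma)$ the adversary (or even sampling noise) can place essentially all of $v$'s support in $H_1^\complement$. Then when \Cref{prop:win-win-guarantee} returns a small set $H\subseteq H_1^\complement$ with $\|\offdiag(\overline{\bSigma}_{T})_H\|_\fr\gtrsim\gamma$, those correlations can be entirely due to the signal $\spike vv^\top$; the quantity $\|(\overline{\bSigma}_T-\bI-\spike vv^\top)_H\|_\fr$ is then only $O(\gamma)$ from stability, and the precondition of \Cref{lem:sparse-pca-filter} (which needs this to exceed a \emph{large} multiple of $\gamma$) fails. Your filter then has no guarantee of removing more outliers than inliers, and the argument of \Cref{thm:generic-randomized-alg} breaks down. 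Your appeal to the certificate lemma to justify that $v$ mostly lives on $H_1$ is circular: that conclusion requires the off-diagonal control on $H_1^\complement$, which is exactly what the filtering is supposed to produce.

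The paper's fix is not to avoid the signal via $H_1^\complement$ but to \emph{exploit the $k^2$-sparsity of $vv^\top$} via a counting argument, together with a modified stopping rule. It runs \Cref{prop:win-win-guarantee} on all of $[d]$ and distinguishes the non-$\perp$ outputs by size: when $|H|=\kappa^2/\tau^2$ or $|H|\ge\kappa^2/\rho^2$ (both $\gg k^2$ with the chosen constants), at least $|H|-k^2\ge|H|/2$ of the flagged large entries lie outside the support of $vv^\top$, so $\|(\overline{\bSigma}_T-\bI-\spike vv^\top)_H\|_\fr\gtrsim\kappa$ and \Cref{lem:sparse-pca-filter} is legitimately applicable; but when $|H|\le\kappa^2/\rho^2=O(k^2)$, the paper does \emph{not} filter---it terminates, sets $H_{\text{end}}=H\cup H_1$, and runs dense PCA on $H_{\text{end}}$. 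Your algorithm, which filters whenever $H\neq\perp$ and outputs only on $H_1$, is missing both of these pieces.
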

The complete algorithm is given in \Cref{alg:wrapper-algorithm-pca}.
\begin{algorithm}
\caption{Robust Sparse PCA Algorithm}
\label{alg:wrapper-algorithm-pca}
\begin{algorithmic}[1]
\Require corruption rate $\epsilon \in (0,1)$, stability parameter $\gamma \in (0,1)$, corrupted set $T \subset \R^{d}$, correlation-threshold $\rho \in (0,1)$, 
 correlation decay $q \in \N$, sparsity $k \in \N$,
 spike strength $\eta \in (0,1)$, sampling parameter $m \in \N$. We require $T$  to be an $\epsilon$-corrupted version of an $(C\epsilon, \delta, Ck',\spike)$ stable set with respect to $\mu$ for $k':= \left( \frac{ C^q k^{2q}}{ \gamma^{2q-2}}\right)$ and $\gamma \ll 1$.
\Ensure $\widehat{v} \in \R^d$ such that, with high probability, $\|\widehat{v}\widehat{v} - vv^\top\|_{\fr} \lesssim \gamma$. 
\State $H_1,  T' \gets $ be the outputs of \Cref{claim:preprocessing-pca} on $T$
\Statex \Comment{$T'$ is an $O(\eps)$-contamination of $S$, $ 0.5 \bI_d \preceq \diag(\overline{\bSigma}) \preceq 2\bI_d $, $|H_1| \lesssim k^2$ and $ \|\diag(\overline{\bSigma}_T - \bI_d )_{H_1^\complement}\|_{\fr,k^2} \lesssim  \sqrt{\gamma/\eta}$ }
\State $i \gets 1$
\State $T_i \gets T'$.
\State $H \gets$ output of \Cref{prop:win-win-guarantee}  with inputs: corrupted set $T_i$, Frobenius threshold $ \kappa = \Theta(\gamma)$, correlation threshold $\rho = \gamma/k$, margin threshold $\tau = (\rho/12)^q$, sampling parameter $m \asymp d(\kappa^2/\tau^2 + \log d)$, and correlation count $s= d$
\label{line:choice-of-parameters-pca}
\Comment{\Cref{alg:win-win-algorithm-app}}

	\While{$T_i \neq \emptyset$ and $H \neq$ ``$\perp$'' and $|H|\leq \kappa^2/\rho^2$}
	\State Get the scores $f: T_i \to \R_+$ from \Cref{lem:sparse-pca-filter} with inputs $T_i$, $H$, $\epsilon$, and  $\delta$ 
	\State $T_{i+1} \gets $ Filter $T_{i}$ using the scores $f$. 
	\State $i \gets i+1$
	\State Update $H$ as above
	\EndWhile
\State  Let $H_{\text{end}} \gets H \cup H_1$
	\State $u \gets $ output of \Cref{lem:dense-pca-filter} with inputs $T_i$, $\eps$, $\gamma$, $\spike$, and $H_{\text{end}}$
\end{algorithmic}
\end{algorithm}
We now present the proof of its correctness:
\begin{proof}[Proof of \Cref{thm:robust-sparse-pca-stability}]
We can assume that $\gamma \leq \spike$, otherwise we can simply output any unit vector, whose error will be at most $O(1) = O(\sqrt{\gamma/\spike})$. 

We will use the same template of filtering-based algorithms from \Cref{thm:generic-randomized-alg}, with the filtering subroutines provided by \Cref{lem:sparse-pca-filter}.

The first step of \Cref{alg:wrapper-algorithm-pca} is the preprocessing step from \Cref{claim:preprocessing-pca}, which takes at most $\tilde{O}(d\poly(k,|T|))$ time and removes not too many inliers with high probability.
 In particular, the diagonal entries of $\overline{\bSigma}_{T'}$ lie in $[1/2,4]$; Moreover, all the subsequent sets $T_i$'s will continue to satisfy this property by \Cref{lem:inherit-stability-pca} as long as we have removed at most $O(\eps)$ fraction of points (since $\gamma$ is small enough). 
 We shall use the output $H_1$ generated by \Cref{claim:preprocessing-pca} in the end.

We will use a fine-grained result from \Cref{prop:win-win-guarantee}.
Observe that \Cref{prop:win-win-guarantee} can be used not just for the covariance $\bSigma_T$ but also for $\overline{\bSigma}_T$, i.e., without centering, which is what we will use in this proof. 
We shall invoke \Cref{prop:win-win-guarantee} to identify all the coordinate pairs with correlation larger than $\rho = \gamma/k$.
We then set $\tau = (\rho/12)^q$
and use the Frobenius threshold $\kappa = C' \gamma$ for a constant $C>100$.
Defining $k' := \kappa^2/\tau^2$ and $k'':= \kappa^2/\rho^2 $, we note that $k' \geq k'' = C' \gamma^2/ (\gamma/k)^2 =C' k^2 \geq 2k^2$. 
We will show that the set $H$ allows us to filter points using \Cref{lem:dense-pca-filter}.
The proof of \Cref{prop:win-win-guarantee} reveals that it returns $H$ such that either
\begin{enumerate}[label=(Case \Roman*), labelindent=10em ,labelwidth=1.3cm, labelsep*=1em,leftmargin=1.9cm]
 	\item $|H| = k' = \kappa^2/ \tau^2$ and for each coordinate in $i \in H$, there exists a $j \in H$ such that $(i,j)$ is $\tau$-correlated  
    
    Let $T''$ be the current iterate of the corrupted data set.
	Since $\spike vv^\top$ is a $k^2$-sparse matrix and $\offdiag\left(\overline{\bSigma}_T - \bI_d\right)$ has at least $k'$ entries with magnitude at least $\Theta(\tau)$\footnote{Recall that each diagonal entry of $\overline{\bSigma}_{T''}$ is $\Theta(1)$ and thus $\tau$-correlation implies that the corresponding entry in $\overline{\bSigma}_{T''}$ is also $\Theta(\tau)$.}, their difference must have at least $k' - k^2 \geq k'/2 = \kappa^2/\tau^2$ entries with magnitude $\Theta(\tau)$.
	Thus, the sparse Frobenius norm of their difference must be large, i.e.,
	\begin{align*}
	\left\| \left(\overline{\bSigma}_{T''} - \bI_d - \spike vv^\top\right)_H\right\|_{\fr}  \gtrsim \kappa \gtrsim \gamma\,.
	\end{align*}
	Given such an $H$, we filter outliers using \Cref{lem:dense-pca-filter}.

 	\item $|H| \geq \kappa^2/ \rho^2$ and for each coordinate in $i \in H$, there exists a $j \in H$ such that $(i,j)$ is $\rho$-correlated  

 	By the same argument as above, the matrix $\overline{\bSigma}_{T''} - \bI_d - \spike vv^\top$ has at least $k'' - k^2 \geq k''/2 =\kappa^2/(2 \rho^2)$ entries with magnitude $\Theta(\rho)$.
 	Thus, $\left\| \left(\overline{\bSigma}_{T''} - \bI_d - \spike vv^\top\right)_H\right\|_{\fr} \gtrsim \kappa \gtrsim \gamma$.
	Again, we filter outliers using \Cref{lem:dense-pca-filter}.

 	\item $|H| \leq \kappa^2/\rho^2 $ and no coordinate pair in $H^\complement$ is $\rho$-correlated.

  Let $T''$ be the current iterate of the corrupted data set.
    Since $H \subset H_{\text{end}} $ and the coordinates in $H^\complement$ are at most $\rho$-correlated, it follows that the coordinates in $H_{\text{end}}^\complement$ are also at most $\rho$-correlated. 
 	\begin{align}
 	\left\| \offdiag \left(\left(  \overline{\bSigma}_{T''}  - \bI_d \right)_{H_{\text{end}}^\complement}\right) \right\|_{\fr,k^2} \leq \left\| \offdiag\left(\left(  \overline{\bSigma}_{T''}  - \bI_d \right)_{H^\complement}\right) \right\|_{\fr,k^2} \lesssim \rho k \lesssim \gamma\,.
  \label{eq:sparse-pca-proof-offdiag}
 	\end{align}
    We now want to combine the above guarantee on the closeness along the offdiagonals with the guarantee on the closeness along the diagonals from $H_1$.  
    In particular, $H_{\text{end}}$ satisfies that  
    \begin{align}
        \left\|\diag\left(\left(\overline{\bSigma}_{T''} - \bI_d\right)_{H_{\text{end}}^\complement}\right)\right\|_{\op,k} \leq \left\|\diag\left(\left(\overline{\bSigma}_{T''} - \bI_d\right)_{H_{1}^\complement}\right)\right\|_{\op,k} \lesssim \gamma\,,
  \label{eq:sparse-pca-proof-diag}
    \end{align}  
    where $T''$ has small sparse operator norm because it close to the preprocessed set $T'$ and thus \Cref{lem:inherit-stability-pca} is applicable.
    Combining \Cref{eq:sparse-pca-proof-offdiag} and \Cref{eq:sparse-pca-proof-diag}, we obtain
    \begin{align}
    \left\|(\overline{\bSigma}_{T''} - \bI_d)_{H_{\text{end}}^\complement}\right\|_{\op,k} 
    &\leq      
    \left\|\diag\left(\left(\overline{\bSigma}_{T''} - \bI_d\right)_{H_{\text{end}}^\complement}\right)\right\|_{\op,k} + 
    \left\|\offdiag\left( \left(\overline{\bSigma}_{T''} - \bI_d\right)_{H_{\text{end}}^\complement}\right)\right\|_{\op,k} 
    \lesssim\gamma  \,.
    \label{eq:pca-operatorn-norm-bound-proof}
    \end{align}
    The size of $H_{\text{end}}$ is at most $O(k^2) + k' \leq 2k'$.
    Since $S$ satisfies stability with $2k'$, \Cref{lem:sparse-certificate-pca} then implies that the spike $v$ is mostly contained in $H_{\text{end}}$.
 	By invoking the dense PCA algorithm on $H_{\text{end}}$, \Cref{lem:dense-pca-filter} estimates the spike $(v)_{H_{\text{end}}}(v)_{H_{\text{end}}}^\top$ with $uu^\top$, with $u$ supported on $H_{\text{end}}$. Combining, we obtain: 
 	\begin{align*}
 	\left\|vv^\top - uu^\top\right\|_\fr &\leq \left\|vv^\top - (v)_{H_{\text{end}}}(v)_{H_{\text{end}}}^\top\right\|_\fr + \left\|uu^\top - (v)_{H_{\text{end}}}(v)_{H_{\text{end}}}^\top\right\|_\fr \\
 	&\lesssim \sqrt{\gamma/\eta}  +  \gamma/\spike \lesssim \sqrt{\gamma/\spike}, 
 	\end{align*}
 	where the first term is bounded using \Cref{lem:sparse-certificate-pca} with \Cref{eq:pca-operatorn-norm-bound-proof} and the second term is bounded using \Cref{lem:dense-pca-filter}.

 	\item $H = $ ``$\perp$''.

 	The same argument as the previous case holds.

 \end{enumerate} 
Thus, the output of the algorithm is $O( \sqrt{\gamma/\spike})$ close as required.
It remains to show the choice of the parameters $m$, $s$, and $q$ lead to the claimed runtime.  
We take $\tau = (\rho/12)^q$ and  $s = d$.
Finally, we take $m \asymp (d^2/s)\cdot(\kappa^2/\tau^2 + \log d) \leq (d \log d)(\kappa^2/\tau^2)$, which satisfies the parameter constraints in \Cref{prop:win-win-guarantee} (cf.\  \Cref{eq:constrain-on-m-s}).
Letting $n=|T|$, the resulting runtime of a single application of \Cref{prop:win-win-guarantee} is thus at most
 \begin{align*}
 A &= \Big(m + sd^{0.62} + d^{1.62+ 3 \frac{\log(4/\rho)}{\log(1/3 \tau)}}  \Big)\poly(n,\log d,1/\tau) \\
 &\leq \Big(d^{1.62+ 3 \frac{\log(4/\rho)}{\log((\rho/4)^q)}}  \Big)\poly(n,\log d,1/\rho^q) \\
 &\leq \Big(d^{1.62+ \frac{3}{q}}  \Big)\poly(n,\log d, k^q, 1/\epsilon^q)\,.
 \end{align*}
As the iteration count is bounded by $n$, the total runtime is at most $nA$.
\end{proof}

\subsection{Proof of \Cref{thm:informal-robust-sparse-pca}}
\label{sub:proof-of-informal-pca}

\begin{proof}[Proof of \Cref{thm:informal-robust-sparse-pca} using \Cref{thm:robust-sparse-pca-stability}]
    By \Cref{thm:robust-sparse-pca-stability}, it suffices to establish that a set of $n$ i.i.d.\ samples from $\cN(0,\bI + \spike vv^\top)$ for a $k$-sparse unit vector $v$, with high probability, is $(\eps, \gamma, k',\spike)$-pca-stable for $\gamma \lesssim \eps \log(1/\eps)$ and $k':= (\frac{C^qk^{2q}}{\gamma^{2q-2}})$, where $C$ is a large absolute constant.
    \Cref{lem:samp-comp-pca-stab} gives a bound on the sample complexity, stating that it suffices to take $n = \poly(k', \log(d), 1/\eps)$ many samples, thus establishing \Cref{thm:informal-robust-sparse-pca}.
\end{proof}

\section{Discussion}
\label{sec:discussion}
In this article, we presented the first subquadratic time algorithm for robust sparse mean estimation. We now discuss some related open problems and avenues for improvement.
First, the sample complexity of \Cref{thm:informal-main-result} is polynomially larger than $k^2 (\log d)/\epsilon^2$---the sample complexity of existing (quadratic runtime) algorithms.\footnote{The sample complexity of $\Tilde{\Theta}(k^2/\epsilon^2)$ is also conjectured to be near-optimal among computationally-efficient algorithms~\cite{BreBre20}.}
Bridging this gap is an important problem to improve sample efficiency.
Second, \Cref{thm:informal-main-result} is specific to isotropic distributions whose quadratic polynomials have bounded variance  (distributions $P$ with mean $\mu$ that satisfy $\Var_{x \sim P}((x-\mu)^\top \bA (x - \mu)) \lesssim \|\bA\|_\fr^2$ for all symmetric matrices $\bA$).
Indeed, both \cite{DiaKKPS19,CheDKGGS21} rely on the isotropy and the aforementioned variance structure of quadratic polynomials to avoid solving SDPs that appear in \cite{BalDLS17}. 
To the best of our knowledge,
even obtaining an $O(d^2)$ runtime algorithm for unstructured distributions is still open.
Third, because \Cref{thm:informal-main-result} relies on \cite{Valiant15}, which in turn relies on fast matrix multiplication, the resulting algorithm may not offer practical benefits for moderate dimensions; see the discussion in \cite{Valiant15}.
We believe overcoming these limitations is an important practically-motivated question.
Finally,
\Cref{ques:linear} remains open.

\section*{Acknowledgements}
We are grateful to Sushrut Karmalkar, Jasper Lee, Thanasis Pittas, and Kevin Tian for  discussions on robust sparse mean estimation.
We also thank Ilias Diakonikolas and Daniel Kane for their many insights on robustness.

\printbibliography

\newpage
\appendix

\section{Details Deferred from \Cref{sec:preliminaries_for_sparse_estimation}}
\label{app:preliminaries}
In this section, we include details that were omitted from \Cref{sec:preliminaries_for_sparse_estimation}.
\Cref{app:proof_of_sparse_filtering_lemma} provides the proof of \Cref{lem:sparse-filtering-lemma}.
\Cref{app:preprocessing} ensures the preprocessing condition listed in \Cref{cond:idealistic-condition}.
Finally, \Cref{app:fast_correlation_detection} gives further details about the correlation detection algorithm from \cite{Valiant15}.

\subsection{Defining Sparse Scores}
\label{app:proof_of_sparse_filtering_lemma}
In this section, we give the proof of \Cref{lem:sparse-filtering-lemma}.
\LemSparseFilteringLemma*
\begin{proof}
As a first step, we simply project the data points along the coordinates in $H$, and by abusing the notation, call the projected set $T$.
In the remainder of this proof, $\bI$ refers to $\bI_{|H|}$.
Computing this projected set takes at most $d|T||H|$ time.

Let $\lambda = \|\bSigma _{T} - \bI\|_\fr$. Define $\bA = (\bSigma_{T} - \bI)/\|\bSigma_{T_1} - \bI\|_{\fr}$ so that $\bA$ maximizes the trace inner product with $\bSigma_T-\bI$ over unit Frobenius norm matrices.
Define the function $g(x):= (x - \mu_{T})^\top \bA (x - \mu_{T}) - \trace(\bA)$.
We first compute the average of $g(x)$ over the $T$ below:
\begin{align*}
\E_T[g(x)] = \E_{T}\left[ \left\langle (x - \mu_T)(x- \mu_T)^\top - \bI , \bA \right \rangle  \right] = \langle \bSigma_{T} - \bI , \bA \rangle = \lambda\,.
\end{align*}
By abusing notation again, we use $S$ to denote the projection of the inliers $S$ along the coordinates in $H$;
Note that the projected set also inherits $(\epsilon,\delta,k)$-stability and the $\|\cdot\|_{\fr,k^2}$ reduces to the standard Frobenius norm.  
Thus, for any large subset of $S'\subset S$ with $|S'| \geq (1 -\epsilon)|S|$, we use \Cref{lem:certificate-lemma} to obtain the following:
\begin{align*}
\left|\E_{S'}[g(x)]\right| &= \left|\E_{S'}\left[\left\langle (x - \mu_{T})(x - \mu_{T})^\top - \bI, \bA\right\rangle\right]\right|\\
&= \left|\langle \bSigma_{S'} - \bI , \bA \rangle + 2 (\mu - \mu_T)^\top \bA (\mu - \mu_{S'}) + (\mu - \mu_{T})^\top \bA  (\mu - \mu_{T})\right| \\
&\leq \|\bSigma_{S'} - \bI\|_\fr + 2 \|\mu - \mu_T\|_2 \|\bA\|_\fr \|\mu - \mu_{S'}\|_2 + \|\mu- \mu_T\|_2^2 \|\bA\|_\fr^2\\
&\lesssim \delta^2/\epsilon + 2 (\delta + \sqrt{\epsilon \lambda}) \delta + (\delta + \sqrt{\epsilon \lambda 	})^2 \tag*{(using stability and \Cref{lem:certificate-lemma})}  \\
&\lesssim \delta^2/\epsilon + 3\delta^2 + 4\delta \sqrt{\epsilon \lambda}  + \epsilon \lambda  \\
&\lesssim \delta^2/\epsilon + 7\delta^2 + 2\epsilon \lambda  \tag*{(using  $2ab\leq a^2 + b^2$)}\\
&\lesssim \delta^2/\epsilon + \epsilon \lambda,
\numberthis \label{eq:guuarantee-S'}
\end{align*}
where we used that $\epsilon \leq 1$. The following helper result, which is a slight generalization of \cite[Proposition 2.19]{DiaKan22-book} from non-negative $h$'s to real-valued $h$'s  will be useful.
\begin{claim}
\label{cl:scores-over-inliers}
Let $h:S \to \R$ be a real-valued function on a finite set $S$.
Further suppose that $\left|\E_{S'}[h(X)]\right| \leq \tau$ for all sets $S' \subset S$ with $|S'| \geq (1- \eps)|S|$. 
For an $\eps \leq 1/2$,  defining $f'(x) = h(x)\mathbf{1}_{h(x) \geq 3 \tau/\eps}$, we have that $\E_{S}[f'(x)] \leq 3\tau$. 
\end{claim}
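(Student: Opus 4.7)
The core idea is a case split on the size of the super-level set $L := \{x \in S : h(x) \geq 3\tau/\eps\}$, with the goal of showing $\sum_{x \in L} h(x) \leq 3\tau n$, where $n := |S|$. This is equivalent to $\E_S[f'] \leq 3\tau$. In the easy case I would bound $\sum_L h$ by removing the level set entirely; in the hard case I would derive a contradiction using a carefully chosen subset, exploiting the two-sided bound $|\E_{S'}[h]| \leq \tau$ (which is the key strengthening beyond the non-negative statement from \cite{DiaKan22-book}).

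First, applying the hypothesis with $S'=S$ gives $|\sum_S h(x)| \leq \tau n$. For \emph{Case 1}, assume $|L| \leq \eps n$. Then $S' := S \setminus L$ has size $\geq (1-\eps) n$, so by hypothesis $\sum_{S'} h(x) \geq -\tau|S'| \geq -\tau n$. Subtracting,
\[
\sum_{x \in L} h(x) \;=\; \sum_{x \in S} h(x) - \sum_{x \in S'} h(x) \;\leq\; \tau n + \tau n \;=\; 2\tau n,
\]
which is $\leq 3 \tau n$, as desired.

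For \emph{Case 2}, assume $|L| > \eps n$. Sort the elements of $L$ in decreasing order of $h$-value and let $L_1 \subset L$ be the top $\lfloor \eps n \rfloor$ of them (note $|L| \geq \lfloor \eps n\rfloor$ holds here). Then $|S \setminus L_1| = n - \lfloor \eps n \rfloor \geq (1-\eps) n$, so the hypothesis yields $|\sum_{S \setminus L_1} h(x)| \leq \tau n$. Combining with the overall bound $|\sum_S h| \leq \tau n$ gives the upper bound $\sum_{x \in L_1} h(x) \leq 2 \tau n$. On the other hand, every element of $L_1 \subseteq L$ satisfies $h(x) \geq 3\tau/\eps$, so
\[
\sum_{x \in L_1} h(x) \;\geq\; \lfloor \eps n \rfloor \cdot \frac{3\tau}{\eps}.
\]
Since $\eps \leq 1/2$ and we may assume $n$ is large enough (in our application $n \gg 1/\eps$) that $\lfloor \eps n \rfloor > \tfrac{2}{3}\eps n$, the right-hand side exceeds $2\tau n$, a contradiction. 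Hence Case 2 cannot occur, and Case 1 completes the proof.

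\textbf{Main obstacle.} The only subtlety lies in the handling of real-valued (as opposed to non-negative) $h$: here $\sum_{S \setminus L} h$ could \emph{a priori} be very negative, which is exactly why we need the two-sided stability bound $|\E_{S'}[h]| \leq \tau$ rather than only an upper bound. Once this is in hand, the two cases follow from an essentially identical subtract-and-compare argument as in the non-negative version in \cite[Proposition 2.19]{DiaKan22-book}; the only minor wrinkle is the integer-rounding in the threshold $\lfloor \eps n\rfloor$ in Case 2, which is harmless for $n$ even moderately large (and our stability sample complexity in \Cref{lemma:stab-sample-complexity} comfortably ensures this).
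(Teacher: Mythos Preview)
Your proof is correct and follows essentially the same approach as the paper's: both arguments case on the size of the super-level set $L$, rule out $|L|>\eps n$ by a contradiction obtained from subtracting the bound on $\sum_S h$ from the bound on $\sum_{S\setminus L_1}h$, and then bound $\sum_L h$ in the small-$|L|$ case by the same subtraction trick. Your execution is in fact slightly more direct than the paper's, which routes through an intermediate bound on $\E_{S''}[\max(h,0)]$ for small $S''$ (requiring the observation that $h\geq -\tau$ on the top-$\eps|S|$ elements) before concluding; you bypass this and go straight to the subtraction. The integer-rounding caveat you flag is real but harmless in the application, and the paper's proof implicitly makes the same assumption when it speaks of the ``top $\eps|S|$ entries.''
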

\begin{proof}
     First, we show the following:
    for all subsets $S'' \subset S$ with $|S''| \leq \eps|S|$,
\begin{align}
  \E_{S''}[\max(h(X),0)] \leq \tau + 2\tau/\eps
  \label{eq:sum-inliers-part-1}\,.
\end{align} 
    To that end, 
    for all sets $S''$ with $|S''| = \eps |S|$, the triangle inequality implies 
    \begin{align}
    \label{eq:sum-over-small-filter}
        \left|\E_{S''}[h(X)] \right|=\left| \frac{\E_S[h(X)] - (1-\eps)\E_{S \setminus S''}[h(X)]}{\eps} \right| \leq 2 \tau/\eps.
    \end{align}
    Let $S_*$ be the top $\eps |S|$ entries of $S$ in the increasing order of $h(\cdot)$, not in the absolute value. Then establishing \Cref{eq:sum-inliers-part-1} is equivalent to establish an upper bound on
    $\frac{1}{|S_*|} \sum_{x \in S_*} \max(h(X),0)$.    %
    Thus, if all the entries of $S_*$ are bigger than $0$, then \Cref{eq:sum-inliers-part-1} follows by \Cref{eq:sum-over-small-filter}.
    We shall show that $h(x)$ on $S_*$ is lower bounded by $-\tau$. Under this condition, we see that the desired result follows similarly  by \Cref{eq:sum-over-small-filter}: $\frac{1}{|S_*|} \sum_{x \in S_*} \max(h(X),0) \leq \frac{1}{|S_*|} \sum_{x \in S_*} \left(h(X) + \tau\right) \leq \tau + 2\tau/\eps$.
We now establish that $\min_{x \in S_*} h(x) \geq -\tau$.    If there exists an $x \in S_*$ with $h(X) \leq -\tau$, then the average of $S\setminus S_*$ must be less than $-\tau$, contradicting the assumption that $|\E_{S\setminus S_*} [h(X)]| \leq \tau$. Thus, we have established \Cref{eq:sum-inliers-part-1}.

Given \Cref{eq:sum-inliers-part-1}, we see that the fraction of points with $h(x) \geq 3 \tau/\eps$ must be less than $\eps |S|$. Otherwise, the conditional average over those points would be at least than $3\tau/\eps \geq 2 \tau/\eps + \tau$, contradicting \Cref{eq:sum-inliers-part-1}.
Therefore, the function $f$ is non-zero only on at most $\eps$-fraction of $S$. Therefore, the non-negativity of $f$ implies that
\begin{align}
    \frac{1}{|S|} \sum_{x \in S} f(X) \leq \frac{1}{|S|}\max_{S'' \subset S: |S''| \leq \eps |S|  } \sum_{x \in S''} f(X) \leq  \frac{1}{|S|}\max_{S'' \subset S: |S''| \leq \eps |S|  } \sum_{x \in S''} \max(h(X), 0) \leq \eps( \tau + 2 \tau/\eps ),
\end{align}
where we use non-negativity of $f$ and \Cref{eq:sum-inliers-part-1}.
\end{proof}
Let $R:= C'(\frac{\delta^2}{\eps} + \eps \lambda)$, for a large constant $C'>0$, be the bound from  \Cref{eq:guuarantee-S'}. Combining this with the claim above, we see that defining $f(x)$ to be $g(x)\mathbf{1}_{x \geq 3R/\eps}$, the sum of scores over inliers is small:

\begin{align*}
\sum_{x \in S \cap T} f(x) \leq \sum_{x \in S} f(x) \leq 3R|S| = 3C'(\delta^2/\eps + \eps \lambda)|S| \leq 0.25 \lambda|S|  \,,    
\numberthis\label{eq:sum-over-inliers}
\end{align*}
where we use that $ 3C'\epsilon \leq 1/8$ and $3C'\delta^2/\epsilon \leq \lambda/8$.

On the other hand, $\E_{x \in T\setminus S}[f(x)]$ must be large argued as argued below. We observe that $f(x) \geq g(x) - 3R/\eps$, and thus applying \Cref{eq:guuarantee-S'} to $T \cap S$, which is of size at least $(1 - \epsilon)|T| = (1 - \epsilon)|S|$, we obtain
\begin{align*}
\sum_{x \in T\setminus S}f(x) &\geq  \sum_{x \in  T \setminus S}  (g(x) - 3R/\eps) =\left(\left(\sum_{x \in T} g(x) \right) -  \left(\sum_{x \in T \cap S} g(x)\right) \right) -  \left(|T \setminus S| 3R/\eps \right)\\
&\geq \left(  \lambda |T|\right) - \left((|T \cap S|) R\right)  - \epsilon |T| \left( 3 R/\eps \right) \\
&\geq \lambda |T| - 4|T| R \geq \lambda|T|/2\,,
\numberthis\label{eq:sum-scores-T}
\end{align*}
where the last inequality follows if we show that $R \leq  \lambda/8$.
Indeed, this follows if $C'\delta^2/\eps \leq \lambda/16$ and $C'\eps \leq 1/16$.
Combining \Cref{eq:sum-over-inliers} and \Cref{eq:sum-scores-T}, we get the desired result; the claim on $\max_{x \in T}f(x)$ follows from \Cref{eq:sum-scores-T}.
The complete  algorithm is given below:
\begin{algorithm}
\caption{Quadratic Scores}
\begin{algorithmic}[1]
\State $T \gets \{(x)_H: x \in T\}$ \Comment{Projection onto $H$} 
\State Let $\bA$ be the matrix $(\bSigma_{T} - \bI)/\|\bSigma_{T} - \bI\|_{\fr}$ \Comment{The matrix such that  $\langle \bA , \bSigma_T - \bI \rangle = \|\bSigma_{T} - \bI\|_\fr$}
\State Define $g(x):= (x - \mu_{T})^\top \bA (x - \mu_{T}) - \trace(\bA)$
\State Define $f(x)$ to be $g(x)$ if $g(x) \geq 3C'\left(\frac{\delta^2}{\epsilon^2} +  \lambda\right)$ otherwise $0$
\State Return $f$
\end{algorithmic}
\end{algorithm}

\end{proof}

\subsection{Preprocessing: Proofs of \Cref{claim:preprocessing,lem:inherit-stability-mean}}
\label{app:preprocessing}

In this section, we outline how to ensure \Cref{cond:idealistic-condition} quickly using \Cref{alg:randomized_filtering,lem:sparse-filtering-lemma}.
\CondDataProcess*
Given the sparse filtering lemma (\Cref{lem:sparse-filtering-lemma}),
we can simply filter along the diagonals to ensure \Cref{cond:idealistic-condition} as shown below.
\begin{claim}
\label{claim:preprocessing}
Let $\epsilon \in (0 ,\epsilon_0)$ for a small absolute constant $\epsilon_0$.
Let $c$ be a small enough absolute constant and $C$ be a large enough constant.
Let $T$ be an $\epsilon$-corrupted version of $S$, where $S$ is $(\epsilon,\delta,k^2)$-stable with respect to $\mu$ such that $\delta^2/\epsilon \leq c$.
Then there is a randomized algorithm $\cA$ that takes as inputs $\epsilon$, $\delta$, and $k$
such that it outputs a set $T' \subseteq T$ such that with probability at least $8/9$:
(i) $T'$ is at most $10\epsilon$-corruption of $S$ and (ii) $\|\diag\left( \bSigma_{T'} - \bI_d \right)\|_{\fr,k^2} \leq  C\delta^2/\epsilon \leq  0.1$, and (iii) the algorithm runs in time $\tilde{O}(|T|dk^2 + |T|^2d)$. 
\end{claim}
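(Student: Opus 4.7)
The plan is to instantiate the generic randomized filtering template of \Cref{alg:randomized_filtering} (with guarantees from \Cref{thm:generic-randomized-alg}) using the stopping condition
\[
\cS:\ \|\diag(\bSigma_T - \bI_d)\|_{\fr, k^2}\ \leq\ C\delta^2/\epsilon,
\]
and scores produced by \Cref{lem:sparse-filtering-lemma}. The entire reason for stopping on the diagonal, as opposed to the full covariance, is that the $d$ diagonal entries of $\bSigma_T$ can be evaluated in $O(|T|d)$ time, sidestepping the forbidden $\Omega(d^2)$ cost of writing down any off-diagonal entries.

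At each iteration I compute the diagonal of $\bSigma_T - \bI_d$ in $O(|T|d)$ time and extract the set $H \subseteq [d]$ of the $k^2$ coordinates with largest $|(\bSigma_T)_{ii} - 1|$ in $\tilde{O}(d)$ time via quick-selection. Whenever $\cS$ fails, these coordinates witness
\[
\|(\bSigma_T - \bI_d)_H\|_{\fr}\ \geq\ \|\diag((\bSigma_T - \bI_d)_H)\|_{\fr}\ =\ \|\diag(\bSigma_T - \bI_d)\|_{\fr, k^2}\ >\ C\delta^2/\epsilon,
\]
so, since $|H| \leq k^2$ and $S$ is $(\epsilon,\delta,k^2)$-stable, \Cref{lem:sparse-filtering-lemma} applied with its sparsity parameter set to $k^2$ produces a score function $f:T \to \R_+$ satisfying $\sum_{x \in S \cap T} f(x) \leq \sum_{x \in T \setminus S} f(x)$ and $\max_{x \in T} f(x) > 0$. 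Inspecting the proof of \Cref{lem:sparse-filtering-lemma}, I may take the certifying matrix $\bA$ to be the diagonal matrix supported on $H$ with entries $(\bSigma_T)_{ii} - 1$ normalized to unit Frobenius norm; this $\bA$ still attains $\langle \bA, \bSigma_T - \bI_d\rangle \gtrsim \delta^2/\epsilon$, and each score $(x-\mu_T)^\top \bA (x-\mu_T) - \trace(\bA)$ then reduces to a sum of $|H| \leq k^2$ squared coordinate deviations, computable in $O(k^2)$ time per sample.

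Feeding $f$ into \Cref{alg:randomized_filtering} and invoking \Cref{thm:generic-randomized-alg} immediately yields (i) and (ii): with probability at least $8/9$, the output $T'$ is a $10\epsilon$-corruption of $S$ and satisfies $\cS$. The second inequality $C\delta^2/\epsilon \leq 0.1$ in (ii) is automatic by choosing the constant $c$ in the hypothesis $\delta^2/\epsilon \leq c$ small enough that $Cc \leq 0.1$. For the runtime bound (iii), each iteration costs $O(|T|d)$ for the stopping check and $O(|T|k^2)$ for the scores, and the number of iterations is at most $|T|$ since \Cref{alg:randomized_filtering} deterministically removes the $\argmax$ of $f$ (which has removal probability $1$) in every iteration; the total is $\tilde{O}(|T|^2 d + |T|^2 k^2) = \tilde{O}(|T|^2 d + |T| d k^2)$, matching the claim.

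I do not foresee a serious obstacle: the only point that must be verified carefully is that the diagonal-only choice of $\bA$ continues to witness the precondition of \Cref{lem:sparse-filtering-lemma}, which it does precisely because $H$ was selected greedily on the diagonal and thus the full submatrix $(\bSigma_T - \bI_d)_H$ has Frobenius norm at least that of its diagonal.
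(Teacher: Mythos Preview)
Your proposal is correct and follows essentially the same approach as the paper: instantiate \Cref{alg:randomized_filtering} with the diagonal stopping condition $\|\diag(\bSigma_{T_i}-\bI_d)\|_{\fr,k^2}\le C\delta^2/\epsilon$, and when it fails, feed the top-$k^2$ diagonal coordinates $H$ into \Cref{lem:sparse-filtering-lemma} to obtain scores, concluding via \Cref{thm:generic-randomized-alg}. Your write-up is in fact more explicit than the paper's (which merely says ``If the stopping condition is not satisfied, \Cref{lem:sparse-filtering-lemma} returns the required scores''), since you spell out the inequality $\|(\bSigma_T-\bI_d)_H\|_\fr \ge \|\diag(\bSigma_T-\bI_d)\|_{\fr,k^2}$ that verifies the precondition of \Cref{lem:sparse-filtering-lemma}, and you give a concrete iteration-count and per-iteration cost accounting; your suggestion to take $\bA$ diagonal is a harmless efficiency tweak not present in the paper.
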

\begin{proof}
We can simply invoke \Cref{alg:randomized_filtering} with the stopping condition on the set $T_i$ set to $$\|\diag\left( \bSigma_{T_i} - \bI_d \right)\|_{\fr,k} \leq C\delta^2/\epsilon$$ for a constant $C$ large enough.
To evaluate this stopping condition, we can simply compute the matrix $\diag\left( \bSigma_{T'} - \bI_d \right)$ in $O(d|T|)$ time.
 The associated $\|\cdot\|_{\fr,k^2}$ can be easily calculated by computing the Euclidean norm of its largest $k^2$ entries, again computable in $\tilde{O}(dk^2)$ time.

If the stopping condition is not satisfied, \Cref{lem:sparse-filtering-lemma} returns the required scores.
Thus, we get the desired guarantees on the set $T$ from \Cref{thm:generic-randomized-alg}.
\end{proof}
We now give the proof of \Cref{lem:inherit-stability-mean}.
\LemInheritStability*
\begin{proof}
    First we note that the lower bound on the sparse eigenvalues follow rather directly as shown below. 
    We make use of the equality $\bSigma_{T''} := \frac{1}{|T''|^2}\sum_{x,y \in T''} (x - y)(x-y)^\top$.
    For any sparse unit vector $v$, we use the stability condition applied to $S \cap T''$ to obtain the following:
    \begin{align*}
 v^\top \bSigma_{T''} v &=         \frac{1}{|T''|^2} \sum_{x, y \in T''} (v^\top (x-y))^2 \geq  \frac{1}{|T''|^2} \sum_{x, y \in S \cap T''} (v^\top (x-y))^2 \tag*{(using non-negativity)}\\
 &= \frac{|S\cap T''|^2}{|T''|^2} v^\top \bSigma_{S \cap T''} v  \geq (1 - O(\eps))v^\top \bSigma_{S \cap T''} v\\
 &= (1 - O(\eps)) \left( v^\top \left( \E_{S \cap T''} \left[ (X - \mu)(X - \mu)^\top\right] \right) v -  \left(v^\top (\mu - \mu_{S \cap T''})\right)^2 \right) \\
 &\geq (1 - O(\eps)) \left( 1 - O(\delta^2/\eps) -O(\delta^2) \right) \tag*{(using stability)}\\
 &\geq 1 - O(\delta^2/\eps), \numberthis \label{eq:idealistic-persists-lower-bound}
    \end{align*}
where the last inequality uses $\eps \leq \delta$.
For the upper bound, we observe that for any matrix $\bA$, $\|\diag(\bA)\|_{\op,k}$ is attained by $1$-sparse unit vectors $v$, i.e., $\|\bA\|_{\op,1} = \|\diag(\bA)\|_{\op,k}$.
Thus, for any $T'' \subset T'$ and that $|T'| \leq |T''|(1 + O(\eps))$:
for any $1$-sparse unit vector $v$, 
    \begin{align*}
 v^\top \bSigma_{T''} v &=         \frac{1}{|T''|^2} \sum_{x, y \in T''} (v^\top (x-y))^2 \leq  \frac{|T'|^22}{|T''|^2} \frac{1}{|T'|^2} \sum_{x, y \in T'} (v^\top (x-y))^2 \tag*{(using nonnegativity)}\\
 &= \frac{|T'|^22}{|T''|^2} v^\top \bSigma_{T'} v \leq (1 + O(\eps))v^\top \bSigma_{T'} v  \leq (1 + O(\eps))(1 + O(\delta^2/\eps)) = 1 + O(\delta^2/\eps)\,,
\numberthis \label{eq:idealistic-persists-upper-bound}
    \end{align*}
    where we use that for $1$-sparse unit vectors $v$, $v^\top \bSigma_{T'} v = v^\top \diag\left(\bSigma_{T'}\right) v$.
Combining \Cref{eq:idealistic-persists-lower-bound,eq:idealistic-persists-upper-bound} for all $1$-sparse unit vectors $v$, we obtain that  $\|\diag\left(\bSigma_{T''} - \bI\right)\|_{\op,k} = O(\delta^2/\eps)$.

\end{proof}

\subsection{Fast Correlation Detection}
\label{app:fast_correlation_detection}
In this section, we show how to obtain \Cref{thm:robust-correlation-detection} from \cite[Theorem 2.1]{Valiant15}.
\begin{theorem}[{Robust Correlation Detection For Boolean Vectors in Subquadratic Time~\cite[Theorem 2.1]{Valiant15}}]
\label{thm:binary-robust-correlation-detection}

Consider a set of $n'$ vectors in $\{-1, 1\}^{d'}$ and constants $\rho, \tau \in [0,1]$ with $\rho > \tau$ such that for all but at most $s$ pairs $u, v$ of distinct vectors, $|u^\top v|/ \|u\|_2 \|v\|_2 \leq \tau$.
 There is an algorithm that, with probability $1-o(1)$, will output all pairs of vectors whose normalized inner product is least $\rho$.
 Additionally, the runtime of the algorithm is 
 \begin{align*}
 \left(sd'{n'}^{0.62} + {n'}^{1.62 + 2.4 \frac{\log(1/\rho)}{\log(1/\tau)}}\right) \poly\left( \log n, 1/\tau \right)\,.
\end{align*}
\end{theorem}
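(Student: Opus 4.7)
The plan is to implement Valiant's template: reduce all-pairs boolean correlation detection to a small number of rectangular matrix multiplications via an amplification-plus-bucketing scheme. The approach does not require improving any known matrix-multiplication bounds; the exponents $1.62$ and $2.4$ arise from numerically balancing bucket sizes against the best-known rectangular matrix multiplication exponents $\omega(1, k, 1)$.

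First, I would amplify the gap between the $\rho$-correlation signal and the $\tau$-correlation noise. For each vector $u \in \{-1,1\}^{d'}$, construct a lifted representation $\phi(u) \in \R^{d''}$ whose inner products approximate a degree-$q$ polynomial of the original correlation: $\langle \phi(u), \phi(v) \rangle \approx P_q(\langle u, v\rangle/d')$, with $P_q$ chosen so that $P_q(x) \gtrsim 1$ for $|x| \geq \rho$ and $|P_q(x)| \leq {n'}^{-c}$ for $|x| \leq \tau$. A tensor-power construction (using $u^{\otimes q}$) followed by a Johnson--Lindenstrauss projection realizes this with $q = \Theta(\log(1/\rho)/\log(1/\tau))$ and $d'' = \poly(\log n', 1/\tau)$, at preprocessing cost $\poly(n', d')$ which is absorbed by the final runtime.

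Second, randomly partition the $n'$ lifted vectors into $g$ buckets of size roughly $n'/g$, and form bucket aggregates $S_B = \sum_{u \in B} \phi(u)$. For a correlated pair $(u^*, v^*)$ placed in distinct buckets $B, B'$ (which happens with probability $1 - 1/g$), the inner product $\langle S_B, S_{B'}\rangle$ contains the amplified signal $\langle \phi(u^*), \phi(v^*)\rangle \gtrsim 1$, together with noise from the at most $(n'/g)^2$ other pairs, each contributing $\leq {n'}^{-c}$ from amplification, plus the $s$ exceptional pairs contributing up to $1$ apiece. By tuning $c$ and $g$ the signal dominates. Stack the bucket aggregates into a $g \times d''$ matrix $M$ and form $M M^\top$ using rectangular matrix multiplication; the bucket-pairs with anomalously large entries contain a correlated pair, and recursing on the $\leq 2n'/g$ vectors they contain pins down the actual pair in $O(\log n')$ levels. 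Random rebucketing across $O(\log n')$ independent trials boosts the success probability to $1 - o(1)$ and handles multiple correlated pairs.

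The cost analysis is where the claimed exponents materialize. Setting $g \asymp {n'}^a$ for an appropriate $a$ and plugging the best-known rectangular-matrix-multiplication bounds $\omega(1, k, 1)$ of Le Gall (and predecessors) into the cost of the $g \times d'' \times g$ multiplication yields a per-level cost of ${n'}^{1.62 + 2.4 \log(1/\rho)/\log(1/\tau)} \poly(\log n', 1/\tau)$; telescoping across the $O(\log n')$ recursion levels preserves this bound up to polylogarithmic factors. The additive $s d' {n'}^{0.62}$ term arises from processing the (up to $s$) exceptional pairs by targeted comparisons over recursion-induced subsets of size ${n'}^{0.62}$. The main obstacle is precisely this parameter calibration: larger amplification degree $q$ widens the signal-to-noise gap but inflates $d''$ and hence the matrix-multiplication cost, while larger $g$ reduces per-bucket noise but raises the cost of $M M^\top$. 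Pinning down the precise coefficients $1.62$ and $2.4$ requires invoking the Le Gall bounds and numerically optimizing $a$ (together with a secondary parameter governing how $d''$ grows with $q$); this step is delicate but essentially calculational once the amplification-and-bucketing skeleton is in place.
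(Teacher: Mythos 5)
The paper never proves this theorem: it is labeled and cited verbatim as \cite[Theorem 2.1]{Valiant15}, and the only thing actually proved in \Cref{app:fast_correlation_detection} is the reduction from this boolean statement to \Cref{thm:robust-correlation-detection}. So there is no ``paper's own proof'' to compare against; you have, in effect, sketched a from-scratch proof of a cited result. Within that caveat, your amplify--bucket--matrix-multiply--localize architecture is indeed the skeleton of Valiant's algorithm, but your sketch has a concrete gap that would prevent it from closing.

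The amplification step, as you state it, is infeasible. You want a lift $\phi$ into $d'' = \poly(\log n', 1/\tau)$ dimensions with $\langle \phi(u),\phi(v)\rangle = \Omega(1)$ whenever the normalized correlation exceeds $\rho$ and $|\langle\phi(u),\phi(v)\rangle|\leq n'^{-c}$ whenever it is below $\tau$. But any tensor-plus-JL sketch that preserves inner products to additive error $n'^{-c}$ needs $d'' = \Omega(n'^{2c})$, not $\polylog$; this is in direct tension with your stated $d''=\poly(\log n',1/\tau)$, and if you instead let $d''$ grow polynomially in $n'$ the single $g\times d''\times g$ product already overshoots the claimed runtime. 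What Valiant actually does is degree-$q$ \emph{monomial} amplification (products of $q$ randomly sampled coordinates), sending $\rho\mapsto\rho^q$ and $\tau\mapsto\tau^q$ --- \emph{both still small}. The essential step you omit is the concentration analysis \emph{after} bucket aggregation: the signal from one $\rho$-correlated pair in a bucket pair scales like $d''\rho^q$, while the contribution of the $\approx (n'/g)^2$ weakly correlated cross-pairs behaves like a centered sum and concentrates at scale $\approx (n'/g)\sqrt{d''}$, not at the worst-case $(n'/g)^2 d''\tau^q$. That square-root (CLT-type) cancellation, not polynomial thresholding to $n'^{-c}$, is what makes the parameter balancing go through and ultimately yields $1.62$ and $2.4$. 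Your worst-case additive accounting of the noise (``each contributing $\leq n'^{-c}$'') replaces precisely the quantity that cannot be made that small cheaply, and once one substitutes the realistic $\tau^q$ the bound does not close without the cancellation argument. Relatedly, your explanation of the $s d' n'^{0.62}$ term (``targeted comparisons over recursion-induced subsets of size $n'^{0.62}$'') is not how Valiant charges the $s$ exceptional pairs; in his analysis they appear as a bounded number of corrupted entries of the bucket Gram matrix and are absorbed in the candidate-verification phase, whose cost you would need to derive rather than assert.
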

An improved algorithm with better runtime was then provided in \cite[Corollary 1.8]{KarKK18-correlation}, but we choose the version above for its simplicity.
We provide the proof of \Cref{thm:robust-correlation-detection}, the version we used in  this work, from \Cref{thm:binary-robust-correlation-detection} using standard arguments below:
\begin{proof}[Proof of \Cref{thm:robust-correlation-detection} from {\cite[Theorem 2.1]{Valiant15}}]
Let $X \in \R^{d \times n}$ denote the matrix with the columns of $X$ denoting the centered vectors of $T$.
That is, if $T = \{z_1,\dots,z_n\}$, then the $i$-th column of $X$ is equal to $z_i - \mu_T$.
 Let $X_i$ denote the $i$-th row of the matrix $X$.
For $i,j \in [d] \times [d]$, the correlation between the $i$-th and the $j$-th coordinate on $T$, 
$\corr(i,j)$,
is equal to $\left|\frac{X_i^\top X_j}{ \|X_i\|_2 \|X_j\|_2}\right|$.
Thus, we would like to apply \Cref{thm:binary-robust-correlation-detection} with the rows of $X$ (thus $n' = d$ and $d' = n$).
However, $X'$ is not a binary matrix. 

A standard reduction allows us to compute a binary matrix that preserves the correlation between the rows of $X$. 
Let $G \in \R^{n \times m}$ be a matrix with independent $\cM(0,1)$ entries.
Let $Y =  X G$ be in $\R^{d \times m}$ and let $Y' = \sign(Y) \in \R^{d \times m}$, where $\sign$ is applied elementwise.
Thus, $Y'$ is a boolean matrix as required in \Cref{thm:binary-robust-correlation-detection}.
The following arguments show that the correlation between the rows of $Y$ is preserved for $m$ large enough.
Let $Y_i,Y'_i$ denote the $i$-th row of the matrices $Y, Y'$.
\begin{lemma}[{\cite[Lemma 4.1]{Valiant15}}]
If $m \geq 10 \log(n)/ \gamma^2$, then
with probability $1 -o(1)$,
we have that for all $i \neq j \in [n]$,
we have that 
\begin{align*}
\left| \frac{\langle Y'_i, Y'_j \rangle}{\|Y'_i\|_2 \|Y'_j\|_2} - \frac{2}{\pi} \arcsin\left( \frac{\langle X_i, X_j \rangle}{\|X_i\|_2 \|X_j\|_2} \right)\right| \leq \gamma.
\end{align*}
\end{lemma}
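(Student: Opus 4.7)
The plan is a standard two-step argument: prove the identity in expectation for a single column of $G$, then concentrate by independence across the $m$ columns, and finally union-bound over pairs.

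First, fix distinct $i, j$ and a column $k \in [m]$. The pair $(Y_{i,k}, Y_{j,k}) = (X_i^\top G_{\cdot,k}, X_j^\top G_{\cdot,k})$ is a centered bivariate Gaussian with covariance determined by $\|X_i\|_2^2, \|X_j\|_2^2,$ and $\langle X_i, X_j\rangle$. The Goemans–Williamson / Grothendieck identity then gives
\begin{align*}
\mathbb{E}\bigl[\sign(Y_{i,k})\sign(Y_{j,k})\bigr] \;=\; \frac{2}{\pi}\arcsin\!\left(\frac{\langle X_i, X_j\rangle}{\|X_i\|_2 \|X_j\|_2}\right).
\end{align*}
I would recall the short geometric proof of this identity: by rotational invariance reduce to a standard 2D Gaussian whose coordinates have the given correlation, then note that $\sign(Y_{i,k}) \neq \sign(Y_{j,k})$ exactly when the sampled direction lies in a disagreement wedge whose angular measure is $\arccos(\rho)/\pi$, where $\rho = \langle X_i,X_j\rangle/(\|X_i\|_2\|X_j\|_2)$, and simplify.

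Second, since $Y'_i \in \{\pm 1\}^m$, we have $\|Y'_i\|_2 = \sqrt{m}$, so
\begin{align*}
\frac{\langle Y'_i, Y'_j\rangle}{\|Y'_i\|_2 \|Y'_j\|_2} \;=\; \frac{1}{m}\sum_{k=1}^m \sign(Y_{i,k})\sign(Y_{j,k}).
\end{align*}
The columns of $G$ are independent, so the summands are i.i.d.\ and bounded in $[-1, 1]$. Hoeffding's inequality then yields, for each fixed pair $(i,j)$,
\begin{align*}
\mathbb{P}\!\left[\left| \frac{\langle Y'_i, Y'_j\rangle}{\|Y'_i\|_2 \|Y'_j\|_2} - \frac{2}{\pi}\arcsin\!\left(\frac{\langle X_i, X_j\rangle}{\|X_i\|_2 \|X_j\|_2}\right)\right| > \gamma\right] \;\leq\; 2\exp(-m\gamma^2/2).
\end{align*}

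Finally, a union bound over the at most $n^2$ ordered pairs gives a failure probability of at most $2n^2 \exp(-m\gamma^2/2)$. Plugging in $m \geq 10 \log(n)/\gamma^2$ makes this $o(1)$, as claimed. The only non-routine ingredient is the arcsine identity in Step 1; once that is in place, independence of the Gaussian projections and a Hoeffding bound handle the rest without difficulty.
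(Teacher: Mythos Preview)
The paper does not prove this lemma; it simply cites it from \cite{Valiant15} and uses it as a black box. Your proof is correct and is exactly the standard argument (arcsine identity for sign-correlation of a bivariate Gaussian, followed by Hoeffding over the $m$ independent columns and a union bound over pairs), which is also what Valiant's original proof does.
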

In particular, if the original correlation is less than  $\tau$ in the absolute value,
then the corresponding correlation in $Y'$ in the absolute value is at most $(2/ \pi) \arcsin(\tau) + \gamma \leq (4/ \pi) \tau + \gamma \leq 2 \tau + \gamma$, where we use $|\arcsin(x)| \leq |2x|$.
Similarly, if the original correlation is at least $\rho$ in absolute value, then the new correlation  is at least $(2/ \pi) \arcsin(\rho) - \gamma \geq (2 / \pi) \rho - \gamma \geq \rho/2 - \gamma  $.
Choosing $\gamma = \tau$, the new matrix $Y'$ satisfies the guarantees of \Cref{thm:binary-robust-correlation-detection} with $n' = d, d' = m = 10 \log(n)/\tau^2, \tau' = 3 \tau , \rho' = \rho/4$ (using $ \tau \leq \rho/4$).
The time taken to compute $Y$ and $Y'$ is at most $n dm$.
Thus, the total runtime is at most
\begin{align*}
&\left( nd \frac{\log n}{\gamma^2} + s \frac{\log n}{\tau^2} d^{0.62} + d^{{1.62} + 2.4 \frac{\log(4/\rho)}{\log(1/ 3 \tau)}} \poly\left( \log n, 1/ \tau \right)  \right) \\
&\qquad\leq \left(  s d^{0.62} + d^{{1.62} + 2.4 \frac{\log(4/\rho)}{\log(1/ 3 \tau)}}  \right) \cdot \poly\left( n (\log d)/\gamma^2 \right)\,.
\end{align*}
The probability of success can be boosted using repetition, if needed.

\end{proof}

\section{Details Deferred from \Cref{app:robust-sparse-pca}}

In this section, we give the proofs of \Cref{lem:sparse-pca-filter,claim:preprocessing-pca,lem:inherit-stability-pca}.
\subsection{Proof of \Cref{lem:sparse-pca-filter} }
\label{sec:proof-of-lem-sparse-pca-filter}

\lemSparsePCAFilter*
\begin{proof}
We use the same ideas from the proof of  \Cref{lem:sparse-filtering-lemma} and similarly assume that $T$ and $S$ already correspond to the projected coordinates.
The challenge in applying the idea as is lie in the uncertainty about $v$.
We thus use \Cref{lem:dense-pca-filter} to first estimate $v$ using the returned vector $u$ which satisfies that $\|\spike uu^\top - \spike vv^\top \|_\fr \lesssim \gamma$.\footnote{Observe that $v$ here corresponds to $(v)_H$ because of the projection to $H$. Hence $v$ is no longer a unit vector.}
We give the algorithm below. 
\begin{algorithm}
\caption{PCA Filter}
\begin{algorithmic}[1]
\State $T \gets \{(x)_H: x \in T\}$ \Comment{Projection onto $H$} 
\State $u \gets$ be output of dense covariance estimation algorithm \Cref{lem:dense-pca-filter}
\State Let $\bA$ be the matrix $(\bSigma_{T} - \bI - \spike uu^\top 	)/\|\bSigma_{T_1} - \bI - \spike uu^\top\|_{\fr}$ 
\State Define $g(x):= (x - \mu_{T})^\top \bA (x - \mu_{T}) - \trace(\bA)$
\State Define $f(x)$ to be $g(x)$ if $g(x) \geq \Omega(\gamma/\epsilon)$ otherwise $0$
\State Return $f$
\end{algorithmic}
\end{algorithm}

Let  $\lambda = \left\|(\bSigma_T - \bI - \spike vv^\top)_H\right\|_\fr$ and define $\bA = (\bSigma_{T} - \bI - \spike uu^\top)/\|\bSigma_{T_1} - \bI- \spike uu^\top\|_{\fr}$.
Define the function $g(x):= x^\top \bA x - \langle \bI + \spike uu^\top, \bA\rangle$.
Computing the average of $g(x)$ over the $T$, we obtain 
\begin{align*}
\E_T[g(x)] &= \E_{T}\left[ \left\langle xx^\top - \bI - \spike uu^\top , \bA \right \rangle  \right] = \langle \overline{\bSigma}_T - \bI - \spike uu^\top , \bA \rangle = 
\| \overline{\bSigma}_T - \bI - \spike uu^\top  \|_\fr\\
&\geq \| \overline{\bSigma}_T - \bI - \spike vv^\top \|_\fr - \|\spike uu^\top - \spike vv^\top \|_\fr = \lambda - O(\gamma) \geq 3 \lambda/4 
\numberthis\label{eq:sum-over-T-of-g-pca},
\end{align*}
where we use that $\lambda \gtrsim \gamma$.
Using the stability of $S$ (observe that on the projected set, the $\|\cdot\|_{\fr,k^2}$ norm becomes the usual Frobenius norm), for any large subset of $S'\subset S$ with $|S'| \geq (1 -\epsilon)|S|$,  the closeness between $u$ and $v$ implies 
the following:
\begin{align*}
\left|\E_{S'}[g(x)]\right| &= \left|\E_{S'}\left[\left\langle xx^\top - \bI - \spike uu^\top, \bA\right\rangle\right]\right| =  \left|\left\langle \overline{\bSigma}_{S'} - \bI - \spike uu^\top , \bA \right\rangle\right| \\
&= \left|\left\langle \overline{\bSigma}_{S'} - \bI - \spike vv^\top, \bA \right\rangle + \left\langle \spike vv^\top - \spike  uu^\top , \bA \right\rangle\right| \leq \left\|\overline{\bSigma}_{S'} - \bI - \spike vv^\top\right\|_\fr + \spike\left\|  vv^\top -   uu^\top \right\|_\fr \\
&\leq \gamma + O(\gamma)  \leq C \gamma
\numberthis\label{eq:sum-of-g-over-S'-pca}
\end{align*}
for a large enough absolute constant $C$.
Now define $f(x) := g(x) \mathbf{1}_{g(x) \geq 3C \gamma/\eps}$, i.e.,
 $f(x)$ is equal to $g(x)$ if $g(x) \geq 3C \gamma/\eps$ and $0$ otherwise. 
 By \Cref{cl:scores-over-inliers}, 
\begin{align}
\sum_{x \in S \cap T} f(x) \leq \sum_{x \in S} f(x) \leq 3 C \gamma|S| \leq |T| \lambda/4\,,
\label{eq:sum-over-inliers-pca}
\end{align}
where we use that $ \lambda \gtrsim C\gamma$.

On the other hand, $\E_{x \in T\setminus S}[f(x)]$ must be large argued as argued below. Applying \Cref{eq:sum-of-g-over-S'-pca} to $T \cap S$, which is of size at least $(1 - \epsilon)|T| = (1 - \epsilon)|S|$, we obtain
\begin{align*}
\sum_{x \in T\setminus S}f(x) &\geq  \sum_{x \in  T \setminus S}  (g(x) - 3C \gamma/\epsilon) =\left(\left(\sum_{x \in T} g(x) \right) -  \left(\sum_{x \in T \cap S} g(x)\right) \right) -  \left(|T \setminus S| 3C \gamma/\epsilon \right)\\
&\geq \left(  3\lambda |T|/4\right) - \left((|T \cap S|) (C\gamma) \right)  - \epsilon |T| \left(3C\gamma/\epsilon \right) \\
&\geq \left(  3\lambda |T|/4\right) - \left(4|T|C \gamma \right) \\
&\geq \lambda|T|/2\,,
\numberthis\label{eq:sum-scores-T-pca}
\end{align*}
where we use \Cref{eq:sum-over-T-of-g-pca} and $\lambda \gtrsim \gamma$.
The desired conclusions follow from \Cref{eq:sum-over-inliers-pca} and \Cref{eq:sum-scores-T-pca}.
\end{proof}

\subsection{Proofs of \Cref{claim:preprocessing-pca,lem:inherit-stability-pca}}
\label{app:proof-of-claim-preprocessing-pca}

\ClaimPreprocessingPCA*    
\begin{proof}

We will filter the set using \Cref{lem:sparse-pca-filter} following the template of \Cref{alg:randomized_filtering} until the second and the third conditions are met.

Starting with the second condition, the lower bound on the diagonal entries follows by the fact that $T$ contains an $\eps$-fraction of $S\cap T$ and diagonal entries of $\overline{\bSigma}_{S \cap T} $ are at least $1 - \gamma$.
Since $\eps$ and $\gamma$ are small enough, $(1-\eps)(1 - \gamma)\geq 1/2$.
We now focus on establishing the upper bound, for each coordinate $i \in [d]$,
the true variance is at most $(1 + \spike\|v\|_\infty^2) \leq 2$, where $\|\cdot\|_\infty$ denotes the $\ell_\infty$ norm.
Thus, if for any coordinate $i \in [d]$, the empirical variance of $T$ is larger than $3$,
which is bigger than $1 + \eta + \lambda$,
$H := \{i\}$ satisfies the condition of \Cref{lem:sparse-pca-filter}.
Thus, we can filter points until all the empirical variances are less than $3$ following the template of \Cref{thm:generic-randomized-alg}.

We now turn our attention to the third condition. For a vector $x\in \R^d$, let the function $g: \R^d \to \R^d$ be the coordinate-wise square of its input.
Thus, for $X \sim \cN(0, \bI + \spike vv^\top)$, we have that  $ \mu:=\E[g(X)] = u + \spike g(v)$, where $u\in\R^d$ denotes the all ones vector.
Therefore, the required condition in \Cref{cond:idealistic-condition-pca} can also be written equivalently as
$\| \left( \mu_{f(T)}- u\right)_{H_1^\complement} \|_{2,k^2} = O(\gamma)$.
Let $f(T)$ and $f(S)$ denote the sets transformed by $f$, i.e.,
  $f(T) := \{f(x): x \in T\}$ and  $f(S) := \{f(x): x \in S\}$. 
We first compute $\mu_{f(T)} - u$, which takes $O(d|T|)$ time.
Let $J \subset [d]$ denote the coordinates $i$ for which the $i$-th coordinate of  $\mu_{f(T)} - u$ is bigger than $\gamma/k$ in absolute value.
Let $J_*$ denote the support of the sparse spike vector $v$.

Consider the case when $J$ is a large enough set: $|J| > k'$. 
Then let $H \subset J$ be any subset of size $k'$.
Then $|H \setminus J_*| > |H| - |J_*| = k' - k \geq k'/2$, And thus we have that 
\begin{align*}
    \|(\bSigma_T - \bI_d - \spike vv^\top)_{H}\|_{\fr,{k'}^2} &\geq
    \|(\bSigma_T - \bI_d - \spike vv^\top)_{H\setminus J_*}\|_{\fr,{k'}^2} \geq \\
 &\geq   \| (\bSigma_T - \bI_d )_{H\setminus J_*}\|_{\fr,{k'}^2} 
 \\
& \geq    \| \diag\left((\bSigma_T - \bI_d )_{H\setminus J_*}\right)\|_{\fr,{k'}^2} \\
&\geq (\gamma/k)\sqrt{(k'/2)} \gtrsim \gamma\,,
\end{align*}
since $k'\gtrsim k^2$.
Thus, we have obtained a set $H$ that satisfies the guarantees of \Cref{lem:sparse-pca-filter}, which allows us to filter using the template of \Cref{alg:randomized_filtering}.

If, on the other hand, $J$ happens to be small, then we return $H_1 = J$,
and since $\mu_{f(T)}$ is $(\gamma/k)$-close to $u$ (that is, all-ones vector) on $H_1^\complement$,
we obtain
\begin{align*}
    \|\diag(\bSigma_T - \bI_d )_{H_1^\complement}\|_{\fr,k^2}  = 
    \| \left( \mu_{f(T)}- u\right)_{H_1^\complement} \|_{2,k^2} \leq k \gamma/k \leq \gamma. \qquad\qquad\qquad\qquad \qedhere
\end{align*}
\end{proof}

\ClInheritStabilityPCa*
\begin{proof}
Our proof strategy will be similar to that of \Cref{lem:inherit-stability-mean}, and we refer the reader to the proof of \Cref{lem:inherit-stability-mean} for more details.
Starting with the lower bound on the sparse eigenvalues of $\overline{\bSigma}_{T''}$, we note that for any $k'$-sparse unit vector $u$, the stability condition applied to $S \cap T''$ implies
    \begin{align*}
 u^\top \overline{\bSigma}_{T''} u &\geq \frac{|S\cap T''|}{|T''|} u^\top \bSigma_{S \cap T''} u \geq (1 - O(\eps)) (u^\top (I + \spike vv^\top) u - \gamma)  \geq (1 - O(\eps))(1 - O(\gamma)) \geq 1 - O(\gamma)
    \end{align*}
since $\eps \leq \gamma$.
Applying this inequality for $1$-sparse unit vectors $u$, we obtain that  $-u^\top \diag(\overline{\bSigma}_T - \bI) u \leq C \gamma $ for a large constant $C > 0$.
Turning towards the upper bound, we proceed as follows: for any $1$-sparse unit vector $u$ supported on $H_1^\complement$,     
\begin{align*}
 u^\top \left(\diag\left(\bSigma_{T''} \right)\right)_{H_1^\complement}u &= u^\top \bSigma_{T''} u =          \frac{1}{|T''|} \sum_{x\in T''} (u^\top x)^2 \leq  (1 + O(\eps)) u^\top \bSigma_{T'} u  \leq (1 + O(\eps))(1 + O(\gamma)) = 1 + O(\gamma)\,.
    \end{align*}
Overall, we obtain the desired guarantee of $\left\|\diag\left(\left(\bSigma_{T''} - \bI\right)_{H_1^\complement}\right)\right\|_{\op,k} = O(\gamma)$.

\end{proof}

\end{document}